\documentclass[journal]{IEEEtran}

\usepackage{amsmath,amssymb,amsfonts}
\usepackage{mathtools}
\usepackage{amsthm}
\usepackage{xcolor}
\usepackage{cite}
\usepackage{graphicx}
\usepackage{booktabs}
\usepackage{url}
\usepackage{algorithm}
\usepackage{float}

\newtheorem{lemma}{Lemma}
\newtheorem{theorem}{Theorem}
\newtheorem{corollary}{Corollary}
\newtheorem{proposition}{Proposition}
\newtheorem{remark}{Remark}

\newcommand{\E}{\mathbb E}

\newcommand{\tr}{\operatorname{tr}}
\newcommand{\vecop}{\operatorname{vec}}

\newcommand{\diag}{\operatorname{diag}}

\newcommand{\tfracorall}{\forall}

\newcommand{\nw}[1]{#1}
\newenvironment{nwblock}{}{}

\begin{document}

\title{{ Data-Driven LQR with Uncertainty Refinement and
Domain Randomization---Part~II: Fisher-Majorized Certified Policy Iteration}}

\author{Hamidreza Modares %
\thanks{H. Modares is with Michigan State University,
East Lansing, MI 48824 USA.}}

\maketitle

\begin{abstract}
\begin{nwblock}
Building on the uncertainty description developed in Part~I, this
paper introduces a Fisher-majorized domain-randomized generalized policy
iteration framework for data-driven linear quadratic regulator (LQR) synthesis. At each iteration, partial
Bellman evaluation over a fixed batch of uncertainty realizations produces a
structured Sylvester policy-improvement step. A residual-weighted Fisher curvature controls the policy-induced variation in the stationary covariance and converts the Sylvester model into a certified upper bound on the cost change, thereby guaranteeing descent for every accepted step. A check-and-correct mechanism then certifies stability over the
entire uncertainty support, rather than only over the sampled realizations. Consequently, every iterate is robustly stabilizing and monotonically decreases the fixed-batch objective. The Fisher truncation and stability correction 
become inactive after finitely many times. The objective gap then contracts linearly to a floor determined by the dispersion of the realization-wise Riccati targets and by
measurement noise. In the absence of measurement noise, this floor vanishes
when the realization-wise targets coincide.
\end{nwblock}
\end{abstract}

\begin{IEEEkeywords}
Data-driven control, domain randomization, Fisher information,
 robust stability.
\end{IEEEkeywords}

\section{Introduction}
\label{sec:introduction}
\begin{nwblock}
\IEEEPARstart{T}{his} paper uses the uncertainty description of Part~I~\cite{PartI} to synthesize controllers for the linear quadratic regulator (LQR) under uncertainties. Robust LQR
typically optimizes the worst admissible model and often uses a common
performance--stability certificate
\cite{Robust1,Ma2025RobustLQR,PolyakTempo2001,DUST,8933093,Robust2,LAI2023110685,Floriancov1,DEPERSIS2021109548,UncertaintyLearning4},
which can be conservative or infeasible for large uncertainty sets. Domain
randomization (DR) instead optimizes expected performance
\cite{Fujinami2025DRLQR,Fujinami2025PolicyGradient,nguyenle2026globally,Pasdar2026PI} to reduce conservatism on perforamnce,
but sample-wise stability does not certify the unsampled support.
\end{nwblock}

This paper introduces a Fisher-majorized domain-randomized generalized policy iteration framework for data-driven LQR synthesis. At each iteration, partial Bellman evaluation over a fixed batch of uncertainty realizations yields a Sylvester policy-improvement step. A residual-weighted Fisher curvature is introduced to policy-induced variations in the stationary covariance. A check-and-correct mechanism then enforces stability over the entire uncertainty support, rather than only over the sampled realizations: steps that leave the certified controller polytope are corrected within it, while Fisher-truncated steps may be extended along the same stability-certified ray through a safeguarded one-dimensional search. Consequently, every iterate robustly stabilizes the full support and monotonically decreases the fixed-batch objective. The line search is invoked only finitely many times. Under exact evaluation and the stated interiority condition, both Fisher truncation and stability correction eventually become inactive. The objective gap then contracts linearly to a floor determined by the dispersion of the realization-wise Riccati improvement directions and measurement noise. In the absence of measurement noise, this floor vanishes when the realization-wise directions coincide.

Compared with robust LQR, FM-DR-GPI reduces conservatism by optimizing expected performance using cell-dependent stability certificates rather than a common worst-case certificate. Unlike existing DR-LQR methods, it guarantees full-support stability independently of the sampled realizations and certifies descent through Fisher-based majorization. Moreover, constrained stationarity is established without gradient dominance or small-heterogeneity assumptions, even with nonzero measurement noise.

\vspace{3pt}
\noindent \textbf{Notations:}
\nw{We follow Part~I~\cite{PartI}. For a set $\mathcal S$,
$\operatorname{co}(\mathcal S)$, $\operatorname{vert}(\mathcal S)$, and
$\operatorname{int}(\mathcal S)$ are its convex hull, vertices, and interior;
$\mathsf N_{\mathcal C}(k)$ is the normal cone. The inverse column-stacking
maps are $\vecop$ and $\operatorname{mat}$, $[m]:=\{1,\ldots,m\}$,
$\rho(\cdot)$ is spectral radius, and $\subsetneq$ is strict inclusion.}
\vspace{-6pt}

\section{Problem Formulation}
\subsection{Part~I Model and Certified Uncertainty}
\label{sec:recap}
Consider the linear system of Paper I \cite{PartI} as
\begin{equation}
x_{k+1}=A^\star x_k+B^\star u_k+\omega_k,\qquad y_k=x_k+\upsilon_k,
\label{eq:syst}
\end{equation}


Let $\bar\Phi=\operatorname{col}(\bar Y_0,\bar U_0)$ and $\bar Y_0$  and $\bar U_0$ be the collected normalized output and input data, respectively. Let also  $S_x:=\operatorname{col}(I_n,0)$,
$S_u:=\operatorname{col}(0,I_m)$,
$T_x:=\bar\Phi^{-1}S_x$, and $T_u:=\bar\Phi^{-1}S_u$. Then, for $u_k=K x_k$, 
\nw{Part~I parametrizes each control gain $K$ one-to-one by} a decision variable $G_K$ as
\begin{equation}
G_K:=T_x+T_uK
=\bar\Phi^{-1}\operatorname{col}(I_n,K),
\label{Gk-K}
\end{equation}
\nw{It satisfies $\bar Y_0G_K=I_n$ and gives the closed-loop matrix}
\begin{equation}
A_K(\eta):=A(\eta)+B(\eta)K
=(\bar Y_1-N(\eta))G_K,
\label{eq:DR-closed-loop-GK}
\end{equation}
\nw{where $\bar Y_1$ shifts $\bar Y_0$ by one step.

Part~I also returns a certified constrained matrix zonotope that covers its refined model set,
\begin{align}
\Theta_r
&:=
\{\theta(\eta):\eta\in\mathcal P\},\nonumber\\
\mathcal P
&:=
\left\{
\eta\in\mathbb R^{q_\theta}:
\|\eta\|_\infty\le1,\;
A_\eta\eta=b_\eta
\right\},
\label{eq:Pj-explicit}
\end{align}

Here, $\Theta_r$ denotes the constrained-matrix-zonotope cover, which contains the exact refined model set of Part~I, so $\theta^\star\in\Theta_r$ holds on the confidence event of Part~I. Each
$\eta\in\mathcal P$ jointly determines 
\begin{align}
\theta(\eta)
&:=
C_\theta+\sum_{s=1}^{q_\theta}G_{\theta,s}\eta_s
=:[A(\eta)\ B(\eta)],\nonumber\\
N(\eta)
&:=
\bar Y_1-\theta(\eta)\bar\Phi
=
C_N+\sum_{s=1}^{q_\theta}G_{N,s}\eta_s,
\label{thetaN}
\end{align}
where
\(
C_N:=\bar Y_1-C_\theta\bar\Phi,
\)
and
\(
G_{N,s}:=-G_{\theta,s}\bar\Phi.
\)

For $\omega_k\sim\mathcal N(0,W)$ and $\upsilon_k\sim\mathcal N(0,V)$, Part~I also provides certified bounds
}
\begin{equation}
0\preceq W\preceq\bar W,\quad \bar W\succ0,\qquad
0\preceq V\preceq\bar V.
\label{eq:WV-certificates}
\end{equation}
Here, $\bar W$ and $\bar V$ are any valid certificates from Part~I. All results below hold for any such choice.

 \vspace{-10pt}

\subsection{{Uncertainty-Partitioned  Domain-Randomized LQR}}
\label{sec:formulation}
\label{sec:LQRform}
\nw{For $Q\succeq0$, $R\succ0$, $u_k=Ky_k$, and
$\theta=[\,A\;B\,]$, define}
\begin{align} \label{eq:lqrcost}
J(K,\theta)
=
\lim_{\tau\to\infty}
\frac{1}{\tau}
\sum_{k=0}^{\tau}
\mathbb E\!\left[
y_k^\top Qy_k+u_k^\top Ru_k
\right].
\end{align}
\nw{under the standard stabilizability of $(A,B)$ and detectability of
$(A,Q^{1/2})$. $K$ acts on the noisy full-state measurement.} \vspace{3pt}


\noindent\textbf{Problem 1:}
Given the uncertainty description \eqref{eq:Pj-explicit}, partition its
coefficient domain into $\{\mathcal P_\ell\}_{\ell=1}^{L}$ and find a single feedback gain $K$ that solves
\begin{align}
\label{DR-problem}
& \min_{K}\quad
J_{\rm DR}(K)
:=
\mathbb E_{\eta\sim \mathbb P_{\rm des}}
\!\left[J(K,\theta(\eta))\right]
\nonumber\\
\text{s.t.}\quad
&
A_K(\eta)\mathcal X_\ell
\subseteq
\lambda_\ell\mathcal X_\ell,
\quad
\forall\,\eta\in\mathcal P_\ell,
\quad \ell=1,\ldots,L,
\end{align}
\nw{where $\mathbb P_{\rm des}$ weights performance and the regional
contractions certify the complete uncertainty support.}
\hfill $\square$

\nw{We optimize a sample-average approximation (SAA) and then transfer its
guarantees to Problem~1.} \vspace{-3pt}

\section{Robust-Stability Certification}
\label{subsec:DR-partition}
Polyhedral sets provide a natural representation for partitioning the learned uncertainty support. Therefore, the coefficient domain in \eqref{eq:Pj-explicit} is covered by closed polytopic
cells \vspace{-6pt}
\begin{equation*}
\mathcal P=\bigcup_{\ell=1}^{L}\mathcal P_\ell,
\qquad
\mathcal P_\ell:=\{\eta:F_\ell\eta\le g_\ell\},
\end{equation*}
with pairwise disjoint relative interiors.  Let $\operatorname{vert}(\mathcal P_\ell)=\{\eta_{\ell v}\}_{v=1}^{N_\ell^\theta}$
and define $N_{\ell v}:=N(\eta_{\ell v})$ as the reconstructed
noise matrix at vertex $\eta_{\ell v}$ of the uncertainty
cell $\mathcal P_\ell$.  Since $N(\eta)$ and
$\theta(\eta)$ in \eqref{thetaN} are affine, 
\begin{equation*}
\mathcal N_\ell=N(\mathcal P_\ell)
=\operatorname{co}\{N_{\ell v}\}_{v=1}^{N_\ell^\theta},
\qquad
\Theta_\ell:=\theta(\mathcal P_\ell).
\end{equation*}

Besides, polyhedral Lyapunov sets can reduce the geometric conservatism of quadratic certificates~\cite{sojoudani}. As shown next, however, a common certificate can be highly conservative. 
 \vspace{3pt} 

\noindent\textbf{Motivating example:}
Consider the uncertain system
\(
x_{k+1}=A(\eta)x_k+u_k,
\,
\eta\in[0,1],
\)
where
\[
A(\eta)=
\begin{bmatrix}
0.4 & 1-\eta\\
\eta & 0.4
\end{bmatrix}.
\]
Let $K=0$. Then, the closed-loop eigenvalues are
\(
0.4\pm\sqrt{\eta(1-\eta)}.
\)
Since
\(
0\leq\sqrt{\eta(1-\eta)}\leq 0.5,
\)
every fixed realization is Schur stable.
A single contraction certificate, however, cannot certify this gain over the
whole interval. Indeed, at the endpoints,
\(
\rho\!\left(A(0)A(1)\right)
\approx1.30>1.
\)
If there existed a common contraction set $\mathcal X$ and some
$\lambda<1$ such that
\(
A(\eta)\mathcal X
\subseteq
\lambda\mathcal X,
\,
\forall\eta\in[0,1],
\)
then, in particular,
\(
A(0)A(1)\mathcal X
\subseteq
\lambda^2\mathcal X,
\)
 implying
$\rho(A(0)A(1))\leq\lambda^2<1$, which is a contradiction.
Now, consider
\(
[0,1]
=
[0,0.35]\cup[0.35,0.5]
\cup[0.5,0.65]\cup[0.65,1].
\)
For the four cells, respectively, define the polyhedral certificates
\(
\mathcal X_j
=
\{x:\|D_jx\|_{\infty}\leq1\},
\)
where
\(
D_1=\operatorname{diag}(0.595,1),
D_2=\operatorname{diag}(0.875,1),
D_3=\operatorname{diag}(1.14,1),
D_4=\operatorname{diag}(1,0.595).
\)
Since $D_j=\operatorname{diag}(a_j,b_j)$, direct calculation gives
\(
\|D_jA(\eta)D_j^{-1}\|_\infty
=
\max\left\{
0.4+\frac{a_j}{b_j}(1-\eta),
\ 0.4+\frac{b_j}{a_j}\eta
\right\}
\leq 0.995<1
\)
on each corresponding cell, where the maximum is attained at a
cell endpoint. Thus, $V_j(x)=\|D_jx\|_\infty$ satisfies
\(
V_j(A(\eta)x)\leq 0.995V_j(x),
\)
or equivalently,
\(
A(\eta)\mathcal X_j\subseteq 0.995\,\mathcal X_j
\)
for every $\eta$ in the cell. 
 
 \vspace{6pt}

We precompute finitely many candidate certificate families, e.g., by
counterexample-guided synthesis \cite{sojoudani}. Family \(j\) assigns to each
uncertainty cell \(\mathcal P_\ell\) a polyhedral Lyapunov set
\(\mathcal X_\ell^{(j)}\) and a contraction factor
\(\lambda_\ell^{(j)}\in(0,1)\). That is,
\begin{equation} \label{hset}
\begin{aligned}
\mathcal H^{(j)}
:=
\{\mathcal X_\ell^{(j)},\lambda_\ell^{(j)}\}_{\ell=1}^{L}, \quad 0<\lambda_\ell^{(j)}<1 \\
\mathcal X_\ell^{(j)}
:=
\{x:\mathsf H_\ell^{(j)}x\le\mathbf1\},
\, \, \mathsf H_\ell^{(j)\top}
=
\big[
h_{\ell 1}^{(j)} & \cdots & h_{\ell m_{\ell,j}}^{(j)}
\big].
\end{aligned}
\end{equation}

 Given polyhedral certificates and polyhedral partitioned uncertainties, the following lemma gives a characterization of the stability constraints in \eqref{DR-problem}.

\begin{lemma}
\label{thm:DR-robust-stability}
For the set $\mathcal X_\ell^{(j)}$ in \eqref{hset} with $0\in\operatorname{int}\mathcal X_\ell^{(j)}$, write
\(
\operatorname{vert}(\mathcal X_\ell^{(j)})
=
\{\xi_{\ell q}^{(j)}\}_{q=1}^{N_{\ell,j}^{x}}.
\)  Given \eqref{Gk-K}-\eqref{eq:DR-closed-loop-GK}, define 
\begin{equation*}
\mathcal U_{\ell v}
:=\left[\,M_{\ell v}T_x\quad M_{\ell v}T_u\,\right], \quad M_{\ell v}:=\bar Y_1-N_{\ell v},
\end{equation*}
where $N_{\ell v}:=N(\eta_{\ell v})$. Define its extreme-image index set as
\begin{equation}
\widehat{\mathcal V}_\ell
:=
\left\{
v\in[N_\ell^\theta]:
\mathcal U_{\ell v}\in
\operatorname{vert}\!\left(
\operatorname{co}\{
\mathcal U_{\ell s}:s\in[N_\ell^\theta]
\}
\right)
\right\}.
\label{eq:DR-control-image-hull}
\end{equation}
{For each retained tuple $i=(\ell,v,r,q)$, define}
\begin{align*}
g_i^{(j)}&:=\xi_{\ell q}^{(j)}\otimes
\left(T_u^\top M_{\ell v}^\top h_{\ell r}^{(j)}\right),\\
b_i^{(j)}&:=\lambda_\ell^{(j)}-
 h_{\ell r}^{(j)\top}M_{\ell v}T_x\xi_{\ell q}^{(j)},
\end{align*}
and let
\(
\mathcal I_j^{\rm raw}
:=\big\{(\ell,v,r,q):\;\ell\in[L],\
 v\in\widehat{\mathcal V}_\ell,
 r\in[m_{\ell,j}],\ q\in[N_{\ell,j}^{x}]\big\}.
\)
Then, the set of gains satisfying the constraints of \eqref{DR-problem} for the $\ell$-th cell and for the certificate $\mathcal X_\ell^{(j)}$ are exactly
\begin{equation}
\mathcal C_j^{\rm raw}
:=\left\{k:g_i^{(j)\top}k\le b_i^{(j)},\,
 i\in\mathcal I_j^{\rm raw}\right\},
\, k=\vecop(K).
\label{eq:DR-raw-controller-polyhedron}
\end{equation}
Every $k\in\mathcal C_j^{\rm raw}$ robustly stabilizes
the closed-loop system.
\end{lemma}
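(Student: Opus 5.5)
We reduce the semi-infinite set inclusion in \eqref{DR-problem} to finitely many linear inequalities in $k=\vecop(K)$, using three successive convexity reductions: over the polytope $\mathcal X_\ell^{(j)}$, over the uncertainty cell $\mathcal P_\ell$, and over the redundant vertex images.

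\textbf{Step 1: reduce to vertices of the certificate.} By \eqref{eq:DR-closed-loop-GK} and \eqref{Gk-K}, for $\eta\in\mathcal P_\ell$ we have
\[
A_K(\eta)=(\bar Y_1-N(\eta))(T_x+T_uK)=:M(\eta)(T_x+T_uK).
\]
Since $\mathcal X_\ell^{(j)}$ is a bounded polytope containing $0$ in its interior, it is the convex hull of its vertices $\xi_{\ell q}^{(j)}$. Because $\lambda_\ell^{(j)}\mathcal X_\ell^{(j)}$ is convex and $A_K(\eta)$ is linear, the inclusion $A_K(\eta)\mathcal X_\ell^{(j)}\subseteq\lambda_\ell^{(j)}\mathcal X_\ell^{(j)}$ holds if and only if every vertex image lies in the scaled set. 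In terms of the facet description $\mathsf H_\ell^{(j)}x\le\mathbf 1$, this is the condition
\[
h_{\ell r}^{(j)\top}A_K(\eta)\xi_{\ell q}^{(j)}\le\lambda_\ell^{(j)}\quad\text{for all } r,q .
\]

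\textbf{Step 2: reduce to vertices of the cell and linearize in $k$.} For fixed $K$, $r$, and $q$, the left-hand side is affine in $\eta$, because $N(\eta)$ is affine by \eqref{thetaN}. An affine function is maximized over the polytope $\mathcal P_\ell$ at one of its vertices $\eta_{\ell v}$, so it suffices to impose the condition with $M(\eta)$ replaced by $M_{\ell v}$ for $v\in[N_\ell^\theta]$. Expanding the term that involves $K$,
\[
h^\top M_{\ell v}T_uK\xi=\tr(K\xi h^\top M_{\ell v}T_u)=\bigl(\xi\otimes T_u^\top M_{\ell v}^\top h\bigr)^\top\vecop(K),
\]
by the identity $\vecop(ABC)=(C^\top\otimes A)\vecop(B)$. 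Moving the constant term $h^\top M_{\ell v}T_x\xi$ to the right-hand side gives exactly $g_i^{(j)\top}k\le b_i^{(j)}$.

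\textbf{Step 3: prune to extreme images.} We must show that restricting to $v\in\widehat{\mathcal V}_\ell$ loses nothing. The constraint for tuple $(\ell,v,r,q)$ depends on $v$ only through $\mathcal U_{\ell v}$, and it is affine in $\mathcal U_{\ell v}$: it reads $h^\top[\mathcal U_{\ell v}]\operatorname{col}(\xi,K\xi)\le\lambda$. Any $\mathcal U_{\ell s}$ is a convex combination of the extreme images $\mathcal U_{\ell v}$ with $v\in\widehat{\mathcal V}_\ell$, since the vertices of $\operatorname{co}\{\mathcal U_{\ell s}\}$ are among the generators. Hence the inequality for $s$ is a convex combination of the inequalities for the retained indices, and is implied by them. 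Combining Steps 1--3 over all $\ell$ yields the exact equality of the feasible set with $\mathcal C_j^{\rm raw}$.

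\textbf{Step 4: robust stability.} Since $\mathcal X_\ell^{(j)}$ is compact with $0$ in its interior, its gauge $V_\ell(x)=\max_r h_{\ell r}^{(j)\top}x$ is a norm-like function, i.e.\ positive definite and positively homogeneous. Take $k\in\mathcal C_j^{\rm raw}$ and $\eta\in\mathcal P$, and choose a cell $\ell$ with $\eta\in\mathcal P_\ell$. The inclusion gives $V_\ell(A_K(\eta)x)\le\lambda_\ell^{(j)}V_\ell(x)$, so the induced norm satisfies $\|A_K(\eta)\|_{V_\ell}\le\lambda_\ell^{(j)}<1$ and therefore $\rho(A_K(\eta))<1$. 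Because $\theta^\star\in\Theta_r$, this covers the true system as well.

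The step I expect to need the most care is Step 3. There the subtle point is that the constraint is affine jointly in $\mathcal U$ for every fixed $K$, and it is the extreme points of the $\mathcal U$-hull, rather than of the $\eta$-cell, that suffice. Step 1 also implicitly requires boundedness of $\mathcal X_\ell^{(j)}$, which I would state as an assumption implicit in the vertex enumeration.
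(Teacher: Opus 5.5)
Your proposal is correct and follows essentially the same route as the paper's proof. Both reduce the inclusion to facet--vertex inequalities over the certificate, pass to the vertex images $M_{\ell v}$ by affinity in the uncertainty, discard non-extreme images $\mathcal U_{\ell v}$ as convex combinations of retained ones, and vectorize via $a^\top Kb=(b\otimes a)^\top\vecop(K)$. The differences are cosmetic: the paper prunes first and works directly with $\operatorname{co}\{M_{\ell v}\}$, and it cites Blanchini for Schur stability where you give the explicit gauge-contraction argument.
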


\begin{proof}
  See Appendix~\ref{app:DR-robust-stability}. 
\end{proof}

Fix a compact polyhedral controller domain
\(
\mathcal K_0:=\{k:F_0k\le g_0\}\subset\mathbb R^{d_K},
\, d_K:=mn,
\)
containing every controller permitted by the algorithm.  It may be fixed from
known gain or actuator limits.

\begin{proposition}
\label{prop:DR-normal-fan-reduction}
For fixed
\(
\ell\in[L]
\),
\(
v\in\widehat{\mathcal V}_\ell
\),
and
\(
r\in[m_{\ell,j}]
\),
define
\(
c_{\ell vr}^{(j)}(K)
:=G_K^\top M_{\ell v}^\top h_{\ell r}^{(j)},\,
\mathcal D_{\ell vr}^{(j)}
:=\{c_{\ell vr}^{(j)}(K):\vecop(K)\in\mathcal K_0\}.
\)
Let $\mathsf N_{\mathcal X_\ell^{(j)}}(\xi_{\ell q}^{(j)})$ denote the normal cone of $\mathcal X_\ell^{(j)}$ at $\xi_{\ell q}^{(j)}$, and define
\begin{align} \label{newQ}
\widehat{\mathcal Q}_{\ell vr}^{(j)}
:=\left\{q\in[N_{\ell,j}^{x}]:
\mathsf N_{\mathcal X_\ell^{(j)}}(\xi_{\ell q}^{(j)})
\cap\mathcal D_{\ell vr}^{(j)}\neq\varnothing\right\}.
\end{align}
For every \(\vecop(K)\in\mathcal K_0\),
\begin{equation}
\max_{q\in[N_{\ell,j}^{x}]}
c_{\ell vr}^{(j)}(K)^\top\xi_{\ell q}^{(j)}
=
\max_{q\in\widehat{\mathcal Q}_{\ell vr}^{(j)}}
c_{\ell vr}^{(j)}(K)^\top\xi_{\ell q}^{(j)}.
\label{eq:DR-normal-fan-support-equality}
\end{equation}
Consequently, all controller rows indexed by
\(q\notin\widehat{\mathcal Q}_{\ell vr}^{(j)}\) may be omitted without
changing the certified controller set on \(\mathcal K_0\).
\end{proposition}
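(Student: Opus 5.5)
Fix \(\vecop(K)\in\mathcal K_0\) and write \(c:=c_{\ell vr}^{(j)}(K)\). By definition, \(c\in\mathcal D_{\ell vr}^{(j)}\). The idea is to read the left-hand side of \eqref{eq:DR-normal-fan-support-equality} as the support function of the polytope \(\mathcal X_\ell^{(j)}\) evaluated at \(c\). A maximizing vertex must then lie in the normal fan cell that contains \(c\), and any such vertex belongs to \(\widehat{\mathcal Q}_{\ell vr}^{(j)}\).

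\textbf{Step 1: the maximum equals the support function.} Since \(0\in\operatorname{int}\mathcal X_\ell^{(j)}\) and \(\mathcal X_\ell^{(j)}\) is the polytope used in Lemma~\ref{thm:DR-robust-stability} (assumed bounded, so its vertex list \(\{\xi_{\ell q}^{(j)}\}\) is well defined and it is the convex hull of these vertices), linear maximization over \(\mathcal X_\ell^{(j)}\) is attained at a vertex. Hence
\[
\max_{q\in[N_{\ell,j}^{x}]} c^\top\xi_{\ell q}^{(j)}=\max_{x\in\mathcal X_\ell^{(j)}}c^\top x .
\]

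\textbf{Step 2: a maximizing vertex lies in \(\widehat{\mathcal Q}_{\ell vr}^{(j)}\).} Let \(q^\star\) attain the maximum. Then \(c^\top(x-\xi_{\ell q^\star}^{(j)})\le 0\) for all \(x\in\mathcal X_\ell^{(j)}\). This is exactly the definition of \(c\in\mathsf N_{\mathcal X_\ell^{(j)}}(\xi_{\ell q^\star}^{(j)})\). Combined with \(c\in\mathcal D_{\ell vr}^{(j)}\), the intersection in \eqref{newQ} is nonempty, so \(q^\star\in\widehat{\mathcal Q}_{\ell vr}^{(j)}\). The degenerate case \(c=0\) is covered as well: \(0\) lies in every normal cone, so every \(q\) qualifies, and both sides of \eqref{eq:DR-normal-fan-support-equality} equal zero.

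\textbf{Step 3: conclude the equality.} Since \(\widehat{\mathcal Q}_{\ell vr}^{(j)}\subseteq[N_{\ell,j}^{x}]\), the maximum on the right-hand side is at most the left-hand side. Step~2 shows the left-hand side's value is achieved by an index in \(\widehat{\mathcal Q}_{\ell vr}^{(j)}\), which gives the reverse inequality. This proves \eqref{eq:DR-normal-fan-support-equality}.

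\textbf{Step 4: consequence for the controller rows.} I would reconnect this with the rows of \eqref{eq:DR-raw-controller-polyhedron}. Using \(G_K=T_x+T_uK\) and \(\xi^\top\otimes a^\top\,\vecop(K)=a^\top K\xi\), the \(i=(\ell,v,r,q)\) constraint \(g_i^{(j)\top}k\le b_i^{(j)}\) reads
\[
h_{\ell r}^{(j)\top}M_{\ell v}G_K\xi_{\ell q}^{(j)}=c^\top\xi_{\ell q}^{(j)}\le\lambda_\ell^{(j)}.
\]
So for fixed \((\ell,v,r)\), the family of rows over \(q\) is equivalent to the single condition \(\max_q c^\top\xi_{\ell q}^{(j)}\le\lambda_\ell^{(j)}\). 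By \eqref{eq:DR-normal-fan-support-equality}, this condition is unchanged on \(\mathcal K_0\) when \(q\) is restricted to \(\widehat{\mathcal Q}_{\ell vr}^{(j)}\). Therefore \(\mathcal C_j^{\rm raw}\cap\mathcal K_0\) is unchanged when the omitted rows are dropped.

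The only step needing care is the vec/Kronecker bookkeeping in Step~4, in particular matching the transposition convention of \(g_i^{(j)}\) to \(c^\top\xi\). The geometric part is immediate once the optimality condition is read as normal-cone membership.
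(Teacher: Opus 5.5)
Your proof is correct and follows the same route as the paper's: you read the vertex maximum as the support function of $\mathcal X_\ell^{(j)}$ at $c_{\ell vr}^{(j)}(K)\in\mathcal D_{\ell vr}^{(j)}$, and you use that a vertex is a maximizer exactly when this vector lies in its normal cone. Your version is slightly tighter in two places. You claim only that some maximizing index lies in $\widehat{\mathcal Q}_{\ell vr}^{(j)}$ and get the other inequality from $\widehat{\mathcal Q}_{\ell vr}^{(j)}\subseteq[N_{\ell,j}^{x}]$, whereas the paper loosely says the maximizers are ``precisely'' the retained vertices. You also spell out the row-level consequence through $g_i^{(j)\top}k\le b_i^{(j)}\iff c_{\ell vr}^{(j)}(K)^\top\xi_{\ell q}^{(j)}\le\lambda_\ell^{(j)}$.
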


\begin{proof}
  See Appendix~\ref{app:DR-normal-fan-reduction}. 
\end{proof} \vspace{-3pt}

By Lemma~\ref{thm:DR-robust-stability} and
Proposition~\ref{prop:DR-normal-fan-reduction}, the only retained
uncertainty-image indices are $v\in\widehat{\mathcal V}_\ell$, and the only
relevant certificate vertices are
$q\in\widehat{\mathcal Q}_{\ell vr}^{(j)}$.  Thus, for bundle $j$, define the
pre-reduction constraint index set
$\mathcal I_j^{\rm pre}:=\{(\ell,v,r,q):\ell\in[L],\,v\in\widehat{\mathcal V}_\ell,\,r\in[m_{\ell,j}],\,q\in\widehat{\mathcal Q}_{\ell vr}^{(j)}\}$. We then remove redundant controller inequalities sequentially.  At a given
stage, let $\mathcal I\subseteq\mathcal I_j^{\rm pre}$ denote the current
retained index set and let $i\in\mathcal I$.  The implication LP obtained by
omitting row $i$ is
\begin{equation*}
\rho_i^{(j)}(\mathcal I)
:=
\sup_{k\in\mathcal K_0}
\left\{
g_i^{(j)\top}k:
g_s^{(j)\top}k\le b_s^{(j)},
\ s\in\mathcal I\setminus\{i\}
\right\}.
\end{equation*}
which gives the standard exact
redundancy test: sequentially remove row $i$ whenever
$\rho_i^{(j)}(\mathcal I)\le b_i^{(j)}$, and denote the remaining certificate
rows by $\mathcal I_j^{\rm red}$.  Each removal is logically implied by the
currently retained rows on $\mathcal K_0$, so the feasible set is unchanged.
We therefore define
\begin{equation}
\mathcal C_j:=\{k\in\mathcal K_0:
 g_i^{(j)\top}k\le b_i^{(j)},\ i\in\mathcal I_j^{\rm red}\}.
\label{Cfinal1}
\end{equation}
For subsequent parent-set computations, write equivalently
\begin{align} \label{Cfinal}
  \mathcal C_j=\{k:G_jk\le h_j\}  
\end{align}
and let $\mathcal I_j^+$ index all rows of this stacked representation.

\vspace{-9pt}


\subsection{Bounded-Complexity Inner Controller Sets}
\label{subsec:DR-inner-controller-sets}
{If the facet count of $\mathcal C_j$ in \eqref{Cfinal} is acceptable, we use
$\mathcal C_j$ directly. Otherwise, the next lemma provides the exact radial
reach for generating certified boundary points and a strictly interior anchor
for the initial inner polytope.} 

\begin{lemma}
\label{prop:DR-parent-geometry}
Let $\operatorname{int}\mathcal C_j\neq\varnothing$ and for $k\in\mathcal C_j$ in \eqref{Cfinal} and $d\neq0$, define
\begin{equation}
\bar\alpha_j(k,d)
:=
\max_{\alpha\ge0}
\left\{
\alpha:
G_j(k+\alpha d)\le h_j
\right\}.
\label{eq:DR-parent-radial-reach}
\end{equation}
Let $G_{j,i}$ and $h_{j,i}$ denote the $i$th row
of $G_j$ and the corresponding entry of $h_j$,
respectively, for $i\in\mathcal I_j^+$. Then, 
\begin{equation}
\bar\alpha_j(k,d)
=
\min_{\substack{
i\in\mathcal I_j^{+}:
G_{j,i}d>0}}
({h_{j,i}-G_{j,i}k})/({G_{j,i} \, d}).
\label{eq:DR-parent-radial-reach-explicit}
\end{equation}
Moreover, let
\begin{align}
&(a_j,r_j)\in\arg\max_{a,r\ge0}\quad r
\nonumber\\
&\qquad
\text{\rm s.t.}\quad
G_{j,i}a+r\|G_{j,i}\|_2\le h_{j,i},
\,
i\in\mathcal I_j^{+}.
\label{eq:DR-inner-anchor-LP}
\end{align}
{Then, $r_j>0$, $a_j\in\operatorname{int}\mathcal C_j$, and
$\mathbb B_2(a_j,r_j)\subseteq\mathcal C_j$ is the largest Euclidean ball
contained in $\mathcal C_j$.}
\end{lemma}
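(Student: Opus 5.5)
The plan is to treat the two claims separately, since both are elementary polyhedral facts that follow from the stacked representation $\mathcal C_j=\{k:G_jk\le h_j\}$ in \eqref{Cfinal}. Because $\mathcal C_j\subseteq\mathcal K_0$ and $\mathcal K_0$ is compact, $\mathcal C_j$ is a compact polytope. This justifies the use of $\max$ rather than $\sup$ in both \eqref{eq:DR-parent-radial-reach} and \eqref{eq:DR-inner-anchor-LP}.

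\textbf{Radial reach.} Fix $k\in\mathcal C_j$ and $d\neq0$, and examine the rows one at a time.
\begin{itemize}
\item Row $i$ requires $G_{j,i}k+\alpha\,G_{j,i}d\le h_{j,i}$, and $h_{j,i}-G_{j,i}k\ge0$ because $k\in\mathcal C_j$.
\item If $G_{j,i}d\le0$, the constraint holds for every $\alpha\ge0$.
\item If $G_{j,i}d>0$, the constraint is equivalent to $\alpha\le (h_{j,i}-G_{j,i}k)/(G_{j,i}d)$.
\end{itemize}
The feasible set of $\alpha$ is therefore $[0,\min_i(\cdot)]$, with the minimum taken over rows with $G_{j,i}d>0$. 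This set of rows is nonempty: otherwise $k+\alpha d\in\mathcal C_j$ for all $\alpha\ge0$, which contradicts boundedness of $\mathcal C_j$ since $d\neq0$. The maximum is thus attained and equals \eqref{eq:DR-parent-radial-reach-explicit}.

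\textbf{Chebyshev ball.} The key step is the support-function identity
\[
\sup_{\|u\|_2\le1}G_{j,i}(a+ru)=G_{j,i}a+r\|G_{j,i}\|_2 ,
\]
which follows from Cauchy--Schwarz, with equality at $u=G_{j,i}^\top/\|G_{j,i}\|_2$. Hence $\mathbb B_2(a,r)\subseteq\mathcal C_j$ if and only if $(a,r)$ satisfies all the constraints in \eqref{eq:DR-inner-anchor-LP}.

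Rows with $G_{j,i}=0$ need a remark. Such a row reads $0\le h_{j,i}$, and since $\mathcal C_j$ is nonempty this holds automatically, so these rows impose no restriction on $(a,r)$.

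Consequently, the LP feasible set coincides exactly with the set of pairs (center, radius) of Euclidean balls inside $\mathcal C_j$. Maximizing $r$ therefore yields the largest inscribed ball. Existence of a maximizer follows because the feasible set is nonempty (take any $a\in\mathcal C_j$ and $r=0$) and $r$ is bounded above by the diameter of the compact set $\mathcal C_j$.

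\textbf{Positivity of $r_j$ and interiority of $a_j$.} Since $\operatorname{int}\mathcal C_j\neq\varnothing$, there is some point $\hat a$ and some $\epsilon>0$ with $\mathbb B_2(\hat a,\epsilon)\subseteq\mathcal C_j$. The pair $(\hat a,\epsilon)$ is LP-feasible, so $r_j\ge\epsilon>0$. Then $a_j$ is the center of a ball of positive radius contained in $\mathcal C_j$, so $a_j\in\operatorname{int}\mathcal C_j$.

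The only delicate points are the degenerate rows $G_{j,i}=0$ and the attainment of both maxima; both are handled by nonemptiness and compactness of $\mathcal C_j$, inherited from $\mathcal K_0$. I do not expect a substantive obstacle beyond these.
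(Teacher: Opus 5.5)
Your proposal is correct and follows essentially the same route as the paper: compactness of $\mathcal C_j$ inherited from $\mathcal K_0$ guarantees some row with $G_{j,i}d>0$, a row-by-row first-crossing argument gives the radial reach, the identity $\max_{\|e\|_2\le r}G_{j,i}e=r\|G_{j,i}\|_2$ identifies the LP feasible set with the inscribed balls, and an interior ball certifies $r_j>0$. Your extra remarks on rows with $G_{j,i}=0$ and on attainment of the LP optimum are valid refinements that the paper leaves implicit.
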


\begin{proof}
 See Appendix~\ref{app:DR-parent-geometry}.   
\end{proof}


We now construct a
bounded-complexity inner approximation of
\eqref{Cfinal}. Let $d_K:=mn$ and choose $p_{\max}\ge d_K+1$. Let $a_j$ be the center of the largest Euclidean ball
inscribed in $\mathcal C_j$, obtained from
Lemma~\ref{prop:DR-parent-geometry}.
We first construct an initial full-dimensional inner polytope by
selecting $d_K+1$ unit directions
$\{u_{j,r}\}_{r=1}^{d_K+1}$ that form a  simplex
in $\mathbb R^{d_K}$, whose vertices are equally spaced
around the origin, as
\begin{equation}
\sum_{r=1}^{d_K+1}u_{j,r}=0,
\quad
u_{j,r}^{\top}u_{j,s}=-\frac{1}{d_K},
\quad r\neq s.
\label{eq:DR-simplex-directions}
\end{equation}
Starting from the interior point $a_j$, extend each direction
to the boundary of $\mathcal C_j$ using
Lemma~\ref{prop:DR-parent-geometry} as
\(
\tau_{j,r}:=\bar\alpha_j(a_j,u_{j,r}),
\,
\kappa_{j,r}^{(0)}
:=a_j+\tau_{j,r}u_{j,r}.
\)
The convex hull of these boundary points defines the initial
full-dimensional certified inner polytope,
\begin{equation}
\widehat{\mathcal C}_{j,0}
:=
\operatorname{co}\{\kappa_{j,r}^{(0)}\}_{r=1}^{d_K+1}.
\label{eq:DR-inner-controller-polytope}
\end{equation}

Now, at iteration $t$, let
\(
\widehat V_{j,t}:=
\big[\kappa_{j,1}^{(t)}\ \cdots\
\kappa_{j,p_{j,t}}^{(t)}\big],
\quad p_{j,t}\le p_{\max},
\)
collect the vertices of the current inner polytope.
For $k_t\in\widehat{\mathcal C}_{j,t}$ and $d\neq0$,
the largest step along $d$ that remains in the inner polytope
is obtained by
\begin{equation}
\begin{aligned}
&\widehat\alpha_{j,t}(d)
:=\max_{0\le\alpha\le1,\gamma}\quad\alpha\\
\text{\rm s.t.}\quad&
k_t+\alpha d=\widehat V_{j,t}\gamma,\quad
\mathbf1^\top\gamma=1,\quad
\gamma\ge0.
\end{aligned}
\label{eq:DR-inner-bundle-directional-margin}
\end{equation}
Here, $\gamma$ represents the convex-combination weights
of the stored vertices, while $\alpha\le1$ limits the
search to the proposed step.
Compare this with the exact feasible step in the parent
polytope $\mathcal C_j$,
\(
\alpha_{j,t}^{\rm ex}(d):=
\min\{1,\bar\alpha_j(k_t,d)\}.
\)
If the inner polytope is overly restrictive along $d$,
as measured by
\begin{equation}
\alpha_{j,t}^{\rm ex}(d)-\widehat\alpha_{j,t}(d)
>\varepsilon_{\rm cmp},
\qquad
p_{j,t}<p_{\max},
\label{eq:DR-inner-enrichment-trigger}
\end{equation}
we enlarge it by adding the certified endpoint
\begin{equation}
\kappa_{j,t}^{\rm new}
:=
k_t+\alpha_{j,t}^{\rm ex}(d)d,
\qquad
\widehat{\mathcal C}_{j,t}^{+}
:=
\operatorname{co}\!\left(
\widehat{\mathcal C}_{j,t}\cup
\{\kappa_{j,t}^{\rm new}\}
\right).
\label{eq:DR-inner-enrichment}
\end{equation}
The next result confirms that enrichment expands the inner polytope in the
desired direction while preserving
$\widehat{\mathcal C}_{j,t}^{+}\subseteq\mathcal C_j$.
\begin{proposition}
\label{thm:DR-bounded-complexity-inner-set}
The construction
\eqref{eq:DR-simplex-directions}--\eqref{eq:DR-inner-controller-polytope}
satisfies
\(
a_j\in\operatorname{int}\widehat{\mathcal C}_{j,0},
\,
\widehat{\mathcal C}_{j,0}\subseteq\mathcal C_j,
\)
and, for every subsequent inner polytope,
\(
\widehat\alpha_{j,t}(d)\le\alpha_{j,t}^{\rm ex}(d).
\)
After enrichment steps,  one has
\begin{equation}
\widehat{\mathcal C}_{j,t}
\subsetneq
\widehat{\mathcal C}_{j,t}^{+}
\subseteq
\mathcal C_j,
\qquad
\widehat\alpha_{j,t}^{+}(d)=\alpha_{j,t}^{\rm ex}(d).
\label{eq:DR-inner-enrichment-properties}
\end{equation}
\end{proposition}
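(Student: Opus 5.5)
The plan is to prove the three assertions in order: first the properties of the initial simplex, then the step-length comparison for every later inner polytope, and finally the effect of one enrichment step.

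\emph{Initial polytope.} By Lemma~\ref{prop:DR-parent-geometry}, $\mathbb B_2(a_j,r_j)\subseteq\mathcal C_j$ with $r_j>0$. For any unit $d$ we have $a_j+r_jd\in\mathcal C_j$, so $\tau_{j,r}=\bar\alpha_j(a_j,u_{j,r})\ge r_j>0$. Since $\mathcal C_j\subseteq\mathcal K_0$ is compact, each $\tau_{j,r}$ is also finite. Each $\kappa_{j,r}^{(0)}$ lies in $\mathcal C_j$ by the definition of the radial reach, and $\mathcal C_j$ is convex, so $\widehat{\mathcal C}_{j,0}\subseteq\mathcal C_j$.

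For interiority, note that the directions in \eqref{eq:DR-simplex-directions} positively span $\mathbb R^{d_K}$. They sum to zero, and any $d_K$ of them are linearly independent, because the Gram matrix $(1+1/d_K)I-(1/d_K)\mathbf1\mathbf1^\top$ is positive definite. Hence there are weights $w_r=(1/\tau_{j,r})/\sum_s(1/\tau_{j,s})>0$ with $\sum_r w_r\tau_{j,r}u_{j,r}=c\sum_r u_{j,r}=0$. This exhibits $a_j=\sum_r w_r\kappa_{j,r}^{(0)}$ as a strictly positive convex combination. The vectors $\kappa^{(0)}_{j,r}-a_j$ are nonzero multiples of the $u_{j,r}$, so they span $\mathbb R^{d_K}$ and the simplex is full-dimensional. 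A point with all barycentric weights positive in a full-dimensional simplex is interior, so $a_j\in\operatorname{int}\widehat{\mathcal C}_{j,0}$.

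\emph{Comparison of step lengths.} Argue by induction that every inner polytope lies in $\mathcal C_j$. The base case is the initial polytope. For the inductive step, each added vertex $\kappa^{\rm new}_{j,t}=k_t+\alpha^{\rm ex}_{j,t}(d)d$ lies in $\mathcal C_j$, since $\alpha^{\rm ex}\le\bar\alpha_j(k_t,d)$ and the segment from $k_t\in\mathcal C_j$ along $d$ stays feasible up to $\bar\alpha_j$ (convexity and \eqref{eq:DR-parent-radial-reach}). Convexity of $\mathcal C_j$ then gives $\widehat{\mathcal C}^+_{j,t}\subseteq\mathcal C_j$. Now any $\alpha$ feasible in \eqref{eq:DR-inner-bundle-directional-margin} gives $k_t+\alpha d\in\widehat{\mathcal C}_{j,t}\subseteq\mathcal C_j$ with $\alpha\le1$. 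Hence $\alpha\le\bar\alpha_j(k_t,d)$ and $\alpha\le 1$, which yields $\widehat\alpha_{j,t}(d)\le\alpha^{\rm ex}_{j,t}(d)$.

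\emph{Enrichment.} The equality $\widehat\alpha^+_{j,t}(d)=\alpha^{\rm ex}_{j,t}(d)$ follows from two inequalities. First, $\kappa^{\rm new}_{j,t}$ is a vertex-set member of the enlarged hull, so $\alpha=\alpha^{\rm ex}_{j,t}(d)$ is feasible in the LP for $\widehat{\mathcal C}^+_{j,t}$ (with $k_t$ still in the enlarged set), giving $\widehat\alpha^+\ge\alpha^{\rm ex}$. Second, the previous paragraph applied to $\widehat{\mathcal C}^+_{j,t}\subseteq\mathcal C_j$ gives $\widehat\alpha^+\le\alpha^{\rm ex}$.

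Strict inclusion is the one point needing care, and I would obtain it from the trigger \eqref{eq:DR-inner-enrichment-trigger}. Suppose $\kappa^{\rm new}_{j,t}\in\widehat{\mathcal C}_{j,t}$. Then $\alpha^{\rm ex}_{j,t}(d)$ would be feasible for the LP over $\widehat{\mathcal C}_{j,t}$, forcing $\widehat\alpha_{j,t}(d)\ge\alpha^{\rm ex}_{j,t}(d)$. This contradicts $\alpha^{\rm ex}-\widehat\alpha>\varepsilon_{\rm cmp}\ge0$. Hence the new point lies in $\widehat{\mathcal C}^+_{j,t}\setminus\widehat{\mathcal C}_{j,t}$, which gives $\widehat{\mathcal C}_{j,t}\subsetneq\widehat{\mathcal C}^+_{j,t}$.

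The main subtlety is the interiority claim for $a_j$, which relies on positive spanning of the regular simplex directions and positivity and finiteness of every $\tau_{j,r}$. This is exactly where the inscribed ball from Lemma~\ref{prop:DR-parent-geometry} and the compactness of $\mathcal K_0$ are used.
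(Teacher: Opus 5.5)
Your proposal is correct and follows essentially the same route as the paper's proof. The shared steps are: the radial first-hit points lie in $\mathcal C_j$, so convexity gives $\widehat{\mathcal C}_{j,0}\subseteq\mathcal C_j$; the barycentric weights $\tau_{j,r}^{-1}/\sum_s\tau_{j,s}^{-1}$ together with $\sum_r u_{j,r}=0$ give interiority of $a_j$; the inclusion $\widehat{\mathcal C}_{j,t}\subseteq\mathcal C_j$ gives the LP-feasibility comparison of step lengths; and the trigger rules out $\kappa_{j,t}^{\rm new}\in\widehat{\mathcal C}_{j,t}$. You also tighten several points the paper only asserts: the explicit induction for $\widehat{\mathcal C}_{j,t}\subseteq\mathcal C_j$, the Gram-matrix check that the simplex directions are independent, and stating $\varepsilon_{\rm cmp}\ge 0$.
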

\begin{proof}
See Appendix~\ref{app:DR-bounded-complexity-inner-set}.
\end{proof}

 The exact set \eqref{Cfinal}, the inner ball $\mathbb B_2(a_j,r_j)$, and the initial inner polytopes are computed offline.



\vspace{-9pt}

\section{Certified DR Objective and Fisher Geometry}
\label{sec:DR-LQR}
This section constructs the objective and realization-wise Fisher quantities
used by FM-DR-GPI.

\vspace{-9pt}
\subsection{Realization-Dependent LQR Cost}
\label{sec:DR-cost}
For \(\eta\in\mathcal P\) and \(K\) such that \(A_K(\eta)\) in
\eqref{eq:DR-closed-loop-GK} is Schur, let \(P_K(\eta)\succeq0\) be the
unique solution of the Lyapunov equation
\begin{equation}
P_K(\eta)
=
A_K(\eta)^\top P_K(\eta)A_K(\eta)+S_K,
\label{eq:DR-pointwise-P}
\end{equation}
where  \vspace{-6pt}
\begin{equation}
S_K:=Q+K^\top RK,
\label{Sk}
\end{equation}
Define  \vspace{-6pt}
\begin{equation}
\Omega_K(\eta)
:=
\bar W+B(\eta)K\bar V K^\top B(\eta)^\top.
\label{eq:DR-Omega}
\end{equation}
We next provide a realization-dependent surrogate LQR cost.

\begin{lemma}
\label{lem:certified-pointwise-cost}
For every
Schur pair $(K,\eta)$, let 
\begin{equation}
\bar J(K,\theta(\eta))
:=
\operatorname{tr}\!\left(P_K(\eta)\Omega_K(\eta)\right)
+\operatorname{tr}(S_K\bar V).
\label{eq:DR-pointwise-cost}
\end{equation}
Then, $J(K,\theta(\eta))$ defined in \eqref{eq:lqrcost} satisfies
\begin{equation}
J(K,\theta(\eta))
\leq
\bar J(K,\theta(\eta)).
\label{eq:pointwise-cost-bound}
\end{equation}
Moreover, if the certified bounds are consistent, i.e., $\bar W_T\to W$ and $\bar V_T\to V$, then
\begin{equation}
\lim_{T\to\infty}
\sup_{\eta\in\mathcal P_T}
\left|
\bar J^{(T)}(K,\theta(\eta))-J(K,\theta^\star)
\right|
=0.
\label{eq:pointwise-cost-asymptotic}
\end{equation}
\end{lemma}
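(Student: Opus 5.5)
We compute the exact stationary cost under output feedback $u_k=Ky_k=Kx_k+K\upsilon_k$ and then use monotonicity of traces in the noise covariances. For fixed $\eta$ with $A_K:=A_K(\eta)$ Schur, the closed loop is
\[
x_{k+1}=A_Kx_k+\big(\omega_k+B(\eta)K\upsilon_k\big).
\]
The driving noise is white with covariance $\Omega^{\rm true}_K:=W+B(\eta)KVK^\top B(\eta)^\top$, since $\omega_k$ and $\upsilon_k$ are independent. The state therefore has a unique stationary covariance $\Sigma_K=A_K\Sigma_KA_K^\top+\Omega^{\rm true}_K$, and the Cesàro limit in \eqref{eq:lqrcost} equals the stationary expectation.

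\textbf{Evaluating the stage cost.} Note that $x_k$ is independent of $\upsilon_k$, because $x_k$ depends only on $\upsilon_{0},\dots,\upsilon_{k-1}$. Hence
\[
\E[y_k^\top Qy_k+u_k^\top Ru_k]=\E[(x_k+\upsilon_k)^\top S_K(x_k+\upsilon_k)]=\tr(S_K\Sigma_k)+\tr(S_KV),
\]
with $S_K$ as in \eqref{Sk}. Taking the limit gives $J(K,\theta(\eta))=\tr(S_K\Sigma_K)+\tr(S_KV)$.

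The standard duality between the Lyapunov equation \eqref{eq:DR-pointwise-P} and the covariance equation gives $\tr(S_K\Sigma_K)=\tr(P_K(\eta)\Omega^{\rm true}_K)$. To see this, write $\Sigma_K=\sum_t A_K^t\Omega^{\rm true}_K A_K^{t\top}$ and $P_K=\sum_t A_K^{t\top}S_KA_K^t$; both series converge because $A_K$ is Schur. Therefore
\[
J(K,\theta(\eta))=\tr(P_K(\eta)\Omega^{\rm true}_K)+\tr(S_KV).
\]

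\textbf{Inequality.} By \eqref{eq:WV-certificates}, $W\preceq\bar W$ and $V\preceq \bar V$, so $BKVK^\top B^\top\preceq BK\bar VK^\top B^\top$ and hence $\Omega^{\rm true}_K\preceq\Omega_K(\eta)$. Because $P_K(\eta)\succeq0$ and $S_K\succeq0$, the map $X\mapsto\tr(PX)$ is monotone on the PSD cone. Applying this to both terms yields \eqref{eq:pointwise-cost-bound}.

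\textbf{Asymptotic statement.} This requires interpreting $\mathcal P_T$ as the Part~I refined coefficient set at data length $T$, which shrinks so that $\sup_{\eta\in\mathcal P_T}\|\theta(\eta)-\theta^\star\|\to0$ on the confidence event. We take this consistency of the refined set from Part~I; it is the step needing the most care, and it must be assumed or imported.

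Given this, the argument proceeds as follows. Since $A^\star+B^\star K$ is Schur, there is a neighborhood of $\theta^\star$ on which $A+BK$ is uniformly Schur. On that neighborhood the solution $P_K(\eta)=\operatorname{mat}\big((I-A_K^\top\otimes A_K^\top)^{-1}\vecop(S_K)\big)$ is continuous in $\theta$. The matrix $\Omega_K$ is continuous in $(\theta,\bar W,\bar V)$ and affine in the latter two. The quantity $\bar J^{(T)}$ is thus a jointly continuous function of $(\theta,\bar W_T,\bar V_T)$ on a compact neighborhood of $(\theta^\star,W,V)$, and at that point it equals $J(K,\theta^\star)$ by the exact formula derived above.

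Uniform continuity on this compact set, combined with $\sup_{\eta\in\mathcal P_T}\|\theta(\eta)-\theta^\star\|\to0$, $\bar W_T\to W$, and $\bar V_T\to V$, gives \eqref{eq:pointwise-cost-asymptotic}. The main obstacle is ensuring that all $\eta\in\mathcal P_T$ eventually remain Schur for the fixed $K$, so that $\bar J^{(T)}$ is well defined uniformly. This follows from the shrinking of $\mathcal P_T$ together with the continuity of the spectral radius.
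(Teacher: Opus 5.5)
Your proof is correct and follows the paper's argument: the exact formula $J=\tr(P_K\Omega^{\rm true}_K)+\tr(S_KV)$ from Lyapunov duality, then monotonicity of the trace using $\Omega_K-\Omega^{\rm true}_K\succeq0$ and $\bar V-V\succeq0$, with the asymptotic claim resting on the Part~I contraction of the refined set. The differences are minor: you derive the duality by Neumann series rather than substituting $S_K=P_K-A_K^\top P_KA_K$, and you write out the uniform-Schur and continuity argument (with the contraction phrased in terms of $\theta(\eta)\to\theta^\star$) that the paper only asserts.
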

\begin{proof}
See Appendix~\ref{app:certified-pointwise-cost}.    
\end{proof}

Motivated by Lemma~\ref{lem:certified-pointwise-cost}, we replace the
unknown cost \(J(K,\theta(\eta))\) in \eqref{DR-problem} by \(\bar J(K,\theta(\eta))\). Thus, the
DR objective of Problem 1 used for synthesis is
\(
\bar J_{\rm DR}(K)
:=
\mathbb E_{\eta\sim\mathbb P_{\rm des}}
\!\left[\bar J(K,\theta(\eta))\right].
\)
Drawing a batch
 of independent realizations  with fixed weight $1/N$ (i.e., uniform distribution), the sample-average approximation (SAA) objective is
\begin{equation}
f_N(K):=\frac{1}{N} \sum_{i=1}^{N}\bar J(K,\theta(\eta_i)).
\label{eq:DR-GPI-finite-objective}
\end{equation}

\vspace{-9pt}

\subsection{Fisher Geometry of Closed-Loop Policy Perturbations}
Let $K_t:=K^{(t)}$ denote the controller gain at iteration $t$,
and define $A_t(\eta):=A_{K_t}(\eta)$ and
$\Omega_t(\eta):=\Omega_{K_t}(\eta)$, with $K$ being replaced with $K_t$ in \eqref{eq:DR-closed-loop-GK} and \eqref{eq:DR-Omega}, respectively.
Let $X_t(\eta)$ denote the surrogate stationary
covariance, distinguished from the true covariance
$X_K^{\rm true}$ in Lemma~\ref{lem:certified-pointwise-cost},
and satisfying
\begin{equation}
X_t(\eta)=\Omega_t(\eta)
+A_t(\eta)X_t(\eta)A_t(\eta)^\top.
\label{eq:DR-Fisher-stationary-X}
\end{equation}
Since the closed-loop dynamics are
$x_{k+1}=A_t(\eta)x_k+B(\eta)K_t\upsilon_k+\omega_k$,
the corresponding Gaussian surrogate transition, using
the covariance bounds of Lemma~\ref{lem:certified-pointwise-cost},
is
\begin{equation}
p_{K_t}^\eta(\cdot\mid x)
:=
\mathcal N\!\left(A_t(\eta)x,\Omega_t(\eta)\right).
\label{eq:DR-Fisher-transition}
\end{equation}
For a policy perturbation $K(\varepsilon):=K_t+\varepsilon D$,
the directional derivatives of the transition mean and covariance are
\begin{align}
\dot\Omega_t(\eta;D)
&=B(\eta)D\bar V K_t^\top B(\eta)^\top
+B(\eta)K_t\bar V D^\top B(\eta)^\top,
\nonumber\\
\dot\mu_t(\eta,x;D)
&=
\big.
\frac{d}{d\varepsilon}
\mu_{K_t+\varepsilon D}(\eta)
\big|_{\varepsilon=0}=B(\eta)Dx.
\label{eq:DR-Fisher-cov-derivative}
\end{align}
For a Gaussian variable
$z\sim\mathcal N(\mu(\vartheta),\Sigma(\vartheta))$,
the Fisher quadratic form along a parameter perturbation is
\(
\dot\mu^\top\Sigma^{-1}\dot\mu
+\frac12\tr\!\big(
\Sigma^{-1}\dot\Sigma\,
\Sigma^{-1}\dot\Sigma
\big).
\)
Substituting \eqref{eq:DR-Fisher-cov-derivative} into this
expression and averaging over the stationary state, using
$\mathbb E[xx^\top]=X_t(\eta)$, gives
\begin{align}
&\mathfrak I_{K_t}^{\rm tr}(\eta;D)
:=
\tr\!\left(
\Omega_t(\eta)^{-1}B(\eta)D X_t(\eta)
D^\top B(\eta)^\top
\right)
\nonumber\\
&\quad+
\frac12\tr\!\left(
\Omega_t(\eta)^{-1}\dot\Omega_t(\eta;D)
\Omega_t(\eta)^{-1}\dot\Omega_t(\eta;D)
\right).
\label{eq:DR-Fisher-directional-form}
\end{align}

The Gaussian transition law then
satisfies 
\begin{align*} \mathbb E_x[D_{\rm KL}(p_{K_t}^\eta\|p_{K(\varepsilon)}^\eta)]
=\frac{\varepsilon^2}{2}\mathfrak I_{K_t}^{\rm tr}(\eta;D)+o(\varepsilon^2),
\end{align*}
where  $p_{K(\varepsilon)}^\eta(\cdot\mid x)$ is defined as in
\eqref{eq:DR-Fisher-transition}, with $K^{(t)}$ replaced by $K(\varepsilon)$.
Thus, $\mathfrak I_{K_t}^{\rm tr}(\eta;D)$ measures, to second order in the step size
$\varepsilon$, how much the transition law of realization $\eta$ changes when
the gain moves along $D$. 


The Fisher sensitivity \eqref{eq:DR-Fisher-directional-form} is not directly
computable over the support. Thus, the certificate uses, for $K\in\mathcal S_{\rm cert}$ and
$d=\vecop(D)$, the quadratic form
\begin{equation}
\begin{aligned}
\mathfrak I_K(\eta;D)&:=d^\top\mathbf F_K(\eta)d,\\
\mathbf F_K(\eta)&:=(X_K(\eta)+\bar V)\otimes B(\eta)^\top\bar W^{-1}B(\eta).
\end{aligned}
\label{eq:DR-certified-Fisher-form}
\end{equation}

\begin{lemma}
\label{lem:DR-dominating-Fisher}
For every $\eta\in\mathcal P$ and every direction $D$, one has
\begin{equation}
\mathfrak I_{K_t}^{\rm tr}(\eta;D)\le\mathfrak I_{K_t}(\eta;D).
\label{eq:DR-dominating-Fisher}
\end{equation}
where $\mathfrak I_{K_t}^{\rm tr}(\eta;D)$ is defined in \eqref{eq:DR-Fisher-directional-form},
\end{lemma}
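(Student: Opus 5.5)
The plan is to split $\mathfrak I_{K_t}^{\rm tr}(\eta;D)$ in \eqref{eq:DR-Fisher-directional-form} into its mean part and its covariance part. The mean part should be dominated by $d^\top(X_t(\eta)\otimes B^\top\bar W^{-1}B)d$ and the covariance part by $d^\top(\bar V\otimes B^\top\bar W^{-1}B)d$. Adding the two bounds gives \eqref{eq:DR-dominating-Fisher}, since $\mathbf F_{K_t}(\eta)=(X_t+\bar V)\otimes B^\top\bar W^{-1}B$ is linear in its first Kronecker factor. Throughout I drop $\eta$ and $t$ and write $B=B(\eta)$, $\Omega=\Omega_t(\eta)$, $X=X_t(\eta)$. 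The only structural fact used is \eqref{eq:DR-Omega}, namely $\Omega=\bar W+BK\bar VK^\top B^\top\succeq\bar W\succ0$. This gives $\Omega^{-1}\preceq\bar W^{-1}$.

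\emph{Mean term.} Since $BDXD^\top B^\top\succeq0$, we get $\tr(\Omega^{-1}BDXD^\top B^\top)\le\tr(\bar W^{-1}BDXD^\top B^\top)$. The vec identity $\tr(D^\top MDX)=\vecop(D)^\top(X^\top\otimes M)\vecop(D)$, with $X=X^\top$ and $M=B^\top\bar W^{-1}B$, turns the right-hand side into $d^\top(X\otimes B^\top\bar W^{-1}B)d$.

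\emph{Covariance term.} Set $S:=\Omega^{-1/2}$, $a:=SBD\bar V^{1/2}$ and $b:=SBK\bar V^{1/2}$. By \eqref{eq:DR-Fisher-cov-derivative}, $S\dot\Omega S=ab^\top+ba^\top$, so the covariance term equals $\tfrac12\|ab^\top+ba^\top\|_F^2$. Expanding the square gives
\[
\|ab^\top+ba^\top\|_F^2=2\tr(a\,b^\top b\,a^\top)+2\tr\big((b^\top a)^2\big).
\]
Let $Z:=bb^\top=SBK\bar VK^\top B^\top S$. Because $\Omega=\bar W+BK\bar VK^\top B^\top$ and $\bar W\succ0$, we have $Z=I-S\bar WS\prec I$. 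Hence $b^\top b\preceq I$, since it has the same nonzero spectrum as $Z$. This gives $\tr(a\,b^\top b\,a^\top)\le\|a\|_F^2$. For the cross term, $\tr((b^\top a)^2)\le\|b^\top a\|_F^2=\tr(a^\top Za)$.

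Collecting these, the covariance term is at most $\tr(a^\top(I+Z)a)$. Now $\bar W=S^{-1}(I-Z)S^{-1}$, so $\bar W^{-1}=S(I-Z)^{-1}S$. Since $0\preceq Z\prec I$, the Neumann series yields $(I-Z)^{-1}\succeq I+Z$. Therefore
\[
\tr(a^\top(I+Z)a)\le\tr\big(a^\top(I-Z)^{-1}a\big)=\tr(\bar W^{-1}BD\bar VD^\top B^\top)=d^\top(\bar V\otimes B^\top\bar W^{-1}B)d,
\]
again by the vec identity.

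The main obstacle is this covariance term. A crude estimate such as $\|ab^\top+ba^\top\|_F^2\le4\|a\|_F^2\|b\|_2^2$ loses a factor of $2$. The plan avoids that loss by not replacing $\Omega^{-1}$ with $\bar W^{-1}$ directly. Instead, the gap $\bar W^{-1}-\Omega^{-1}=S\big((I-Z)^{-1}-I\big)S\succeq SZS$ is used to absorb exactly the cross term $\tr(a^\top Za)$. Summing the mean and covariance bounds then yields \eqref{eq:DR-dominating-Fisher} for every $\eta\in\mathcal P$ and every direction $D$.
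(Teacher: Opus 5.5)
Your proof is correct, but it argues differently from the paper.

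The paper's proof is information-theoretic. It enlarges the observation to the pair $(x_{k+1},\upsilon_k)$. Conditionally on $(x_k,\upsilon_k)$ this pair has the law $\mathcal N(A_{K(\varepsilon)}(\eta)x_k+B(\eta)K(\varepsilon)\upsilon_k,\bar W)$, whose covariance does not depend on $\varepsilon$. The joint Fisher information is therefore $\E[(x_k+\upsilon_k)^\top D^\top B^\top\bar W^{-1}BD(x_k+\upsilon_k)]$. The marginal score is the conditional expectation of the joint score, so Jensen gives marginal $\le$ joint. Averaging over the stationary state then yields $X_t(\eta)+\bar V$ in one stroke.

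You instead verify the inequality by direct matrix algebra, one channel at a time. The mean channel follows from $\Omega^{-1}\preceq\bar W^{-1}$. For the covariance channel you use the whitening $Z=I-S\bar WS$ and $(I-Z)^{-1}\succeq I+Z$. This absorbs exactly the cross term $\tr(a^\top Za)$ that a naive bound would double-count.

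What each route buys:
\begin{itemize}
\item The paper's argument is shorter and explains why $\bar V$ enters additively alongside $X_t$: observing $\upsilon_k$ turns measurement noise into a known regressor. It does, however, lean on standard Fisher-information regularity (differentiation under the integral, the conditional-score identity). It also relies on reading \eqref{eq:DR-Fisher-directional-form} as the genuine Fisher information of the surrogate Gaussian transition.
\item Your argument is elementary and self-contained and needs only $\bar W\succ0$, $\bar V\succeq0$ and $X_t(\eta)\succeq0$, with no probabilistic interpretation. It also gives a slightly finer conclusion: each channel is dominated separately. The mean term is bounded by $d^\top(X_t\otimes B^\top\bar W^{-1}B)d$ and the covariance term by $d^\top(\bar V\otimes B^\top\bar W^{-1}B)d$.
\end{itemize}
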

\begin{proof}
See Appendix \ref{proof:lem4}.    
\end{proof}  \vspace{-9pt}

\section{{Fisher-Majorized Certified GPI}}
\label{sec:SAC-PI}

\nw{Section~\ref{sec:DR-LQR} supplied the SAA objective, support-wide
Fisher form, and certified controller sets. We now connect them in three
stages: derive the exact frozen-covariance Sylvester model, convert it into a
computable upper model using partial evaluation and Fisher covariance control,
and check and correct its step for stability and descent.} \vspace{-12pt}

\subsection{Exact one-step policy-improvement model}
\nw{For a sampled $\eta_{\ell i}^{(t)}$, set
$A_{\ell i}^{(t)}:=A(\eta_{\ell i}^{(t)})$,
$B_{\ell i}^{(t)}:=B(\eta_{\ell i}^{(t)})$,
$F_{\ell i}^{(t)}(K):=A_{\ell i}^{(t)}+B_{\ell i}^{(t)}K$,
$F_{\ell i,t}:=F_{\ell i}^{(t)}(K^{(t)})$, and
$S^{(t)}:=S_{K^{(t)}}$. Also define
$\Omega_{\ell i}^{(t)}(K):=\bar W+B_{\ell i}^{(t)}K\bar V K^\top
B_{\ell i}^{(t)\top}$. The current-policy value matrix satisfies}
\begin{equation}
P_{\ell i}^{(t)}
=
S^{(t)}
+
F_{\ell i,t}^{\top}
P_{\ell i}^{(t)}
F_{\ell i,t}.
\label{eq:DR-GPI-current-value-equation}
\end{equation}

\nw{Apply a candidate $K$ for one step and then return to $K^{(t)}$. Under
the surrogate covariances,
$\E[y_k^\top Qy_k+u_k^\top Ru_k\mid x_k]
=x_k^\top S_Kx_k+\tr(S_K\bar V)$; hence the corresponding current-policy
Q-function is}
\begin{align}
& Q_{\ell i}^{(t)}(x_k;K)
={}
x_k^\top S_Kx_k+\tr(S_K\bar V)
\nonumber\\
&+
x_k^\top F_{\ell i}^{(t)}(K)^\top P_{\ell i}^{(t)}
F_{\ell i}^{(t)}(K)x_k
+\tr\!\left(P_{\ell i}^{(t)}
\Omega_{\ell i}^{(t)}(K)\right).
\label{eq:DR-GPI-candidate-Q}
\end{align}
\nw{Because the continuation matrix is fixed at $P_{\ell i}^{(t)}$,
\eqref{eq:DR-GPI-candidate-Q} is quadratic in $K$.}
Thus, the statewise one-step advantage is
\begin{align}
&a_{\ell i}^{(t)}(x;K)
:={}
Q_{\ell i}^{(t)}(x;K)
-
Q_{\ell i}^{(t)}(x;K^{(t)})
\nonumber\\ &
={}
x^\top[S_K-S^{(t)}]x
+
\tr\!\left([S_K-S^{(t)}]\bar V\right)
\nonumber\\
&+
x^\top
\left[
F_{\ell i}^{(t)}(K)^\top
P_{\ell i}^{(t)}
F_{\ell i}^{(t)}(K)
-
F_{\ell i,t}^\top
P_{\ell i}^{(t)}
F_{\ell i,t}
\right]x
\nonumber\\
&+
\tr\!\left(
P_{\ell i}^{(t)}
[
\Omega_{\ell i}^{(t)}(K)
-
\Omega_{\ell i}^{(t)}(K^{(t)})
]
\right).
\label{eq:DR-GPI-statewise-advantage}
\end{align}
\nw{Average this advantage under the current-policy state distribution:}
\begin{equation}
\mathcal A_{\ell i}^{(t)}(K)
:=
\mathbb E[
a_{\ell i}^{(t)}(x;K)].
\label{eq:DR-GPI-expected-advantage-definition}
\end{equation}
\nw{Using $\E[x^\top Mx]=\tr(MX_{\ell i}^{(t)})$ and
\eqref{eq:DR-GPI-current-value-equation} gives}
\begin{align*}
\mathcal A_{\ell i}^{(t)}(K)
={}&
\tr\!\left(
\left[
S_K
+F_{\ell i}^{(t)}(K)^\top P_{\ell i}^{(t)}
 F_{\ell i}^{(t)}(K)
-P_{\ell i}^{(t)}
\right]X_{\ell i}^{(t)}
\right)
\nonumber\\
&+
\tr\!\left([S_K-S^{(t)}]\bar V\right)
\nonumber\\
&+
\tr\!\left(
P_{\ell i}^{(t)}
[
\Omega_{\ell i}^{(t)}(K)
-\Omega_{\ell i}^{(t)}(K^{(t)})
]
\right).
\end{align*}
\nw{With uniform weights $w_{\ell i}=1/N$, define}
\begin{equation}
\mathcal A_N(K)
:=
\sum_{\ell=1}^{L}\sum_{i=1}^{N_\ell}
w_{\ell i}
\mathcal A_{\ell i}^{(t)}(K).
\label{eq:DR-GPI-weighted-advantage}
\end{equation}
\nw{Its current-policy gradient equals the SAA gradient:}

\begin{equation}
\nabla_K\mathcal A_N(K^{(t)})
=
\nabla_K f_N(K^{(t)})
=:g_t^N,
\label{eq:DR-GPI-advantage-gradient}
\end{equation}
\nw{where $f_N$ is given by \eqref{eq:DR-GPI-finite-objective} and}
\begin{equation}
\resizebox{0.96\columnwidth}{!}{$
\begin{aligned}
g_t^N
=2\sum_{\ell=1}^{L}\sum_{i=1}^{N_\ell} w_{\ell i}\Big[
&\left(R+B_{\ell i}^{(t)\top}P_{\ell i}^{(t)}B_{\ell i}^{(t)}\right)
K^{(t)}\left(X_{\ell i}^{(t)}+\bar V\right)\\[-1mm]
&+B_{\ell i}^{(t)\top}P_{\ell i}^{(t)}A_{\ell i}^{(t)}X_{\ell i}^{(t)}\Big].
\end{aligned}$}
\label{eq:DR-GPI-explicit-gradient}
\end{equation}
\nw{Because the averaging law is that of $K^{(t)}$, $\mathcal A_N$ is a
first-order model of $f_N(K)-f_N(K^{(t)})$, not the exact finite difference.}

\begin{lemma}
\label{lem:DR-GPI-exact-Sylvester}
With $P_{\ell i}^{(t)}$ and $X_{\ell i}^{(t)}$ fixed at the current policy,
$\mathcal A_N$ in \eqref{eq:DR-GPI-weighted-advantage} is a convex quadratic in $K$ and
\begin{equation}
\mathcal A_N(K^{(t)}+D)
=
\langle g_t^N,D\rangle
+
\frac{1}{2}\langle D,\mathcal M_t[D]\rangle,
\label{eq:DR-GPI-quadratic-model}
\end{equation}
with
\begin{align}
& \mathcal M_t[D]
= \nonumber \\ &
2\sum_{\ell=1}^{L}\sum_{i=1}^{N_\ell}
w_{\ell i}
\left(
R+B_{\ell i}^{(t)\top}P_{\ell i}^{(t)}B_{\ell i}^{(t)}
\right)
D
\left(X_{\ell i}^{(t)}+\bar V\right).
\label{eq:DR-GPI-M-operator}
\end{align}
Thus, the unconstrained policy-improvement direction is obtained from \vspace{-6pt}
\begin{equation}
\mathcal M_t[D_t^{\rm unc}]=-g_t^N.
\label{eq:DR-GPI-unconstrained-Sylvester}
\end{equation}
Moreover, the solution is unique and hence the
unique unconstrained minimizer is
$K_{\rm unc}^{(t+1)}=K^{(t)}+D_t^{\rm unc}$.
\end{lemma}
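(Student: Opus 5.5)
Since $P_{\ell i}^{(t)}$, $X_{\ell i}^{(t)}$, $S^{(t)}$ and $\Omega_{\ell i}^{(t)}(K^{(t)})$ are frozen at the current policy, I will treat $\mathcal A_{\ell i}^{(t)}(K)$ as an explicit polynomial in $K$. I will substitute $K=K^{(t)}+D$ and expand term by term. Writing $A:=A_{\ell i}^{(t)}$, $B:=B_{\ell i}^{(t)}$, $P:=P_{\ell i}^{(t)}$ and $X:=X_{\ell i}^{(t)}$, the $K$-dependent pieces are $\tr(K^\top RK(X+\bar V))$ (from $S_K$ in both the $X$ and the $\bar V$ traces), $\tr((A+BK)^\top P(A+BK)X)$, and $\tr(PBK\bar VK^\top B^\top)=\tr(K^\top B^\top PBK\bar V)$. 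Collecting these gives
\begin{equation*}
\mathcal A_{\ell i}^{(t)}(K)=\tr\!\big(K^\top(R+B^\top PB)K(X+\bar V)\big)+2\tr(K^\top B^\top PAX)+c_{\ell i},
\end{equation*}
where $c_{\ell i}$ does not depend on $K$. At $K=K^{(t)}$ we have $\mathcal A_{\ell i}^{(t)}(K^{(t)})=0$: the $\bar V$ and $\Omega$ differences vanish, and the first trace vanishes by \eqref{eq:DR-GPI-current-value-equation}. Hence the constant term of the expansion in $D$ is zero.

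Next I will expand in $D$. The linear term is $2\langle(R+B^\top PB)K^{(t)}(X+\bar V)+B^\top PAX,\,D\rangle$, using the symmetry of $R+B^\top PB$ and of $X+\bar V$. The quadratic term is $\tr(D^\top(R+B^\top PB)D(X+\bar V))=\tfrac12\langle D,\,2(R+B^\top PB)D(X+\bar V)\rangle$. Summing with the weights $w_{\ell i}$ reproduces \eqref{eq:DR-GPI-explicit-gradient} for the linear coefficient and \eqref{eq:DR-GPI-M-operator} for $\mathcal M_t$. This yields \eqref{eq:DR-GPI-quadratic-model}, and taking the derivative at $D=0$ is consistent with \eqref{eq:DR-GPI-advantage-gradient}.

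For convexity and uniqueness I will vectorize. Since $\vecop(ZDY)=(Y^\top\otimes Z)\vecop(D)$, we get
\begin{equation*}
\mathcal M_t=2\sum_{\ell,i}w_{\ell i}\,(X_{\ell i}^{(t)}+\bar V)\otimes(R+B_{\ell i}^{(t)\top}P_{\ell i}^{(t)}B_{\ell i}^{(t)}).
\end{equation*}
Each factor in this sum is a Kronecker product of two symmetric matrices. The first factor satisfies $X+\bar V\succeq\Omega\succeq\bar W\succ0$ by \eqref{eq:DR-Fisher-stationary-X}, because the Schur stability of $F_{\ell i,t}$ gives $X\succeq\Omega$. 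The second satisfies $R+B^\top PB\succeq R\succ0$ because $P\succeq0$. A Kronecker product of positive definite matrices is positive definite, and a positive combination of such products is positive definite. So $\mathcal M_t\succ0$, the quadratic model is strictly convex, and the Sylvester-type equation \eqref{eq:DR-GPI-unconstrained-Sylvester} has the unique solution $D_t^{\rm unc}=-\mathcal M_t^{-1}g_t^N$.

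Finally, the first-order condition of the strictly convex quadratic $D\mapsto\langle g_t^N,D\rangle+\frac12\langle D,\mathcal M_t[D]\rangle$ is exactly \eqref{eq:DR-GPI-unconstrained-Sylvester}. Its unique stationary point is therefore the global minimizer, so $K_{\rm unc}^{(t+1)}=K^{(t)}+D_t^{\rm unc}$. The main obstacle I expect is the bookkeeping in the second step: the $\Omega$-term contributes $B^\top PBK\bar V$ while the $S_K$ term contributes $RK\bar V$, and both must merge with the $X$ terms into the single factor $(X+\bar V)$. I also need to state positive definiteness carefully, using $\bar W\succ0$ rather than $X\succ0$ alone.
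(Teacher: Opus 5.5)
Your proposal is correct and follows the paper's route. You expand the frozen-$(P,X)$ advantage about $K^{(t)}$, read off $g_t^N$ and $\mathcal M_t$ as the linear and quadratic coefficients, and get strict convexity and uniqueness from $\mathbf M_t^N=2\sum_{\ell,i}w_{\ell i}(X_{\ell i}^{(t)}+\bar V)\otimes(R+B_{\ell i}^{(t)\top}P_{\ell i}^{(t)}B_{\ell i}^{(t)})\succ0$; you are simply more explicit than the paper in checking that the constant term vanishes via \eqref{eq:DR-GPI-current-value-equation}, computing the linear coefficient directly rather than citing \eqref{eq:DR-GPI-advantage-gradient}, and justifying $X_{\ell i}^{(t)}+\bar V\succeq\bar W\succ0$.
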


\begin{proof}
Let $\mathbf M_t^N$ denote the matrix representation of $\mathcal M_t$. Since
$R+B_{\ell i}^\top P_{\ell i}^{(t)}B_{\ell i}\succ0$ and
$X_{\ell i}^{(t)}+\bar V\succ0$, we have $\mathbf M_t^N\succ0$.
Hence, $K_{\rm unc}^{(t+1)}=K^{(t)}+D_t^{\rm unc}$ is the unique
unconstrained minimizer. Equation~\eqref{eq:DR-GPI-quadratic-model} follows by expanding the weighted
quadratic advantage \eqref{eq:DR-GPI-weighted-advantage} about $K^{(t)}$. Its
first derivative is $g_t^N$ by \eqref{eq:DR-GPI-advantage-gradient}, and its
Hessian action is \eqref{eq:DR-GPI-M-operator}. Setting the derivative with
respect to $D$ to zero gives \eqref{eq:DR-GPI-unconstrained-Sylvester}. 
\end{proof} \vspace{-12pt}


\subsection{Partial Evaluation and Certified Upper Model}
\label{subsec:DR-Fisher-partial-improvement}
\nw{The exact Sylvester model requires exact Lyapunov solutions and freezes
the current covariance. Finite Bellman sweeps remove the first requirement.
Certificate and Fisher bounds then control the omitted covariance change,
turning an e-modified quadratic into a computable upper model of the SAA cost
change. That model supplies the direction and a certified step; a scalar search
may extend it along the same stable ray.}

\subsubsection{Partial evaluation}
\label{subsubsec:DR-partial-evaluation}
\nw{Fix the batch across iterations, write
$\Omega_{\ell i,t}:=\Omega_{\ell i}^{(t)}(K_t)$, initialize
$P_{\ell i,t}^{(0)}=S^{(t)}$ and
$X_{\ell i,t}^{(0)}=\Omega_{\ell i,t}$, and perform $m_t\ge1$ sweeps}
\begin{align}
P_{\ell i,t}^{(j+1)}
&=S^{(t)}+F_{\ell i,t}^{\top}P_{\ell i,t}^{(j)}F_{\ell i,t},
\label{eq:DR-GPI-P-sweep}\\
X_{\ell i,t}^{(j+1)}
&=\Omega_{\ell i,t}+F_{\ell i,t}X_{\ell i,t}^{(j)}F_{\ell i,t}^{\top}.
\label{eq:DR-GPI-X-sweep}
\end{align}
Set
\begin{equation}
\widetilde P_{\ell i,t}:=P_{\ell i,t}^{(m_t)},
\qquad
\widetilde X_{\ell i,t}:=X_{\ell i,t}^{(m_t)}.
\label{eq:DR-GPI-tracked-PX}
\end{equation}
\nw{Substitution into \eqref{eq:DR-GPI-explicit-gradient} gives}
\begin{align*}
\widetilde g_t
=2\sum_{\ell=1}^{L}\sum_{i=1}^{N_\ell}w_{\ell i}
\Big[&
\left(R+B_{\ell i}^{(t)\top}\widetilde P_{\ell i,t}B_{\ell i}^{(t)}\right)
K^{(t)}\left(\widetilde X_{\ell i,t}+\bar V\right)
\nonumber\\[-1mm]
&+B_{\ell i}^{(t)\top}\widetilde P_{\ell i,t}
A_{\ell i}^{(t)}\widetilde X_{\ell i,t}\Big].
\end{align*}
\nw{The partially evaluated vectorized model is}
\begin{equation} \label{partialq}
{\widetilde{\mathfrak q}_t(d):=\widetilde{\mathbf g}_t^\top d
+\frac{1}{2} d^\top\widetilde{\mathbf M}_t d},
\qquad d:=\vecop(D),
\end{equation}
\nw{where $\widetilde{\mathbf g}_t:=\vecop(\widetilde g_t)$ and
$\vecop(\widetilde{\mathcal M}_t[D])=\widetilde{\mathbf M}_t d$, with}
\begin{align*}
& \widetilde{\mathcal M}_t[D]
= \nonumber \\ & 2\sum_{\ell=1}^{L}\sum_{i=1}^{N_\ell}w_{\ell i}
\left(R+B_{\ell i}^{(t)\top}\widetilde P_{\ell i,t}B_{\ell i}^{(t)}\right)
 D\left(\widetilde X_{\ell i,t}+\bar V\right).
\end{align*}
\nw{The recursions preserve
$\widetilde P_{\ell i,t}\succeq0$ and
$\widetilde X_{\ell i,t}\succeq\bar W$; hence, for
$\sigma_0:=\lambda_{\min}(R)\lambda_{\min}(\bar W)>0$,}
\(
\widetilde{\mathbf M}_t\succeq2\sigma_0 I_{d_K}.
\)

\subsubsection{Certified covariance bounds and Fisher region}
\label{subsubsec:DR-Fisher-region}

\nw{The model above freezes $X_K(\eta)$, whereas the new cost uses
$X_{K+D}(\eta)$. Fisher sensitivity controls this sole covariance mismatch,
but first the regional contraction certificates must bound $X_K(\eta)$ and
$P_K(\eta)$ uniformly.}
Consider the compact set \vspace{-9pt}
\begin{equation} \label{Scert}
\mathcal S_{\rm cert}
:=
\Big\{K:\vecop(K)\in\bigcup_{j=1}^{J_c}\mathcal C_j\Big\}.
\end{equation}
\nw{For each retained bundle $j$ and cell $\ell$, use the facet--vertex
representation}
\(
\mathcal X_\ell^{(j)}
=\{x:h_{\ell r}^{(j)\top}x\le1,\ 
r\in [m_{\ell,j}]\}
=\operatorname{conv}\{\xi_{\ell q}^{(j)}:
q\in [N_{\ell,j}^{x}] \}.
\)
\nw{Choose stored-geometry radii $r_{\ell j},R_{\ell j}>0$ such that
$r_{\ell j}\mathbb B_2\subseteq\mathcal X_\ell^{(j)}
\subseteq R_{\ell j}\mathbb B_2$, and set}
\(
\kappa_\ell
:=\max_j\frac{R_{\ell j}}{r_{\ell j}},
\lambda_\ell
:=\max_j\lambda_\ell^{(j)}<1,
\bar k_2
:=\max_{k\in\mathcal K_0}\|\operatorname{mat}(k)\|_2,
\bar b_\ell
:=\max_v\|B(\eta_{\ell v})\|_2,
\bar\omega_\ell
:=\|\bar W\|_2+\bar b_\ell^2\bar k_2^2\|\bar V\|_2,
\bar s
:=\|Q\|_2+\|R\|_2\bar k_2^2.
\)
\nw{Here, $\bar\omega_\ell$ and $\bar s$ bound
$\|\Omega_K(\eta)\|_2$ and $\|S_K\|_2$ for the evaluation-error analysis. Also
set}
\begin{equation}
\begin{aligned}
\bar\omega_{\ell j}
&:=\max_{v,r}\Big(h_{\ell r}^{(j)\top}\bar Wh_{\ell r}^{(j)}
+\bar k_2^2\|\bar V\|_2\big\|B(\eta_{\ell v})^\top h_{\ell r}^{(j)}\big\|_2^2\Big),\\
\bar s_{\ell j}
&:=\max_q\Big(\xi_{\ell q}^{(j)\top}Q\,\xi_{\ell q}^{(j)}
+\|R\|_2\bar k_2^2\|\xi_{\ell q}^{(j)}\|_2^2\Big),
\end{aligned}
\label{eq:DR-polyhedral-forcing}
\end{equation}
\nw{where $\eta_{\ell v}$ are the retained vertices of $\mathcal P_\ell$.}

\begin{lemma}
\label{prop:DR-certificate-computable-bounds}
Consider the closed-loop dynamics \eqref{eq:DR-closed-loop-GK}, let
$K_t\in\mathcal S_{\rm cert}$, $\eta\in\mathcal P_\ell$, and
$\mathcal J_t:=\{j:\vecop(K_t)\in\mathcal C_j\}$. Then, 
\begin{equation}
\|A_{K_t}(\eta)^k\|_2\le\kappa_\ell\lambda_\ell^k.
\label{eq:DR-certificate-power-bound}
\end{equation}
for every $k\ge0$. Moreover, for $j\in\mathcal J_t$, the current contraction factor and forcing
level
\begin{equation}
\begin{aligned}
\lambda_{\ell j}^{\rm cl}(K_t)
&:=\max_{v,r,q}\,h_{\ell r}^{(j)\top}M_{\ell v}G_{K_t}\xi_{\ell q}^{(j)}
\le\lambda_\ell^{(j)},\\
\bar\omega_{\ell j}(K_t)
&:=\max_{v,r}\,h_{\ell r}^{(j)\top}\Omega_{K_t}(\eta_{\ell v})h_{\ell r}^{(j)}
\le\bar\omega_{\ell j},
\end{aligned}
\label{eq:DR-current-contraction}
\end{equation}
 are such that the stationary
covariance $X_t(\eta)$ in \eqref{eq:DR-Fisher-stationary-X} and the value
matrix $P_{K_t}(\eta)$ in \eqref{eq:DR-pointwise-P} satisfy
\begin{equation}
\begin{aligned}
&\|X_t(\eta)\|_2
\le\bar x_\ell(K_t)
:=\min_{j\in\mathcal J_t}
\frac{R_{\ell j}^2\,\bar\omega_{\ell j}(K_t)}{1-\lambda_{\ell j}^{\rm cl}(K_t)^2}
\\ 
& \qquad \qquad \le\bar x_\ell^{\rm cert}  
:=\max_j\frac{R_{\ell j}^2\,\bar\omega_{\ell j}}{1-(\lambda_\ell^{(j)})^2},\\ &
\|P_{K_t}(\eta)\|_2 \le\bar p_\ell^{\rm cert}
:=\max_j\frac{\bar s_{\ell j}}{r_{\ell j}^2\big(1-(\lambda_\ell^{(j)})^2\big)}.
\end{aligned}
\label{eq:DR-certificate-PX-bounds}
\end{equation}
\end{lemma}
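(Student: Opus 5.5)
The plan is to work throughout with the polyhedral gauges $\|x\|_{\ell j}:=\max_r h_{\ell r}^{(j)\top}x$ of the sets $\mathcal X_\ell^{(j)}$. For the power bound, fix $j\in\mathcal J_t$. By Lemma~\ref{thm:DR-robust-stability}, $\vecop(K_t)\in\mathcal C_j$ gives $A_{K_t}(\eta)\mathcal X_\ell^{(j)}\subseteq\lambda_\ell^{(j)}\mathcal X_\ell^{(j)}$ for every $\eta\in\mathcal P_\ell$, hence $A_{K_t}(\eta)^k\mathcal X_\ell^{(j)}\subseteq(\lambda_\ell^{(j)})^k\mathcal X_\ell^{(j)}$. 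Now take any $x$ with $\|x\|_2=1$. Since $r_{\ell j}x\in\mathcal X_\ell^{(j)}$, its image satisfies $A^k r_{\ell j}x\in(\lambda_\ell^{(j)})^kR_{\ell j}\mathbb B_2$. This yields $\|A^k\|_2\le (R_{\ell j}/r_{\ell j})(\lambda_\ell^{(j)})^k\le\kappa_\ell\lambda_\ell^k$.

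For the contraction factor, the gauge of $A_{K_t}(\eta)x$ is $\max_r h_{\ell r}^{(j)\top}(\bar Y_1-N(\eta))G_{K_t}x$. This expression is affine in $\eta$ and convex in $x$, so its maximum over $\eta\in\mathcal P_\ell$ and $x\in\mathcal X_\ell^{(j)}$ is attained at vertices $\eta_{\ell v}$ and $\xi_{\ell q}^{(j)}$. That gives the pointwise contraction $\|A_{K_t}(\eta)x\|_{\ell j}\le\lambda_{\ell j}^{\rm cl}(K_t)\|x\|_{\ell j}$. The inequality $\lambda_{\ell j}^{\rm cl}\le\lambda_\ell^{(j)}$ is exactly the constraint set $\mathcal C_j$. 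Likewise, $h^\top\Omega_K(\eta)h$ is convex in $\eta$, since it is a constant plus $\|\bar V^{1/2}K^\top B(\eta)^\top h\|^2$ and $B$ is affine. Its maximum over the cell is therefore at a vertex, and $\|K\|_2\le\bar k_2$ gives $\bar\omega_{\ell j}(K_t)\le\bar\omega_{\ell j}$.

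For the covariance bound, I expand $X_t(\eta)=\sum_{k\ge0}A^k\Omega A^{k\top}$ with $A=A_t(\eta)$. The step I expect to be the main obstacle is converting facet-direction variances into a spectral-norm bound with the stated constant $R_{\ell j}^2$. For each facet normal $h=h_{\ell r}^{(j)}$, I use $h^\top A^k\Omega A^{k\top}h=\|\Omega^{1/2}A^{k\top}h\|^2$. Here $A^\top$ contracts the dual (polar) set $\operatorname{co}\{h_{\ell r}\}$ by $\lambda^{\rm cl}$, since the polar of an invariant set is invariant under the transpose. So $A^{k\top}h\in(\lambda^{\rm cl})^k\operatorname{co}\{h_{\ell r}\}$, and convexity of the quadratic bounds this term by $(\lambda^{\rm cl})^{2k}\max_r h_{\ell r}^\top\Omega h_{\ell r}$. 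This holds for the current $\eta$; I would bound by the vertex maximum $\bar\omega_{\ell j}(K_t)$ using the convexity in $\eta$ shown above. Summing the geometric series gives $\max_r h_r^\top X h_r\le\bar\omega_{\ell j}(K_t)/(1-(\lambda^{\rm cl})^2)$. Since $\operatorname{co}\{\pm h_r\}$, or the polar, contains the ball of radius $1/R_{\ell j}$, for any unit $u$ the vector $u/R_{\ell j}$ is a convex combination of the $h_r$. Convexity then gives $u^\top Xu\le R_{\ell j}^2\max_r h_r^\top Xh_r$. Taking the minimum over $j\in\mathcal J_t$, and then replacing the current quantities by the certified ones and maximizing over $j$, gives $\bar x_\ell^{\rm cert}$. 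If the polar only contains the ball one-sidedly, I would use $u^\top Xu=(-u)^\top X(-u)$, since the ball is symmetric.

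The bound on $P$ is dual to this. Write $P=\sum_k A^{k\top}S A^k$, so that $\xi^\top P\xi=\sum_k (A^k\xi)^\top S(A^k\xi)$. For $\xi\in\mathcal X_\ell^{(j)}$ we have $A^k\xi\in(\lambda_\ell^{(j)})^k\mathcal X_\ell^{(j)}$, and a convex quadratic on this set is maximized at scaled vertices. Hence the sum is at most $\bar s_{\ell j}/(1-(\lambda_\ell^{(j)})^2)$, using $\xi^\top S_K\xi\le\xi^\top Q\xi+\|R\|_2\bar k_2^2\|\xi\|^2$. Finally, every unit vector $u$ has $r_{\ell j}u\in\mathcal X_\ell^{(j)}$, so $u^\top Pu\le\bar s_{\ell j}/(r_{\ell j}^2(1-(\lambda_\ell^{(j)})^2))$, and maximizing over $j$ completes the argument.
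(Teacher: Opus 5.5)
Your proposal is correct. The power bound, the vertex reductions for $\lambda_{\ell j}^{\rm cl}(K_t)$ and $\bar\omega_{\ell j}(K_t)$, and the bound on $P_{K_t}(\eta)$ follow the paper's argument essentially step by step. All three rest on gauge contraction, vertex maximization of convex functions, and the sandwich $r_{\ell j}\mathbb B_2\subseteq\mathcal X_\ell^{(j)}\subseteq R_{\ell j}\mathbb B_2$.

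The covariance bound is where you take a genuinely different route. The paper stays in the primal picture:
\begin{itemize}
\item it writes $u^\top X_t u=\sum_k\max_{\|w\|_2\le1}(u^\top A^k\Omega^{1/2}w)^2$;
\item it bounds $|u^\top A^k y|\le\|A^ky\|_2\le R_{\ell j}\lambda^k g(y)$, using the same gauge inequality as in the power bound;
\item it brings in the forcing level by Cauchy--Schwarz, $g(\Omega^{1/2}w)\le\max_r\|\Omega^{1/2}h_{\ell r}^{(j)}\|_2$.
\end{itemize}
You instead pass to the polar $\mathcal X^\circ=\operatorname{co}\{h_{\ell r}^{(j)}\}$:
\begin{itemize}
\item you use that $A\mathcal X\subseteq\lambda\mathcal X$ implies $A^\top\mathcal X^\circ\subseteq\lambda\mathcal X^\circ$;
\item you obtain the facet-direction bound $\max_r h_{\ell r}^{(j)\top}X_t h_{\ell r}^{(j)}\le\bar\omega_{\ell j}(K_t)/(1-\lambda^2)$;
\item you convert it to a spectral bound via the inclusion reversal $R_{\ell j}^{-1}\mathbb B_2\subseteq\mathcal X^\circ$.
\end{itemize}
Both routes give exactly the constant $R_{\ell j}^2\bar\omega_{\ell j}(K_t)/(1-\lambda_{\ell j}^{\rm cl}(K_t)^2)$.

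Your route fits the transposed structure of the recursion $X=AXA^\top+\Omega$. It also explains why $h^\top\Omega h$ is the natural forcing quantity. As a by-product you get the facet-variance bound, which is at least as tight as what the spectral bound implies, since $\|h_{\ell r}^{(j)}\|_2\ge 1/R_{\ell j}$. The cost is the polar facts. The paper's route needs only the primal gauge inequalities it already has.

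One small precision: in general the polar is $\operatorname{co}(\{0\}\cup\{h_{\ell r}^{(j)}\})$. It equals $\operatorname{co}\{h_{\ell r}^{(j)}\}$ here because $\mathcal X_\ell^{(j)}$ is bounded. Either way, including $0$ does not change the vertex maximum of a positive semidefinite quadratic, and $R_{\ell j}^{-1}\mathbb B_2$ lies in the polar in full. So your fallback via $u^\top Xu=(-u)^\top X(-u)$ is unnecessary.
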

\begin{proof}
See Appendix~\ref{app:DR-certificate-computable-bounds}.  
\end{proof}

\nw{Lemma~\ref{prop:DR-certificate-computable-bounds} supplies the
current-policy covariance bound needed next. For
$K\in\mathcal S_{\rm cert}$, let $\bar x_\ell(K)$ denote
\eqref{eq:DR-certificate-PX-bounds} with $K_t=K$. At iteration $t$, set
$K=K_t$. The certified form $\mathfrak I_K$ dominates transition Fisher
information by Lemma~\ref{lem:DR-dominating-Fisher}.}

For $\eta\in\mathcal P$, define the whitened covariance level
\begin{equation} \label{zeta1}
\xi_K(\eta):=\lambda_{\max}\!\left(
\bar W^{-1/2}X_K(\eta)\bar W^{-1/2}\right)\ge1,
\end{equation}
\nw{because $X_K(\eta)\succeq\bar W$. The next lemma gives the finite
covariance perturbation before any budget or region is imposed.}

\begin{lemma}
\label{lem:DR-covariance-distortion}
Let $K,K+D\in\mathcal S_{\rm cert}$ and $d=\vecop(D)$. For every
$\eta\in\mathcal P$, define
\begin{equation*}
\phi_K(\eta;d):=\mathfrak I_K(\eta;D)
+2\sqrt{\mathfrak I_K(\eta;D)\bigl(\xi_K(\eta)-1\bigr)}.
\end{equation*}
If $\phi_K(\eta;d)<1$, then
\begin{equation}
\frac{X_K(\eta)}{1+\phi_K(\eta;d)}
\preceq X_{K+D}(\eta)
\preceq\frac{X_K(\eta)}{1-\phi_K(\eta;d)}.
\label{eq:DR-raw-covariance-sandwich}
\end{equation}
\end{lemma}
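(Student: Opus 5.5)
The plan is to compare $X_{K+D}(\eta)$ with scalar multiples of $X_K(\eta)$ through the Lyapunov fixed-point map of the perturbed closed loop. The perturbation will be bounded in the whitened $\bar W$-geometry, where the certified Fisher form $\mathfrak I_K$ of \eqref{eq:DR-certified-Fisher-form} appears naturally as a trace. Fix $\eta$ and abbreviate $A:=A_K(\eta)$, $B:=B(\eta)$, $E:=BD$, $X:=X_K(\eta)$, $\Omega:=\Omega_K(\eta)$. Also set $A':=A+E=A_{K+D}(\eta)$ and $\Omega':=\Omega_{K+D}(\eta)$. Since $K+D\in\mathcal S_{\rm cert}$, Lemma~\ref{prop:DR-certificate-computable-bounds} gives that $A'$ is Schur. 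Hence the map $\mathcal T'(Y):=\Omega'+A'YA'^\top$ is monotone and has the unique fixed point $X_{K+D}(\eta)$. The standard comparison principle then applies: if $\mathcal T'(Y)\preceq Y$ then $X_{K+D}\preceq Y$, and if $\mathcal T'(Y)\succeq Y$ then $X_{K+D}\succeq Y$. Both follow by iterating $\mathcal T'$ and using that $A'^k\to0$.

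\textbf{Step 1 (one-step defect).} Define $\Delta:=(\Omega'+A'XA'^\top)-(\Omega+AXA^\top)=\mathcal T'(X)-X$. Expanding gives
\begin{equation*}
\Delta=E(X+\bar V)E^\top+\bigl(E\,[X^{1/2}\ \bar V^{1/2}]\,Z^\top+Z\,[X^{1/2}\ \bar V^{1/2}]^\top E^\top\bigr),
\end{equation*}
where $Z:=[AX^{1/2}\ \ BK\bar V^{1/2}]$. I must check that the cross terms $EXA^\top$ and $E\bar VK^\top B^\top$ combine exactly in this form; this bookkeeping is the step I expect to require the most care. Two bounds then control $\Delta$.
\begin{itemize}
\item For the quadratic term, $E(X+\bar V)E^\top\preceq \tr\!\big(\bar W^{-1}E(X+\bar V)E^\top\big)\bar W$. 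By the Kronecker identity this trace equals $d^\top\mathbf F_K(\eta)d=\mathfrak I_K(\eta;D)$.
\item For the cross term, note $ZZ^\top=AXA^\top+BK\bar VK^\top B^\top=X-\bar W\preceq(\xi_K(\eta)-1)\bar W$, by the Lyapunov equation \eqref{eq:DR-Fisher-stationary-X} and the definition \eqref{zeta1}.
\end{itemize}

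\textbf{Step 2 (two-sided bound on $\Delta$).} Apply the matrix Young inequality $\pm(UZ^\top+ZU^\top)\preceq tUU^\top+t^{-1}ZZ^\top$ with $U:=E[X^{1/2}\ \bar V^{1/2}]$, and optimize over $t>0$. This bounds the cross term in absolute value by $2\sqrt{\mathfrak I_K(\eta;D)(\xi_K(\eta)-1)}\,\bar W$. Writing $\phi:=\phi_K(\eta;d)$, we obtain
\begin{equation*}
-\phi\,\bar W\preceq\Delta\preceq\phi\,\bar W,
\end{equation*}
using that the quadratic term is $\succeq0$ for the lower bound. The degenerate cases $\mathfrak I_K=0$ or $\xi_K=1$ are handled by a limit in $t$.

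\textbf{Step 3 (comparison).} For a scalar $c>0$ we have $\mathcal T'(cX)=c(X+\Delta)-(c-1)\Omega'$, and $\Omega'\succeq\bar W$.
\begin{itemize}
\item Upper bound: take $c=1/(1-\phi)>1$, which is valid since $\phi<1$. Then $\mathcal T'(cX)-cX\preceq(c\phi-(c-1))\bar W=0$, so $X_{K+D}\preceq X/(1-\phi)$.
\item Lower bound: take $c=1/(1+\phi)\le1$. Then $\mathcal T'(cX)-cX\succeq(-c\phi+(1-c))\bar W=0$, so $X_{K+D}\succeq X/(1+\phi)$.
\end{itemize}
Together these give \eqref{eq:DR-raw-covariance-sandwich}.

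The main obstacle is Step~1. The exact algebraic regrouping of the covariance derivative terms (the $\bar V$ terms from $\Omega'$ together with the $A$-cross terms) must yield a factorization whose ``$Z$'' part is controlled by $X-\bar W$. Only then do the constants match $\phi_K$ exactly.
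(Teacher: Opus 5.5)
Your proof is correct and follows the paper's argument. It uses the same one-step defect $\Delta=\mathcal T'(X)-X$; your regrouping of the cross terms checks out, since $[X^{1/2}\ \bar V^{1/2}]Z^\top=XA^\top+\bar VK^\top B^\top$. It rests on the same two bounds, $E(X+\bar V)E^\top\preceq\mathfrak I_K(\eta;D)\bar W$ and $X-\bar W\preceq(\xi_K(\eta)-1)\bar W$, which give $-\phi\bar W\preceq\Delta\preceq\phi\bar W$. It then uses $\Omega_{K+D}(\eta)\succeq\bar W$ in a Lyapunov comparison, as the paper does.

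The differences are only bookkeeping:
\begin{itemize}
\item Your $n\times 2n$ square-root factor gives $ZZ^\top=X-\bar W$ exactly, where the paper needs a projection inequality for its compressed factor.
\item Matrix Young replaces the paper's spectral-norm estimate.
\item You use $cX$ as a super- and sub-solution of $\mathcal T'$. The paper instead sums $X_{K+D}-X=\sum_k F'^k\Delta F'^{k\top}$ to get $-\phi X_{K+D}\preceq X_{K+D}-X\preceq\phi X_{K+D}$ and rearranges.
\end{itemize}
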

\begin{proof}
See Appendix~\ref{proof:lema7}.
\end{proof}

\nw{We now enforce its condition uniformly in two steps: a cellwise Fisher
budget converts the desired distortion into a Fisher level, and a vertex
envelope enforces that level over the support.}

\emph{Fisher budget.}
\nw{Prescribe the allowable distortion
$0<\bar\phi_\ell<1$ on cell $\mathcal P_\ell$ and define}
\begin{equation}
\begin{aligned}
c_\ell(K)&:=\frac{\bar x_\ell(K)}{\lambda_{\min}(\bar W)}-1,\\
\Delta_\ell^{\rm u}(K)&:=\frac12
\left(\sqrt{c_\ell(K)+\bar\phi_\ell}
-\sqrt{c_\ell(K)}\right)^2.
\end{aligned}
\label{eq:DR-cellwise-Fisher-budget}
\end{equation}
\nw{Using \eqref{zeta1}, $\xi_K(\eta)-1\le c_\ell(K)$ on $\mathcal P_\ell$. Thus,
$\mathfrak I_K(\eta;D)\le2\Delta_\ell^{\rm u}(K)$ implies}
\begin{equation}
\begin{aligned}
\phi_K(\eta;d)
&\le2\Delta_\ell^{\rm u}(K)
+2\sqrt{2c_\ell(K)\Delta_\ell^{\rm u}(K)}\\
&=\bar\phi_\ell<1.
\end{aligned}
\label{eq:DR-Fisher-budget-implication}
\end{equation}
\nw{Thus, $2\Delta_\ell^{\rm u}(K)$ is the certified Fisher level that
ensures $\phi_K(\eta;d)\le\bar\phi_\ell$.}

\emph{Fisher region.}
\nw{To remove the remaining realization dependence, let
$B_{\ell v}:=B(\eta_{\ell v})$. Equation~\eqref{eq:DR-certified-Fisher-form}
and $\|X_K(\eta)\|_2\le\bar x_\ell(K)$ give}
\begin{equation*}
\mathfrak I_K(\eta;D)
\le\bigl(\bar x_\ell(K)+\|\bar V\|_2\bigr)
\|\bar W^{-1/2}B(\eta)D\|_F^2.
\end{equation*}
\nw{The right-hand side is convex in the affine $B(\eta)$, so its cellwise
maximum occurs at a retained vertex. Define}
\begin{equation}
\mathcal G_{\ell v}(K;d):=\epsilon_I\|d\|_2^2
+\bigl(\bar x_\ell(K)+\|\bar V\|_2\bigr)
\|\bar W^{-1/2}B_{\ell v}D\|_F^2,
\label{eq:DR-certified-Fisher-functional}
\end{equation}
\nw{and its normalized support-wide envelope}
\begin{equation}
\Gamma_K(d):=
\max_{\ell\in[L],\,v\in[N_\ell^\theta]}
\frac{\mathcal G_{\ell v}(K;d)}
{2\Delta_\ell^{\rm u}(K)},
\label{eq:DR-normalized-Fisher-envelope}
\end{equation}
\nw{where $\epsilon_I>0$ bounds the region. If $\Gamma_K(d)\le1$, then for
every $\eta\in\mathcal P_\ell$, one has}
\begin{equation}
\mathfrak I_K(\eta;D)
\le\max_v\mathcal G_{\ell v}(K;d)
\le2\Delta_\ell^{\rm u}(K).
\label{eq:DR-full-Fisher-envelope-bound}
\end{equation}
Thus, \eqref{eq:DR-Fisher-budget-implication} gives
$\phi_K(\eta;d)\le\bar\phi_\ell<1$. 
\nw{Therefore, $\{d:\Gamma_K(d)\le1\}$ is the Fisher region: together with
$K+D\in\mathcal S_{\rm cert}$, it enforces the prescribed covariance
distortion for every admissible realization.}

Let $c_\ell^{\rm cert}$, $\Delta_\ell^{\rm u,cert}$, and
$\Gamma^{\rm cert}$ denote the quantities in
\eqref{eq:DR-cellwise-Fisher-budget}--\eqref{eq:DR-normalized-Fisher-envelope}
with $\bar x_\ell(K)$ replaced by $\bar x_\ell^{\rm cert}$, and let
$c_W:=\lambda_{\max}(\bar W)/\lambda_{\min}(\bar W)-1$. Since
$X_K(\eta)\succeq\bar W$ and \eqref{eq:DR-certificate-PX-bounds} give
$\lambda_{\max}(\bar W)\le\bar x_\ell(K)\le\bar x_\ell^{\rm cert}$, and
$\Delta_\ell^{\rm u}$ decreases in $c_\ell$,
\begin{equation}
\begin{aligned}
\Delta_\ell^{\rm u,cert}\le\Delta_\ell^{\rm u}(K)
&\le\Delta_{\ell,W}^{\rm u}
:=\tfrac12\big(\sqrt{c_W+\bar\phi_\ell}-\sqrt{c_W}\big)^2,\\
\Gamma_K(d)&\le\Gamma^{\rm cert}(d).
\end{aligned}
\label{eq:DR-budget-monotonicity}
\end{equation}
\nw{Hence, the current-gain Fisher region contains the certificate-wide one
and is less conservative whenever the current policy contracts more strongly.}

\subsubsection{Certified upper model}
\label{subsubsec:DR-upper-model}

The exact cost change uses $X_{K+D}(\eta)$, whereas the Sylvester model freezes
$X_K(\eta)$. We now define a candidate curvature correction. 
The next theorem proves that it bounds this omitted
covariance effect. 
For $K\in\mathcal S_{\rm cert}$ and $\eta\in\mathcal P$, it relies on the following definitions
\begin{align*}
\mathcal R_K(\eta)&:=R+B(\eta)^\top P_K(\eta)B(\eta),\\
\mathcal E_K(\eta)&:=\mathcal R_K(\eta)K
+B(\eta)^\top P_K(\eta)A(\eta),\\
\Psi_K(\eta)&:=\mathcal E_K(\eta)^\top
\mathcal R_K(\eta)^{-1}\mathcal E_K(\eta),
\end{align*}
where $\mathcal E_K(\eta)$ is the realization-wise Riccati residual. Define
\begin{equation}
\begin{aligned}
\mathbf A_K(\eta)&:=X_K(\eta)\otimes\mathcal R_K(\eta),\\
\psi_K(\eta)&:=\operatorname{tr}(\Psi_K(\eta)X_K(\eta)).
\end{aligned}
\label{eq:DR-pointwise-Fisher-quantities}
\end{equation}
and set
\[
a_K(\eta;d):=d^\top\mathbf A_K(\eta)d
=\operatorname{tr}(D^\top\mathcal R_K(\eta)DX_K(\eta)).
\]
Here, \(a_K(\eta;d)\) quantifies the quadratic effect of the policy displacement \(D\), weighted by the current stationary covariance \(X_K(\eta)\) and the realization-wise LQR curvature \(\mathcal R_K(\eta)\), while $\psi_K(\eta)$
measures the Riccati-residual energy. For $\eta\in\mathcal P_\ell$ (using any
containing cell on a shared boundary), define
\begin{equation}
\begin{aligned}
\beta_K(\eta):={}&
\frac{\sqrt{\psi_K(\eta)+\epsilon_\psi}}{1-\bar\phi_\ell}\\
&\times
\Bigl(\sqrt{2\Delta_\ell^{\rm u}(K)}
+2\sqrt{\xi_K(\eta)-1+\epsilon_\xi}\Bigr),
\end{aligned}
\label{eq:DR-residual-weighted-Fisher-beta}
\end{equation}
and, with
$\mathbf M_K(\eta):=2(X_K(\eta)+\bar V)\otimes\mathcal R_K(\eta)$, define
\begin{equation}
\begin{aligned}
\mathbf H_K^{\rm e}(\eta)
:={}&\mathbf M_K(\eta)+2\Big[
\frac{\bar\phi_\ell}{1-\bar\phi_\ell}\mathbf A_K(\eta)\\
&+\beta_K(\eta)
\big(\varsigma\mathbf F_K(\eta)
+\varsigma^{-1}\mathbf A_K(\eta)\big)
\Big].
\end{aligned}
\label{eq:DR-e-modified-curvature}
\end{equation}
The following theorem proves that the added curvature in
\eqref{eq:DR-e-modified-curvature} absorbs the frozen-covariance error.

\begin{theorem}
\label{prop:DR-e-model-validity}
Let $K,K+D\in\mathcal S_{\rm cert}$, $d=\vecop(D)$, and
$\Gamma_K(d)\le1$. Then, for every $\ell$ and
$\eta\in\mathcal P_\ell$, with the realization-wise Sylvester model
\[
\mathfrak q_K(\eta;D):=
\langle\nabla_K\bar J(K,\theta(\eta)),D\rangle
+\frac12d^\top\mathbf M_K(\eta)d,
\]
one has
\begin{equation}
\begin{aligned}
&\big|\bar J(K+D,\theta(\eta))-\bar J(K,\theta(\eta))
-\mathfrak q_K(\eta;D)\big|\\
&\quad\le
\frac{\phi_K(\eta;d)}{1-\phi_K(\eta;d)}
\Big(a_K(\eta;d)
+2\sqrt{\psi_K(\eta)a_K(\eta;d)}\Big)\\
&\quad\le
\frac12d^\top
\big(\mathbf H_K^{\rm e}(\eta)-\mathbf M_K(\eta)\big)d .
\end{aligned}
\label{eq:DR-Fisher-relative-model-gap}
\end{equation}
\end{theorem}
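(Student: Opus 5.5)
The plan is to write the exact cost change as the frozen-covariance model plus one covariance-mismatch term. That term is bounded with the sandwich of Lemma~\ref{lem:DR-covariance-distortion}, and the bound is then majorized by the curvature added in \eqref{eq:DR-e-modified-curvature}. Fix $\ell$, $\eta\in\mathcal P_\ell$, and abbreviate $X:=X_K(\eta)$, $X':=X_{K+D}(\eta)$, $P:=P_K(\eta)$, $\mathcal R:=\mathcal R_K(\eta)$, $\mathcal E:=\mathcal E_K(\eta)$.

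\emph{Step 1 (exact identity).} I would derive a performance-difference identity for the surrogate cost \eqref{eq:DR-pointwise-cost}. Using \eqref{eq:DR-pointwise-P} for $K$ and the Lyapunov equation \eqref{eq:DR-Fisher-stationary-X} for $K+D$, one has $\bar J(K+D)=\tr(S_{K+D}X')+\tr(S_{K+D}\bar V)$ and $\tr(PX')=\tr(P\Omega_{K+D})+\tr(A_{K+D}^\top PA_{K+D}X')$. Subtracting $\bar J(K)=\tr(P\Omega_K)+\tr(S_K\bar V)$ should give
\[
\bar J(K+D)-\bar J(K)=\tr\!\big(X'[D^\top\mathcal R D+\mathcal E^\top D+D^\top\mathcal E]\big)+\tr\!\big(\bar V[D^\top\mathcal R D+K^\top\mathcal R D+D^\top\mathcal R K]\big).
\]
Here the $\bar V$ terms combine because $\tr(PBK\bar VK^\top B^\top)+\tr(K^\top RK\bar V)=\tr(K^\top\mathcal RK\bar V)$. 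Comparing with the gradient \eqref{eq:DR-GPI-explicit-gradient} (single realization) and with $\mathbf M_K(\eta)$, the model $\mathfrak q_K(\eta;D)$ is exactly this expression with $X'$ replaced by $X$. Hence
\[
\bar J(K+D)-\bar J(K)-\mathfrak q_K=\tr\!\big((X'-X)Y\big),\qquad Y:=D^\top\mathcal RD+\mathcal E^\top D+D^\top\mathcal E .
\]
I expect this step to be the main obstacle. The bookkeeping of the measurement-noise terms must come out so that only $X'$, and not $\bar V$, carries the mismatch; I would first check it carefully against \eqref{eq:DR-GPI-explicit-gradient}.

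\emph{Step 2 (first inequality).} Since $\Gamma_K(d)\le1$, \eqref{eq:DR-full-Fisher-envelope-bound} and \eqref{eq:DR-Fisher-budget-implication} give $\mathfrak I_K\le 2\Delta_\ell^{\rm u}(K)$ and $\phi:=\phi_K(\eta;d)\le\bar\phi_\ell<1$. Lemma~\ref{lem:DR-covariance-distortion} then yields $-\tfrac{\phi}{1+\phi}X\preceq X'-X\preceq\tfrac{\phi}{1-\phi}X$. Write $X'-X=X^{1/2}TX^{1/2}$ with $\|T\|_2\le\phi/(1-\phi)$. For the positive semidefinite part, $|\tr(TX^{1/2}D^\top\mathcal RDX^{1/2})|\le\|T\|_2\,a_K(\eta;d)$. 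For the cross part, Hölder's inequality (spectral times nuclear norm) followed by factoring $X^{1/2}\mathcal E^\top\mathcal R^{-1/2}\cdot\mathcal R^{1/2}DX^{1/2}$ and Cauchy--Schwarz gives $|\tr(T X^{1/2}(\mathcal E^\top D+D^\top\mathcal E)X^{1/2})|\le 2\|T\|_2\|\mathcal R^{-1/2}\mathcal EX^{1/2}\|_F\|\mathcal R^{1/2}DX^{1/2}\|_F=2\|T\|_2\sqrt{\psi_K a_K}$. Together these give the middle expression of \eqref{eq:DR-Fisher-relative-model-gap}.

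\emph{Step 3 (majorization).} Use $\phi/(1-\phi)\le\bar\phi_\ell/(1-\bar\phi_\ell)$ on the $a_K$ term; this matches $\tfrac12 d^\top(2\tfrac{\bar\phi_\ell}{1-\bar\phi_\ell}\mathbf A_K)d$. For the cross term, bound $\phi/(1-\phi)\le\phi/(1-\bar\phi_\ell)$ and factor
\[
\phi=\sqrt{\mathfrak I_K}\big(\sqrt{\mathfrak I_K}+2\sqrt{\xi_K-1}\big)\le\sqrt{\mathfrak I_K}\big(\sqrt{2\Delta_\ell^{\rm u}(K)}+2\sqrt{\xi_K-1+\epsilon_\xi}\big).
\]
Increasing $\psi_K$ to $\psi_K+\epsilon_\psi$, the cross term is at most $2\beta_K(\eta)\sqrt{\mathfrak I_K\,a_K}$. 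Young's inequality gives $2\sqrt{\mathfrak I_K a_K}\le\varsigma\mathfrak I_K+\varsigma^{-1}a_K$, and $\mathfrak I_K=d^\top\mathbf F_Kd$ by \eqref{eq:DR-certified-Fisher-form}. This is exactly $\tfrac12 d^\top\big(2\beta_K(\varsigma\mathbf F_K+\varsigma^{-1}\mathbf A_K)\big)d$, and summing the two pieces gives the last line of \eqref{eq:DR-Fisher-relative-model-gap}.
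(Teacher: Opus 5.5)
Your proposal is correct and follows the paper's proof essentially step for step. You reduce the mismatch to $\operatorname{tr}(\Phi_K(\eta;D)(X_{K+D}(\eta)-X_K(\eta)))$, bound it with the covariance sandwich of Lemma~\ref{lem:DR-covariance-distortion} plus a nuclear-norm/Cauchy--Schwarz estimate for the residual cross term, and then majorize using $\phi\le\bar\phi_\ell$, $\sqrt{\mathfrak I_K}\le\sqrt{2\Delta_\ell^{\rm u}(K)}$ and Young's inequality, exactly as the paper does. The one difference is that you derive the realization-wise identity of Step~1 by hand, while the paper simply cites Lemma~\ref{lem:DR-two-channel-PDL}, whose proof is that same computation; so the step you flag as the main obstacle is already available.
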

\begin{proof}
See Appendix~\ref{app:e-model-validity}.
\end{proof}

The two inequalities separate covariance control from descent certification.
The Fisher form
\(
\mathfrak I_K(\eta;D)=d^\top\mathbf F_K(\eta)d
\)
measures the sensitivity of the closed-loop transition law to the policy
perturbation \(D\). Through
\(
\phi_K
=
\sqrt{\mathfrak I_K}
\bigl(\sqrt{\mathfrak I_K}+2\sqrt{\xi_K-1}\bigr),
\)
the Fisher budget \(\phi_K\leq\bar\phi_\ell<1\) bounds the relative change
from \(X_K(\eta)\) to \(X_{K+D}(\eta)\). The matrix
\(
\mathbf A_K(\eta)=X_K(\eta)\otimes R_K(\eta)
\)
then translates this covariance change into a cost error:
\(
a_K(\eta;d)=d^\top\mathbf A_K(\eta) d
\)
is the direct quadratic contribution, while
\(
2\sqrt{\psi_K(\eta) a_K(\eta;d)}
\)
captures its interaction with the Riccati residual. The covariance sandwich
yields the \(\mathbf A_K(\eta)\) correction, whereas the Fisher budget and Young's
inequality bound the residual interaction by
\(
\beta_K(\eta)(\varsigma\mathbf F_K(\eta)+\varsigma^{-1}\mathbf A_K(\eta)).
\)
Thus, \(\mathbf F_K(\eta)\) controls the policy-induced covariance variation, while
\(\mathbf A_K(\eta)\) quantifies its effect on the cost. Adding these corrections
makes \(\mathbf H_K^{\rm e}(\eta)\) an upper curvature for the omitted covariance
change, so any certified Fisher-feasible step with a negative e-model
guarantees descent of the actual objective.

To lift the realization-wise bound to the domain-randomized objectives, we
average the e-curvature over the design distribution and its fixed batch to get
\(
\mathbf H_K^{\rm e}
:=
\mathbb E_{\eta\sim\varrho}
\!\left[\mathbf H_K^{\rm e}(\eta)\right],
\,
\mathbf H_{N,K}^{\rm e}
:=
\frac{1}{N}\sum_{i=1}^{N}\mathbf H_K^{\rm e}(\eta_i).
\)
The corresponding fixed-batch e-model is
\begin{equation}
\mathfrak m_{N,K}^{\rm e}(d):=
\mathbf g_N(K)^\top d+\frac12d^\top\mathbf H_{N,K}^{\rm e}d.
\label{eq:DR-SAA-e-model}
\end{equation}
This residual-weighted correction is termed e-modified, following robust
adaptive control~\cite{emod}. 

\nw{For implementation, the finite Bellman sweeps in
\eqref{eq:DR-GPI-tracked-PX} provide
\(\widetilde P_{\ell i,t}\) and \(\widetilde X_{\ell i,t}\).
Replacing \(P_{K_t}(\eta_i)\) and \(X_{K_t}(\eta_i)\) by these tracked
matrices in each sampled curvature
\(\mathbf H_{K_t}^{\rm e}(\eta_i)\), and then averaging over the batch,
gives the computable curvature \(\widetilde{\mathbf H}_t^{\rm e}\).
The smooth e-model generates the direction, while its buffered version
certifies the step}
\begin{equation}
\begin{aligned}
&\mathbf H_t^{\rm e}:=
\widetilde{\mathbf H}_t^{\rm e}+\epsilon_t^H I,\qquad
\widetilde{\mathfrak m}_t(d):=
\widetilde{\mathbf g}_t^\top d+\tfrac12d^\top\mathbf H_t^{\rm e}d,\\
&\mathcal U_t(d):=
\widetilde{\mathfrak m}_t(d)+\epsilon_t^g\|d\|_2.
\end{aligned}
\label{eq:DR-Fisher-direction-model}
\end{equation}
\nw{The two buffers cover partial evaluation. Since
$\widetilde{\mathbf H}_t^{\rm e}\succeq\widetilde{\mathbf M}_t
\succeq2\sigma_0I$, $\mathcal U_t$ is $2\sigma_0$-strongly convex and
$\mathcal U_t(0)=0$; under exact evaluation both buffers vanish.}

\begin{lemma}
\label{lem:DR-SAA-evaluation-error}
For $\eta_{\ell i}\in\mathcal P_\ell$, the sweeps
\eqref{eq:DR-GPI-P-sweep}--\eqref{eq:DR-GPI-X-sweep} satisfy
\begin{align*}
\|\widetilde P_{\ell i,t}-P_{K_t}(\eta_{\ell i})\|_2
&\le(\bar s+\bar p_\ell^{\rm cert})
\kappa_\ell^2\lambda_\ell^{2m_t},\\
\|\widetilde X_{\ell i,t}-X_{K_t}(\eta_{\ell i})\|_2
&\le(\bar\omega_\ell+\bar x_\ell^{\rm cert})
\kappa_\ell^2\lambda_\ell^{2m_t}.
\end{align*}
Let $C_g^e$ and $C_H^e$ be Lipschitz constants of the realization-wise
gradient and e-curvature contributions with respect to $(P,X)$ on the
certified set, and define
\begin{equation*}
\epsilon_t^{g}:=
C_g^e\max_\ell c_\ell^{\rm ev}\kappa_\ell^2\lambda_\ell^{2m_t},
\qquad
\epsilon_t^{H}:=
C_H^e\max_\ell c_\ell^{\rm ev}\kappa_\ell^2\lambda_\ell^{2m_t},
\end{equation*}
with
$c_\ell^{\rm ev}:=
\bar s+\bar p_\ell^{\rm cert}
+\bar\omega_\ell+\bar x_\ell^{\rm cert}$.
Then,
\begin{equation}
\|\widetilde{\mathbf g}_t-\mathbf g_N(K_t)\|_2\le\epsilon_t^g,
\qquad
\|\widetilde{\mathbf H}_t^{\rm e}
-\mathbf H_{N,K_t}^{\rm e}\|_2\le\epsilon_t^H.
\label{eq:DR-SAA-evaluation-error-bounds}
\end{equation}
\end{lemma}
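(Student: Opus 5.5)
The plan is to write each sweep error as a geometric tail of the Lyapunov series and bound that tail with the power bound \eqref{eq:DR-certificate-power-bound}. The gradient and curvature bounds then follow from the assumed Lipschitz constants.

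Fix $\eta_{\ell i}\in\mathcal P_\ell$ and abbreviate $F:=F_{\ell i,t}=A_{K_t}(\eta_{\ell i})$, $P:=P_{K_t}(\eta_{\ell i})$, $X:=X_{K_t}(\eta_{\ell i})$.

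\textbf{Step 1: $P$-error.} Subtract the fixed-point equation \eqref{eq:DR-pointwise-P} from the sweep \eqref{eq:DR-GPI-P-sweep}. The error $E_j:=P^{(j)}_{\ell i,t}-P$ obeys $E_{j+1}=F^\top E_jF$, so
\[
E_{m_t}=(F^{m_t})^\top(S^{(t)}-P)F^{m_t}.
\]
Lemma~\ref{prop:DR-certificate-computable-bounds} applies because $K_t\in\mathcal S_{\rm cert}$. Its power bound gives $\|F^{m_t}\|_2\le\kappa_\ell\lambda_\ell^{m_t}$. The triangle inequality gives $\|S^{(t)}-P\|_2\le\bar s+\bar p_\ell^{\rm cert}$, using $\|S_K\|_2\le\bar s$ on $\mathcal K_0$ and the $P$-bound in \eqref{eq:DR-certificate-PX-bounds}. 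Together these yield $(\bar s+\bar p_\ell^{\rm cert})\kappa_\ell^2\lambda_\ell^{2m_t}$.

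\textbf{Step 2: $X$-error.} The covariance recursion is handled the same way. Its error satisfies $X^{(m_t)}_{\ell i,t}-X=F^{m_t}(\Omega_{\ell i,t}-X)(F^{m_t})^\top$. Bound $\|\Omega_{\ell i,t}\|_2\le\bar\omega_\ell$ using $\|B(\eta)\|_2\le\bar b_\ell$ on the cell, which holds by convexity of the norm in the affine $B(\eta)$ with the maximum at vertices, together with $\|K_t\|_2\le\bar k_2$. Bound $\|X\|_2\le\bar x_\ell^{\rm cert}$ by Lemma~\ref{prop:DR-certificate-computable-bounds}.

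\textbf{Step 3: lifting to gradient and curvature.} Each sampled gradient term in \eqref{eq:DR-GPI-explicit-gradient} is a function of $(P,X)$ with the data $(A,B,K_t)$ held fixed. So is each e-curvature term $\mathbf H^{\rm e}_{K_t}(\eta_i)$, which enters through $\mathcal R_K$, $\mathcal E_K$, $\psi_K$, $\xi_K$, $\mathbf A_K$, $\mathbf F_K$ and $\mathbf M_K$. By the assumed Lipschitz property on the certified set, the realization-wise error is at most $C_g^e$ (respectively $C_H^e$) times
\[
\|\widetilde P-P\|_2+\|\widetilde X-X\|_2\le c^{\rm ev}_\ell\kappa_\ell^2\lambda_\ell^{2m_t}.
\]
Averaging with weights $w_{\ell i}=1/N$ summing to one, then maximizing over $\ell$, gives \eqref{eq:DR-SAA-evaluation-error-bounds}. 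The vectorized gradient norm equals the Frobenius norm of the matrix gradient, so the Lipschitz constant should be read in that norm.

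\textbf{Main obstacle.} The delicate point is justifying that the Lipschitz constants are finite and that the tracked iterates remain in the region where they apply. The e-curvature involves $\sqrt{\psi_K+\epsilon_\psi}$, $\sqrt{\xi_K-1+\epsilon_\xi}$, $\sqrt{c_\ell(K)}$ and $\mathcal R_K^{-1}$. These are smooth only away from singular points, and the $\epsilon_\psi,\epsilon_\xi>0$ shifts and $\mathcal R_K\succeq R\succ0$ handle that.

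The tracked pairs $(\widetilde P,\widetilde X)$ must also lie in a compact set on which these maps are Lipschitz. Monotonicity of the sweeps from $S^{(t)}$ and $\Omega_{\ell i,t}$ gives $0\preceq\widetilde P\preceq P$ and $\bar W\preceq\widetilde X\preceq X$. These orderings place the tracked pairs in the compact certified box, which I would take as the domain for $C_g^e,C_H^e$.
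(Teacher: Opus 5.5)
Your proposal is correct and follows essentially the same route as the paper: you write the error as $P^{(m_t)}-P_{K_t}=(F^{m_t})^\top(S^{(t)}-P_{K_t})F^{m_t}$ (and analogously for $X$), bound it using \eqref{eq:DR-certificate-power-bound} and \eqref{eq:DR-certificate-PX-bounds}, then transfer it through the Lipschitz constants to each summand and average. Your extra observation that the monotone sweeps keep $0\preceq\widetilde P_{\ell i,t}\preceq P_{K_t}(\eta_{\ell i})$ and $\bar W\preceq\widetilde X_{\ell i,t}\preceq X_{K_t}(\eta_{\ell i})$ is a useful detail the paper leaves implicit, since it places the tracked pairs inside the domain on which $C_g^e$ and $C_H^e$ are valid.
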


\begin{proof}
Subtracting $P_{K_t}=S^{(t)}+F^\top P_{K_t}F$ from
\eqref{eq:DR-GPI-P-sweep} gives
$P^{(j)}-P_{K_t}=(F^\top)^j(P^{(0)}-P_{K_t})F^j$, where
$F=F_{\ell i,t}$. By \eqref{eq:DR-certificate-power-bound} and
\eqref{eq:DR-certificate-PX-bounds},
\[
\|P^{(m_t)}-P_{K_t}\|_2
\le\kappa_\ell^2\lambda_\ell^{2m_t}
(\bar s+\bar p_\ell^{\rm cert}).
\]
The covariance sweep is identical with $F$ in place of $F^\top$ and
$\|\Omega_{\ell i,t}\|_2\le\bar\omega_\ell$. The Lipschitz constants
transfer these realization-wise errors to each summand of
$\widetilde{\mathbf g}_t$ and $\widetilde{\mathbf H}_t^{\rm e}$, and
averaging does not increase the maximum.
\end{proof}

\nw{The compact certified set makes $C_g^e,C_H^e$ finite and offline
bounded. Combining the theorem and lemma gives the following computable SAA
upper model.}

\begin{proposition}
\label{prop:DR-certified-upper-model}
Let $\mathcal S_{\rm cert}$ and $\Gamma_{K_t}(d)$ be defined in
\eqref{Scert} and \eqref{eq:DR-normalized-Fisher-envelope}, respectively.
Then, for every $d$ with
$K_t+\operatorname{mat}(d)\in\mathcal S_{\rm cert}$ and
$\Gamma_{K_t}(d)\le1$,
\begin{equation}
f_N(K_t+\operatorname{mat}(d))-f_N(K_t)
\le\mathfrak m_{N,K_t}^{\rm e}(d)\le\mathcal U_t(d).
\label{eq:DR-SAA-majorization}
\end{equation}
where $\mathcal U_t(d)$ is defined in \eqref{eq:DR-Fisher-direction-model}. Moreover, for every $d$,
\begin{equation}
0\le\mathcal U_t(d)-\mathfrak m_{N,K_t}^{\rm e}(d)
\le2\epsilon_t^g\|d\|_2+\epsilon_t^H\|d\|_2^2.
\label{eq:DR-computed-e-model-error}
\end{equation}
\end{proposition}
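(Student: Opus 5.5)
The proof splits into the two inequalities of \eqref{eq:DR-SAA-majorization} and then the two-sided bound \eqref{eq:DR-computed-e-model-error}.

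\emph{First inequality of \eqref{eq:DR-SAA-majorization}.} Fix $d$ with $K_t+\operatorname{mat}(d)\in\mathcal S_{\rm cert}$ and $\Gamma_{K_t}(d)\le1$. We apply Theorem~\ref{prop:DR-e-model-validity} with $K=K_t$ at each batch realization $\eta_i$, where $\eta_i\in\mathcal P_\ell$ for some cell. Its hypotheses are exactly the assumed ones. Keeping only the upper side of the absolute value gives
\[
\bar J(K_t+D,\theta(\eta_i))-\bar J(K_t,\theta(\eta_i))
\le \langle\nabla_K\bar J(K_t,\theta(\eta_i)),D\rangle+\tfrac12 d^\top\mathbf H_{K_t}^{\rm e}(\eta_i)d,
\]
since $\mathfrak q_K+\tfrac12 d^\top(\mathbf H^{\rm e}-\mathbf M)d=\langle\nabla\bar J,D\rangle+\tfrac12 d^\top\mathbf H^{\rm e}d$. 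We average over $i$ with weights $1/N$. By \eqref{eq:DR-GPI-finite-objective}, the left side becomes $f_N(K_t+D)-f_N(K_t)$. By linearity of the gradient and the definition of $\mathbf H_{N,K}^{\rm e}$, the right side becomes $\mathbf g_N(K_t)^\top d+\tfrac12 d^\top\mathbf H_{N,K_t}^{\rm e}d=\mathfrak m^{\rm e}_{N,K_t}(d)$. Here $\mathbf g_N(K_t)=\vecop(\nabla f_N(K_t))$, consistent with \eqref{eq:DR-GPI-advantage-gradient}.

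\emph{Second inequality and \eqref{eq:DR-computed-e-model-error}.} This part is pure comparison and holds for every $d$. Write
\[
\mathcal U_t(d)-\mathfrak m^{\rm e}_{N,K_t}(d)
=(\widetilde{\mathbf g}_t-\mathbf g_N(K_t))^\top d+\epsilon_t^g\|d\|_2
+\tfrac12 d^\top(\widetilde{\mathbf H}^{\rm e}_t-\mathbf H^{\rm e}_{N,K_t})d+\tfrac12\epsilon_t^H\|d\|_2^2 .
\]
By Lemma~\ref{lem:DR-SAA-evaluation-error} and Cauchy--Schwarz, the linear error lies in $[-\epsilon_t^g\|d\|_2,\ \epsilon_t^g\|d\|_2]$. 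Because the difference of symmetric matrices has spectral norm at most $\epsilon_t^H$, the quadratic error lies in $[-\tfrac12\epsilon_t^H\|d\|_2^2,\ \tfrac12\epsilon_t^H\|d\|_2^2]$.

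Adding the buffers, the lower bound is $-\epsilon_t^g\|d\|_2+\epsilon_t^g\|d\|_2-\tfrac12\epsilon_t^H\|d\|_2^2+\tfrac12\epsilon_t^H\|d\|_2^2=0$. The upper bound is $2\epsilon_t^g\|d\|_2+\epsilon_t^H\|d\|_2^2$. This proves \eqref{eq:DR-computed-e-model-error}, and its lower bound gives $\mathfrak m^{\rm e}_{N,K_t}\le\mathcal U_t$, which completes \eqref{eq:DR-SAA-majorization}.

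\emph{Main obstacle: bookkeeping between the computed and exact quantities.} Three points need care. First, the gradient appearing in Theorem~\ref{prop:DR-e-model-validity} must be the same $\mathbf g_N(K_t)$ that Lemma~\ref{lem:DR-SAA-evaluation-error} compares to $\widetilde{\mathbf g}_t$. Second, the curvature $\widetilde{\mathbf H}_t^{\rm e}$, built from the tracked $\widetilde P,\widetilde X$, must be compared to $\mathbf H^{\rm e}_{N,K_t}$, built from the exact $P_{K_t},X_{K_t}$, at the same batch points. Third, the cell assignment $\ell$ used in $\beta_K$ must be consistent between the two; we fix any containing cell throughout, as the definition of $\beta_K$ allows. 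Once these identifications are made explicit, the argument is just averaging plus the triangle and Cauchy--Schwarz inequalities.
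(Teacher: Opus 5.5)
Your proof is correct and follows the paper's argument essentially step for step. You apply Theorem~\ref{prop:DR-e-model-validity} at each batch realization and average to get the first inequality, then split $\mathcal U_t-\mathfrak m_{N,K_t}^{\rm e}$ into a gradient-error-plus-buffer term in $[0,2\epsilon_t^g\|d\|_2]$ and a curvature term in $[0,\epsilon_t^H\|d\|_2^2]$ using Lemma~\ref{lem:DR-SAA-evaluation-error} and Cauchy--Schwarz. The paper does the same. The one hypothesis you should state explicitly, which the paper also leaves implicit, is $K_t\in\mathcal S_{\rm cert}$; Theorem~\ref{prop:DR-e-model-validity} needs it alongside $K_t+\operatorname{mat}(d)\in\mathcal S_{\rm cert}$.
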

\begin{proof}
See Appendix~\ref{proof:prop3}.
\end{proof}

\nw{Thus any Fisher-valid $d$ with $\mathcal U_t(d)<0$ decreases $f_N$
without a cost evaluation.}

\subsubsection{Check-and-correct Sylvester step}
\label{subsubsec:DR-steps}
\nw{The pieces now align: $\widetilde{\mathfrak m}_t$ supplies a Sylvester
direction, the controller polytope corrects it for full-support stability, and
$\mathcal U_t$ gives a Fisher-certified descent length. Only afterward may a
scalar search extend that step along the same stability-certified ray.}

\emph{Sylvester step and stability check.}
\nw{The unconstrained e-model minimizer is}
\begin{equation}
d_t^{\rm unc}:=-\left(\mathbf H_t^{\rm e}\right)^{-1}
\widetilde{\mathbf g}_t.
\label{eq:DR-Fisher-direction-proxy}
\end{equation}
\nw{Among the nonempty set
$\mathcal J_t:=\{j:k_t\in\widehat{\mathcal C}_{j,t}\}$, select the bundle
allowing the longest move along $d_t^{\rm unc}$:}
\begin{equation}
j_t^\star\in\arg\max_{j\in\mathcal J_t}
\widehat\alpha_{j,t}(d_t^{\rm unc}),
\label{eq:DR-bundle-selection}
\end{equation}
\nw{Use the previous bundle to break ties. If
\eqref{eq:DR-inner-enrichment-trigger} holds, enrich by
\eqref{eq:DR-inner-enrichment}, update the margin, and reselect. Set
$\widehat{\mathcal C}_t:=\widehat{\mathcal C}_{j_t^\star,t}$ and freeze the
selected bundle after $J_{\rm sw}$ actual switches. If
$k_t+d_t^{\rm unc}\in\widehat{\mathcal C}_t$, set
$d_t=d_t^{\rm unc}$. Otherwise solve the same e-model over the certified
polytope}
\begin{equation}
d_t:=\arg\min_{d:\,k_t+d\in\widehat{\mathcal C}_t}
\left\{
\widetilde{\mathbf g}_t^\top d+
\frac12d^\top\mathbf H_t^{\rm e}d
\right\}.
\label{eq:DR-Fisher-direction-predictor}
\end{equation}
\nw{Strict convexity makes the QP unnecessary when the unconstrained endpoint
is feasible. Fisher already shapes $d_t$ through $\mathbf H_t^{\rm e}$, but its
region is checked only after this stability correction.}

\emph{Fisher check and certified step length.}
\nw{Along the stable ray
$K_t+\alpha\operatorname{mat}(d_t)$, the upper model becomes}
\begin{equation}
\begin{aligned}
\mathcal U_t(\alpha d_t)&=-\alpha s_t+\frac12\alpha^2h_t,\\
s_t&:=-\widetilde{\mathbf g}_t^\top d_t-\epsilon_t^g\|d_t\|_2,\qquad
h_t:=d_t^\top\mathbf H_t^{\rm e}d_t .
\end{aligned}
\label{eq:DR-ray-model}
\end{equation}
\nw{Homogeneity gives
$\Gamma_{K_t}(\alpha d_t)=\alpha^2\Gamma_{K_t}(d_t)$, so the Fisher-valid
range is $\alpha\le\alpha_t^{\rm F}:=\Gamma_{K_t}(d_t)^{-1/2}$.
Minimizing \eqref{eq:DR-ray-model} over this range and the unit step gives}
\begin{equation}
\begin{aligned}
\alpha_t^{\rm c}&:=
\max\big\{0,\min\{1,\alpha_t^{\rm F},s_t/h_t\}\big\},\\
\Delta_t^{\rm c}&:=-\mathcal U_t(\alpha_t^{\rm c}d_t)
=\alpha_t^{\rm c}s_t-\frac12(\alpha_t^{\rm c})^2h_t .
\end{aligned}
\label{eq:DR-certified-step-length}
\end{equation}
\nw{No cost evaluation is needed. If $\alpha_t^{\rm F}\ge1$, Fisher does not
truncate the unit step; under exact evaluation and an unconstrained step,
$\alpha_t^{\rm c}=1$. Stop when $d_t=0$.}

\emph{Correction by safeguarded line search.}
\nw{When $\alpha_t^{\rm F}<1$, Fisher may truncate the step before the
stability boundary. The same ray remains certified up to}
\[
\alpha_t^{\max}:=
\max\{\alpha\ge0:k_t+\alpha d_t\in\widehat{\mathcal C}_t\}\ge1,
\]
\nw{which is a one-variable LP. Let
$\varphi_t(\alpha):=f_N(K_t+\alpha\operatorname{mat}(d_t))$ and, at the first
$t_\tau$ with $\Delta_{t_\tau}^{\rm c}>0$, set}
\begin{equation}
\tau_{\rm ls}:=\gamma_{\rm ls}\Delta_{t_\tau}^{\rm c},
\qquad 0<\gamma_{\rm ls}<1.
\label{eq:DR-LS-threshold}
\end{equation}
\nw{If no positive $\Delta_t^{\rm c}$ occurs, no search is used. Otherwise,
when $\alpha_t^{\rm F}<1$ and $\Delta_t^{\rm c}\ge\tau_{\rm ls}$, use at most
$n_{\rm ls}$ evaluations of any bounded scalar method on
$[\alpha_t^{\rm c},\alpha_t^{\max}]$, include $\alpha_t^{\rm c}$, and return
the best trial $\alpha_t$; in all other cases set
$\alpha_t=\alpha_t^{\rm c}$. Update}
\begin{equation}
K_{t+1}=K_t+\alpha_t\operatorname{mat}(d_t).
\label{eq:DR-ray-update}
\end{equation}
\nw{Every trial remains in the controller polytope, and inclusion of
$\alpha_t^{\rm c}$ preserves its certified decrease. The threshold makes the
number of search iterations finite.}

\begin{proposition}
\label{prop:DR-certified-descent}
Let $K_{t+1}$ be generated by the check-and-correct
policy update \eqref{eq:DR-Fisher-direction-proxy}--\eqref{eq:DR-ray-update}. Then:
\begin{enumerate}
\item[(a)] $s_t\ge2\sigma_0\|d_t\|_2^2-\epsilon_t^g\|d_t\|_2$ and
$\Delta_t^{\rm c}\ge\frac12\alpha_t^{\rm c}s_t\ge0$;
\item[(b)] $\vecop(K_{t+1})\in\widehat{\mathcal C}_t$, so $K_{t+1}$ is
contractive, hence stabilizing, for every $\eta\in\mathcal P$;
\item[(c)] $f_N(K_{t+1})-f_N(K_t)\le-\Delta_t^{\rm c}$;
\item[(d)] the line search is used in at most
$f_N(K_{t_\tau})/\tau_{\rm ls}$ iterations, and only when the Fisher check
fails.
\end{enumerate}
\end{proposition}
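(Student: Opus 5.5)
The four claims are proved in the order (a), (b), (c), (d); each uses only the construction of $d_t$ and results already established.

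\emph{Part (a).} First we show $-\widetilde{\mathbf g}_t^\top d_t\ge d_t^\top\mathbf H_t^{\rm e}d_t$.

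\begin{itemize}
\item If $d_t=d_t^{\rm unc}$, then $\mathbf H_t^{\rm e}d_t=-\widetilde{\mathbf g}_t$, so $-\widetilde{\mathbf g}_t^\top d_t=d_t^\top\mathbf H_t^{\rm e}d_t$.
\item Otherwise $d_t$ solves the strictly convex QP \eqref{eq:DR-Fisher-direction-predictor} over the convex set $\widehat{\mathcal C}_t-k_t$. This set contains $0$, because $k_t\in\widehat{\mathcal C}_t$ by the choice of $\mathcal J_t$. The variational inequality at the minimizer, tested against the feasible point $0$, gives $(\widetilde{\mathbf g}_t+\mathbf H_t^{\rm e}d_t)^\top(0-d_t)\ge0$. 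Rearranged, this is again $-\widetilde{\mathbf g}_t^\top d_t\ge d_t^\top\mathbf H_t^{\rm e}d_t$.
\end{itemize}

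Next, $\mathbf H_t^{\rm e}\succeq\widetilde{\mathbf M}_t\succeq2\sigma_0I$, since the added e-curvature terms are positive semidefinite and $\epsilon_t^H\ge0$. Hence $-\widetilde{\mathbf g}_t^\top d_t\ge h_t\ge2\sigma_0\|d_t\|_2^2$, and subtracting $\epsilon_t^g\|d_t\|_2$ gives the bound on $s_t$.

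For $\Delta_t^{\rm c}$, split on the sign of $s_t$.
\begin{itemize}
\item If $s_t\le0$, then $\alpha_t^{\rm c}=0$ and $\Delta_t^{\rm c}=0$.
\item If $s_t>0$, then $\alpha_t^{\rm c}\le s_t/h_t$ gives $\tfrac12(\alpha_t^{\rm c})^2h_t\le\tfrac12\alpha_t^{\rm c}s_t$. Therefore $\Delta_t^{\rm c}\ge\tfrac12\alpha_t^{\rm c}s_t\ge0$.
\end{itemize}

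\emph{Part (b).} Both $k_t$ and $k_t+d_t$ lie in $\widehat{\mathcal C}_t$. For the unconstrained step this is the acceptance test; for the corrected step it is the QP constraint. The chosen $\alpha_t$ lies in $[0,1]$, or in $[\alpha_t^{\rm c},\alpha_t^{\max}]$ when the line search is used. In either case $k_t+\alpha_td_t\in\widehat{\mathcal C}_t$, by convexity or by the definition of $\alpha_t^{\max}$.

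Proposition~\ref{thm:DR-bounded-complexity-inner-set} gives $\widehat{\mathcal C}_t\subseteq\mathcal C_{j_t^\star}$, and this inclusion is preserved under enrichment. By Lemma~\ref{thm:DR-robust-stability}, together with the fact that the redundancy removal leaves the feasible set unchanged, every $k\in\mathcal C_j$ satisfies $A_K(\eta)\mathcal X_\ell^{(j)}\subseteq\lambda_\ell^{(j)}\mathcal X_\ell^{(j)}$ for all $\eta\in\mathcal P_\ell$ and all $\ell$. Since $0\in\operatorname{int}\mathcal X_\ell^{(j)}$, the gauge of $\mathcal X_\ell^{(j)}$ is a norm that contracts under $A_K(\eta)$. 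Hence $\rho(A_K(\eta))<1$ on all of $\mathcal P$.

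\emph{Part (c).} Take $\alpha=\alpha_t^{\rm c}$.
\begin{itemize}
\item Admissibility: $K_t+\alpha_t^{\rm c}\operatorname{mat}(d_t)\in\mathcal S_{\rm cert}$ by (b). Since $\alpha_t^{\rm c}\le\alpha_t^{\rm F}$, homogeneity gives $\Gamma_{K_t}(\alpha_t^{\rm c}d_t)\le1$. The same holds trivially if $\Gamma_{K_t}(d_t)=0$.
\item Descent: Proposition~\ref{prop:DR-certified-upper-model} then gives $f_N(K_t+\alpha_t^{\rm c}\operatorname{mat}(d_t))-f_N(K_t)\le\mathcal U_t(\alpha_t^{\rm c}d_t)=-\Delta_t^{\rm c}$.
\item Line search: if it is used, it evaluates $\varphi_t$ at a set of trials that includes $\alpha_t^{\rm c}$ and returns the best one. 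So $f_N(K_{t+1})\le\varphi_t(\alpha_t^{\rm c})$, and the bound carries over.
\end{itemize}

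\emph{Part (d).} By construction the search is invoked only when $\alpha_t^{\rm F}<1$, that is, when the Fisher check fails, and only when $\Delta_t^{\rm c}\ge\tau_{\rm ls}$. By (c) and monotonicity, $f_N(K_{t_\tau})\ge\sum_{t\ge t_\tau}(f_N(K_t)-f_N(K_{t+1}))\ge\sum_{t\ge t_\tau}\Delta_t^{\rm c}$, using $f_N\ge0$ because $\bar J\ge0$. Each search iteration contributes at least $\tau_{\rm ls}$ to this sum, so there are at most $f_N(K_{t_\tau})/\tau_{\rm ls}$ of them. No search can occur before $t_\tau$: at those iterations $\Delta_t^{\rm c}\le0$, while $\tau_{\rm ls}>0$ is undefined or positive.

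The main obstacle is the admissibility of the certified step in (c). Proposition~\ref{prop:DR-certified-upper-model} requires both endpoints to lie in $\mathcal S_{\rm cert}$ and uses the bounds of Lemma~\ref{prop:DR-certificate-computable-bounds}, whose bundle set $\mathcal J_t$ is defined through the parent sets $\mathcal C_j$ rather than the inner polytopes. I would reconcile the two by noting that $\widehat{\mathcal C}_{j,t}\subseteq\mathcal C_j$, so $K_t\in\mathcal S_{\rm cert}$ at every iteration by induction from (b).
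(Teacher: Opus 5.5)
Your proof is correct and follows essentially the same route as the paper: the variational inequality tested at the feasible point $0$ for (a), convexity of $\widehat{\mathcal C}_t$ together with the definition of $\alpha_t^{\max}$ and $\widehat{\mathcal C}_t\subseteq\mathcal C_{j_t^\star}$ for (b), Proposition~\ref{prop:DR-certified-upper-model} at $\alpha_t^{\rm c}$ plus the best-trial rule for (c), and the telescoping bound with at least $\tau_{\rm ls}$ of decrease per search for (d). The details you add (the unconstrained case, $K_t\in\mathcal S_{\rm cert}$ via $\widehat{\mathcal C}_{j,t}\subseteq\mathcal C_j$, and why no search occurs before $t_\tau$) are left implicit in the paper, and the only nit is terminological: the gauge of a non-symmetric $\mathcal X_\ell^{(j)}$ is not a norm, but positive homogeneity, subadditivity, and equivalence to $\|\cdot\|_2$ already give Schur stability.
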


\begin{proof}
The feasible set of \eqref{eq:DR-Fisher-direction-predictor} is convex and
contains $0$ because $k_t\in\widehat{\mathcal C}_t$. First-order optimality at
$d_t$ in the direction $0-d_t$ gives
$-\widetilde{\mathbf g}_t^\top d_t
\ge d_t^\top\mathbf H_t^{\rm e}d_t
\ge2\sigma_0\|d_t\|_2^2$, which gives the bound on $s_t$. If $s_t\le0$, then
$\alpha_t^{\rm c}=0=\Delta_t^{\rm c}$. Otherwise
$\alpha_t^{\rm c}\le s_t/h_t$ gives
$(\alpha_t^{\rm c})^2h_t\le\alpha_t^{\rm c}s_t$, hence
$\Delta_t^{\rm c}\ge\frac12\alpha_t^{\rm c}s_t$, which proves part (a).
The inner set $\widehat{\mathcal C}_t$ is convex and contains $k_t$ because
$j_t^\star\in\mathcal J_t$. By the definition of $\alpha_t^{\max}$, it contains
$k_t+\alpha d_t$ for all $\alpha\in[0,\alpha_t^{\max}]$, and
$\alpha_t\le\alpha_t^{\max}$. Since
$\widehat{\mathcal C}_t\subseteq\mathcal C_{j_t^\star}$, contractive invariance
holds on every cell, proving part (b). Since
$\alpha_t^{\rm c}\le\min\{\alpha_t^{\rm F},\alpha_t^{\max}\}$, the
displacement $\alpha_t^{\rm c}d_t$ is certificate-feasible and satisfies
$\Gamma_{K_t}(\alpha_t^{\rm c}d_t)\le1$. Defining
$\varphi_t(\alpha):=f_N(K_t+\alpha\operatorname{mat}(d_t))$,
Proposition~\ref{prop:DR-certified-upper-model}  and \eqref{eq:DR-certified-step-length} gives
\[
\varphi_t(\alpha_t^{\rm c})-\varphi_t(0)
\le\mathcal U_t(\alpha_t^{\rm c}d_t)
=-\Delta_t^{\rm c}.
\]
The search returns the best evaluated point, and $\alpha_t^{\rm c}$ is among
them, so
$\varphi_t(\alpha_t)\le\varphi_t(\alpha_t^{\rm c})$, proving part (c).
The search runs only when $\alpha_t^{\rm F}<1$ and
$\Delta_t^{\rm c}\ge\tau_{\rm ls}$, so each such iteration decreases $f_N$
by at least $\tau_{\rm ls}$ by (c). By (c), $f_N$ never increases, and
$f_N\ge0$, proving part (d).
\end{proof}


\section{Convergence and Guarantees}
\label{sec:DR-convergence}

This section proceeds in four steps. 
Section~\ref{subsec:DR-population-improvement} gives the exact performance
difference and decomposes the Sylvester model into decrement and dispersion
terms. Sections~\ref{subsec:DR-SAA-stationarity}--\ref{subsec:DR-SAA-rate}
establish SAA stationarity, exact-evaluation convergence, and an eventual
linear rate with a dispersion and measurement-noise floor.
Sections~\ref{subsec:DR-uniform-SAA}--\ref{subsec:DR-true-cost} transfer these
results to the population surrogate, Problem~1, and the true plant.
Throughout, $\mathbb E_\varrho$ denotes expectation over
$\eta\sim\mathbb P_{\rm des}$.


For the sample-wise Sylvester operator \eqref{eq:DR-GPI-M-operator}, define its population counterpart
\begin{equation*}
\mathcal M_K[D]:=2\mathbb E_{\varrho}
\!\left[\mathcal R_K(\eta)D(X_K(\eta)+\bar V)\right],
\end{equation*}
{with matrix representation $\vecop(\mathcal M_K[D])=\mathbf M_K\vecop(D)$.}
Define also
\begin{equation*}
f(K):=\bar J_{\rm DR}(K),\, k:=\vecop(K), \,
\mathbf g(K):=\vecop(\nabla_K f(K)).
\end{equation*}
{With $\mathcal R_K$, $\mathcal E_K$, and $\Psi_K$ as in
Section~\ref{subsec:DR-Fisher-partial-improvement}, define the realization-wise
Hewer displacement and the population Sylvester step}
\begin{equation*}
D_K^{\rm H}(\eta):=-\mathcal R_K(\eta)^{-1}\mathcal E_K(\eta), \quad D_K^\star:=-{\operatorname{mat}(}\mathbf M_K^{-1}\mathbf g(K)).
\end{equation*}

After the finite bundle-switching and inner-enrichment budgets have been
exhausted, let $\widehat{\mathcal C}$ denote the fixed selected inner controller
set and define $\mathcal S:=\{K:\vecop(K)\in\widehat{\mathcal C}\}$. This set is
compact and uniformly stabilizing {, so the optimal values
$f_{\mathcal S}^*:=\min_{K\in\mathcal S}f(K)$ and
$f_{N,\mathcal S}^*:=\min_{K\in\mathcal S}f_N(K)$ are attained}. On $\mathcal S$, the Sylvester curvature
satisfies $2\sigma_0I\preceq\mathbf M_K$. {Since $\psi_K$ and $\xi_K$ are
bounded on the compact certified set and $\Delta_\ell^{\rm u}(K)\le\Delta_{\ell,W}^{\rm u}$
by \eqref{eq:DR-budget-monotonicity}, the coefficient $\beta_K$ is bounded, so
there is $\bar h_0<\infty$ with
$\|\mathbf H_K^{\rm e}(\eta)\|_2\le\bar h_0$ for all $K\in\mathcal S_{\rm cert}$
and $\eta\in\mathcal P$. The same bound holds for the population and batch
averages.}
For $K\in\mathcal S$, define the certificate-feasible direction set and the
population e-modified direction model
\begin{equation}
\begin{aligned}
\mathcal D_K^{\rm C}&:=\{d:k+d\in\widehat{\mathcal C}\},\\
\mathfrak m_K^{\rm e}(d)&:=\mathbf g(K)^\top d+\frac12d^\top\mathbf H_K^{\rm e}d.
\end{aligned}
\label{eq:DR-population-e-model}
\end{equation}
Here, $\mathbf H_K^{\rm e}$ is the population average of
\eqref{eq:DR-e-modified-curvature}. $\mathfrak m_K^{\rm e}$ is the population
counterpart of the batch e-model \eqref{eq:DR-SAA-e-model}, and
Theorem~\ref{prop:DR-e-model-validity} gives
$f(K+D)-f(K)\le\mathfrak m_K^{\rm e}(d)$ whenever {$K+D\in\mathcal S_{\rm cert}$
and} $\Gamma_K(d)\le1$.

Besides the bounds of Lemma~\ref{prop:DR-certificate-computable-bounds},
the error analysis uses
\(
\bar k_F:=\max_{k\in\mathcal K_0}\|\operatorname{mat}(k)\|_F,\,
\bar a_\ell:=\max_v\|A(\eta_{\ell v})\|_F.
\)
For later use, set
\(
\bar x:=\max_\ell\bar x_\ell^{\rm cert},
\bar p:=\max_\ell\bar p_\ell^{\rm cert},
\bar a:=\max_\ell\bar a_\ell,
\bar b:=\max_\ell\bar b_\ell,
\bar r:=\|R\|_2+\bar b^2\bar p,
\bar e:=\bar r\bar k_F+\bar b\bar p\bar a.
\)
Then, $\|\mathcal R_K(\eta)\|_2\le\bar r$ and
$\|\mathcal E_K(\eta)\|_F\le\bar e$ on the complete certified support.
The certificate-wide $\bar x$ enters only {uniform constants, such as
$r_{\rm F}$ below and $\bar\xi$ in Section~\ref{subsec:DR-SAA-rate}}. All
Fisher budgets use the current-policy bound $\bar x_\ell(K)$.
{The Fisher region contains a ball of fixed radius. Since
$\|\bar W^{-1/2}B_{\ell v}D\|_F\le\|\bar W^{-1/2}B_{\ell v}\|_2\|d\|_2$ and
$\bar x_\ell(K)\le\bar x_\ell^{\rm cert}$, \eqref{eq:DR-certified-Fisher-functional}
and \eqref{eq:DR-budget-monotonicity} give $\Gamma_K(d)\le\|d\|_2^2/r_{\rm F}^2$
for every $K\in\mathcal S_{\rm cert}$, with
\begin{equation}
r_{\rm F}^2:=\min_{\ell,v}
\frac{2\Delta_\ell^{\rm u,cert}}
{\epsilon_I+(\bar x_\ell^{\rm cert}+\|\bar V\|_2)\|\bar W^{-1/2}B_{\ell v}\|_2^2}>0.
\label{eq:DR-Fisher-radius}
\end{equation}
Hence, $\alpha_t^{\rm F}\ge r_{\rm F}/\|d_t\|_2$ along the iteration.}

\subsection{Population improvement and dispersion}
\label{subsec:DR-population-improvement}

We first state the exact population performance difference and then
characterize the corresponding frozen quadratic through realization-wise
dispersion. 

\begin{lemma}
\label{lem:DR-two-channel-PDL}
For certified $K,K'=K+\Delta$, one has
\begin{equation}
\begin{aligned}
f(K')-f(K)
&=\mathbb E_{\varrho}\!\left[
\operatorname{tr}\!\big(\Phi_K(\eta;\Delta)X_{K'}(\eta)\big)\right.\\
&\hspace{18mm}\left.
+\operatorname{tr}\!\big(\Xi_K(\eta;\Delta)\bar V\big)
\right].
\end{aligned}
\label{eq:DR-two-channel-PDL}
\end{equation}
where
\begin{align*}
\Phi_K(\eta;\Delta)
&:=(\Delta-D_K^{\rm H}(\eta))^\top\mathcal R_K(\eta)
 (\Delta-D_K^{\rm H}(\eta))-\Psi_K(\eta),\\
\Xi_K(\eta;\Delta)
&:=(\Delta+K)^\top\mathcal R_K(\eta)(\Delta+K)
-K^\top\mathcal R_K(\eta)K.
\end{align*}
\end{lemma}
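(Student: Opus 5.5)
The plan is to prove the identity realization by realization and then take $\mathbb E_\varrho$. Fix $\eta$ and suppress it from the notation. Write $P:=P_K(\eta)$, $A_{K'}:=A_{K'}(\eta)$, $\mathcal R:=\mathcal R_K(\eta)$ and $\mathcal E:=\mathcal E_K(\eta)$. Both $K$ and $K'$ are certified, so Lemma~\ref{prop:DR-certificate-computable-bounds} makes $A_K(\eta)$ and $A_{K'}(\eta)$ Schur for every $\eta\in\mathcal P$. Hence $P_K,P_{K'},X_{K'}$ are well defined and uniformly bounded on the support, and exchanging $\mathbb E_\varrho$ with the finite algebraic manipulations below is justified by dominated convergence.

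\emph{Step 1 (split the surrogate cost).} From \eqref{eq:DR-pointwise-cost}, I will write
\begin{align*}
\bar J(K')-\bar J(K)
={}&\operatorname{tr}\big((P_{K'}-P)\Omega_{K'}\big)
+\operatorname{tr}\big(P(\Omega_{K'}-\Omega_K)\big)\\
&+\operatorname{tr}\big((S_{K'}-S_K)\bar V\big).
\end{align*}
The first term will produce the $X_{K'}$ channel. The last two terms will combine into the $\bar V$ channel.

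\emph{Step 2 (the $X_{K'}$ channel).} Subtracting the Lyapunov equation \eqref{eq:DR-pointwise-P} for $K$ from the one for $K'$ gives
\[
P_{K'}-P=A_{K'}^\top(P_{K'}-P)A_{K'}+Y,\qquad Y:=S_{K'}+A_{K'}^\top PA_{K'}-P.
\]
Since $A_{K'}$ is Schur, $P_{K'}-P=\sum_{k\ge0}(A_{K'}^\top)^kYA_{K'}^k$. Pairing this series with $\Omega_{K'}$ and using the series form of \eqref{eq:DR-Fisher-stationary-X} for $X_{K'}$ yields
\[
\operatorname{tr}\big((P_{K'}-P)\Omega_{K'}\big)=\operatorname{tr}(YX_{K'}).
\]
Next I expand $Y$ using $A_{K'}=A_K+B\Delta$, $S_{K'}=S_K+\Delta^\top RK+K^\top R\Delta+\Delta^\top R\Delta$, and $P=S_K+A_K^\top PA_K$. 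This gives
\[
Y=\Delta^\top\mathcal R\Delta+\Delta^\top\mathcal E+\mathcal E^\top\Delta .
\]
Completing the square with $D_K^{\rm H}=-\mathcal R^{-1}\mathcal E$ gives $Y=(\Delta-D_K^{\rm H})^\top\mathcal R(\Delta-D_K^{\rm H})-\mathcal E^\top\mathcal R^{-1}\mathcal E=\Phi_K(\eta;\Delta)$.

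\emph{Step 3 (the $\bar V$ channel).} By \eqref{eq:DR-Omega}, $\Omega_{K'}-\Omega_K=B(K'\bar VK'^\top-K\bar VK^\top)B^\top$. Cyclicity of the trace then gives $\operatorname{tr}(P(\Omega_{K'}-\Omega_K))=\operatorname{tr}\big((K'^\top B^\top PBK'-K^\top B^\top PBK)\bar V\big)$. Adding $\operatorname{tr}((K'^\top RK'-K^\top RK)\bar V)$ collects the curvature $R+B^\top PB=\mathcal R$. The result is $\operatorname{tr}(\Xi_K(\eta;\Delta)\bar V)$ with $K'=K+\Delta$. Summing Steps 2 and 3 and taking $\mathbb E_\varrho$ gives \eqref{eq:DR-two-channel-PDL}.

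The main obstacle is Step 2: getting the algebraic expansion of $Y$ exactly right. One must check that the cross terms reassemble into $\mathcal E=\mathcal RK+B^\top PA$, and not into an expression involving $A_K$ or $K'$. This requires using $P=S_K+A_K^\top PA_K$ to cancel the zeroth-order part. Everything else is trace cyclicity and the Lyapunov series duality.
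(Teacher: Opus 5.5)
Your proof is correct and follows essentially the paper's route. You use the same three-term split, pair the same operator $Y=S_{K'}+A_{K'}^\top P_KA_{K'}-P_K$ with $X_{K'}$, reassemble the cross terms into $\mathcal E_K$ via $B^\top P_KA_K=B^\top P_KA+B^\top P_KBK$, and complete the same two squares. The only cosmetic difference is that you get $\operatorname{tr}(YX_{K'})$ from the Lyapunov series for $P_{K'}-P_K$, whereas the paper gets it directly from the duality $\operatorname{tr}(P_K\Omega_{K'})=\operatorname{tr}\big((P_K-A_{K'}^\top P_KA_{K'})X_{K'}\big)$ and so never needs $P_{K'}$.
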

\begin{proof}
See Appendix~\ref{app:DR-two-channel-PDL}.    
\end{proof}

Define the improvement potential and channel-wise dispersion by
\begin{align*}
\mathfrak n(K)
&:=\mathbb E_{\eta\sim\varrho}\!\left[
\operatorname{tr}\!\big(\Psi_K(\eta)X_K(\eta)\big)
+\operatorname{tr}\!\big(K^\top\mathcal R_K(\eta)K\bar V\big)
\right],\\
\mathfrak h(K;D)
&:=\mathbb E_{\eta\sim\varrho}\!\Big[
\Big\|\mathcal R_K(\eta)^{1/2}
(D-D_K^{\rm H}(\eta))X_K(\eta)^{1/2}\Big\|_F^2
\Big]\\
&\quad+\mathbb E_{\eta\sim\varrho}\!\Big[
\Big\|\mathcal R_K(\eta)^{1/2}(D+K)\bar V^{1/2}\Big\|_F^2
\Big],
\end{align*}
and let $\mathfrak h^\star(K):=\min_D\mathfrak h(K;D)$.

\begin{lemma}
\label{lem:DR-dispersion-identities}
For every certified $K$ and every $D$, with $d=\vecop(D)$,
$d_K^\star:=\vecop(D_K^\star)$, and
$d_K^{\rm e}:=-(\mathbf H_K^{\rm e})^{-1}\mathbf g(K)$:

\noindent (a) the Sylvester model satisfies
\begin{align}
& \mathfrak q_K(D):=\langle\nabla f(K),D\rangle 
+\tfrac12\langle D,\mathcal M_K[D]\rangle \nonumber \\ &
=\mathfrak h(K;D)-\mathfrak n(K).
\label{gK}
\end{align}
\noindent (b) the dispersion and the improvement potential satisfy
\begin{align}
&
\mathfrak n(K)
=\delta^\star(K)+\mathfrak h^\star(K),\,
\delta^\star(K):=\tfrac12\mathbf g(K)^\top\mathbf M_K^{-1}\mathbf g(K).
\label{eq:DR-unregularized-decrement} \\ &\mathfrak h(K;D)
=\mathfrak h^\star(K)
+\tfrac12(d-d_K^\star)^\top\mathbf M_K(d-d_K^\star),
\label{eq:DR-dispersion-completion}
\end{align}
 \noindent (c) $\mathfrak h^\star(K)=0$ if and only if
$D_K^{\rm H}(\eta)=D_K^\star$ for $\varrho$-almost every $\eta$ and
$(D_K^\star+K)\bar V=0$;

\noindent (d) the e-model \eqref{eq:DR-population-e-model} satisfies
\begin{equation}
\mathfrak m_K^{\rm e}(d)=-\delta_{\rm e}(K)
+\tfrac12(d-d_K^{\rm e})^\top\mathbf H_K^{\rm e}(d-d_K^{\rm e}),
\label{eq:DR-e-model-completion}
\end{equation}
with $\delta_{\rm e}(K):=\frac12\mathbf g(K)^\top(\mathbf H_K^{\rm e})^{-1}\mathbf g(K)
\in[0,\delta^\star(K)]$. If, moreover,
$\mathbf H_K^{\rm e}\preceq\kappa_{\rm e}\mathbf M_K$, then
$\delta_{\rm e}(K)\ge\delta^\star(K)/\kappa_{\rm e}$.
\end{lemma}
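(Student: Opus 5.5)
The plan is to first write the population gradient and Sylvester operator in realization-wise Riccati-residual form, and then reduce every claim to completing the square in a weighted inner product.

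\textbf{Gradient and operator.} Averaging \eqref{eq:DR-GPI-explicit-gradient} over $\varrho$ with exact $P_K,X_K$ and grouping terms via $\mathcal E_K=\mathcal R_KK+B^\top P_KA$ gives
\[
\nabla f(K)=2\mathbb E_\varrho\big[\mathcal E_K(\eta)X_K(\eta)+\mathcal R_K(\eta)K\bar V\big].
\]
Using $\mathcal E_K=-\mathcal R_KD_K^{\rm H}$, this becomes
\[
\langle\nabla f(K),D\rangle=2\mathbb E_\varrho\big[\tr(D^\top\mathcal R_K(-D_K^{\rm H})X_K)+\tr(D^\top\mathcal R_KK\bar V)\big].
\]
The operator gives
\[
\tfrac12\langle D,\mathcal M_K[D]\rangle=\mathbb E_\varrho\big[\tr(D^\top\mathcal R_KDX_K)+\tr(D^\top\mathcal R_KD\bar V)\big].
\]

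\textbf{Part (a).} Expand $\mathfrak h(K;D)$. The $X_K$ channel equals
\[
\tr\big((D-D^{\rm H})^\top\mathcal R_K(D-D^{\rm H})X_K\big),
\]
and its constant term is $\tr(D^{{\rm H}\top}\mathcal R_KD^{\rm H}X_K)=\tr(\Psi_KX_K)$, because $D^{{\rm H}\top}\mathcal R_KD^{\rm H}=\mathcal E_K^\top\mathcal R_K^{-1}\mathcal E_K=\Psi_K$. The $\bar V$ channel equals $\tr((D+K)^\top\mathcal R_K(D+K)\bar V)$, with constant term $\tr(K^\top\mathcal R_KK\bar V)$. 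The two constant terms together are exactly $\mathfrak n(K)$. The cross terms reproduce $\langle\nabla f,D\rangle$ and the pure quadratic terms reproduce $\tfrac12\langle D,\mathcal M_K[D]\rangle$. Hence $\mathfrak q_K(D)=\mathfrak h(K;D)-\mathfrak n(K)$.

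\textbf{Part (b).} Since $\mathfrak q_K$ is a strongly convex quadratic with Hessian $\mathbf M_K\succeq 2\sigma_0 I$, completing the square gives
\[
\mathfrak q_K(d)=-\delta^\star(K)+\tfrac12(d-d_K^\star)^\top\mathbf M_K(d-d_K^\star).
\]
Combining this with (a), $\mathfrak h(K;D)=\mathfrak n-\delta^\star+\tfrac12(\cdot)^\top\mathbf M_K(\cdot)$. Minimizing over $D$ yields $\mathfrak h^\star=\mathfrak n-\delta^\star$, which gives both identities in (b).

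\textbf{Part (c).} Here $\mathfrak h^\star=\mathfrak h(K;D_K^\star)$ is a sum of nonnegative expectations of squared Frobenius norms. It vanishes iff each integrand vanishes $\varrho$-a.s. Since $\mathcal R_K\succ0$ and $X_K\succeq\bar W\succ0$, the first integrand vanishes iff $D_K^{\rm H}(\eta)=D_K^\star$. The second vanishes iff $(D_K^\star+K)\bar V^{1/2}=0$, which is equivalent to $(D_K^\star+K)\bar V=0$. For the converse, if these conditions hold then $D_K^\star$ attains zero, so $\mathfrak h^\star=0$.

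\textbf{Part (d).} The completion \eqref{eq:DR-e-model-completion} follows by completing the square, since $\mathbf H_K^{\rm e}\succeq\mathbf M_K\succ0$. The added terms in \eqref{eq:DR-e-modified-curvature} are PSD: $\mathbf A_K$ and $\mathbf F_K$ are Kronecker products of PSD factors, and the coefficients are nonnegative. Löwner monotonicity of the inverse then gives $(\mathbf H_K^{\rm e})^{-1}\preceq\mathbf M_K^{-1}$, so $0\le\delta_{\rm e}\le\delta^\star$. If $\mathbf H_K^{\rm e}\preceq\kappa_{\rm e}\mathbf M_K$, then $(\mathbf H_K^{\rm e})^{-1}\succeq\kappa_{\rm e}^{-1}\mathbf M_K^{-1}$, which gives the lower bound $\delta_{\rm e}(K)\ge\delta^\star(K)/\kappa_{\rm e}$.

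\textbf{Main obstacle.} The step most likely to fail is the gradient rewriting in the first paragraph. It must be verified that the population gradient matches $2\mathbb E[\mathcal E_KX_K+\mathcal R_KK\bar V]$ exactly. In particular, the $\bar V$ contribution from $\Omega_K$ and $S_K$ must combine into $\mathcal R_KK\bar V$, with no leftover $B^\top P_KA\bar V$ term. I would check this directly against \eqref{eq:DR-GPI-explicit-gradient}, whose $\bar V$ factor multiplies only $(R+B^\top PB)K$.
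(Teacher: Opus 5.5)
Your proposal is correct and follows the paper's proof essentially step for step. It uses the same expansion of the two squared channels of $\mathfrak h(K;D)$ via $\mathcal R_K D_K^{\rm H}=-\mathcal E_K$ and $D_K^{{\rm H}\top}\mathcal R_K D_K^{\rm H}=\Psi_K$, and the same gradient identity $\mathcal E_KX_K+\mathcal R_KK\bar V=\mathcal R_KK(X_K+\bar V)+B^\top P_KAX_K$; this identity matches \eqref{eq:DR-GPI-explicit-gradient} exactly and so settles the obstacle you flagged. Parts (b)--(d) likewise rest on the same square completions and L\"owner-inverse monotonicity, and your check that the added terms in $\mathbf H_K^{\rm e}$ are positive semidefinite supplies a detail the paper leaves implicit.
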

\begin{proof}
See Appendix~\ref{app:DR-dispersion-identities}.    
\end{proof}

\subsection{SAA stationarity and whole-sequence convergence}
\label{subsec:DR-SAA-stationarity}
\label{subsec:DR-SAA-convergence}
The fixed-SAA analysis rests on two facts from Section~\ref{subsubsec:DR-steps}:
every iteration keeps the gain certified and decreases $f_N$ by at least
$\Delta_t^{\rm c}$ (Proposition~\ref{prop:DR-certified-descent}). {Let
$\bar D:=\max\{\|k-k'\|_2:k,k'\in\bigcup_j\mathcal C_j\}$, so that
$\|d_t\|_2\le\bar D$, and let $\bar h:=\bar h_0+2\sup_t\epsilon_t^H$, so that
$\|\mathbf H_t^{\rm e}\|_2\le\bar h$ by Lemma~\ref{lem:DR-SAA-evaluation-error}.}
By Proposition~\ref{prop:DR-certified-descent}(d), the line search is used in
finitely many iterations. Let $T_{\rm fix}$ be an index after which the
certificate $\widehat{\mathcal C}$ is fixed and the search is no longer used.
{The accuracy of partial evaluation is measured by the radii
$\epsilon_t^g,\epsilon_t^H$ of Lemma~\ref{lem:DR-SAA-evaluation-error}, which
bound the gradient and curvature errors in
\eqref{eq:DR-SAA-evaluation-error-bounds}. We call the evaluation exact when
$\epsilon_t^g=\epsilon_t^H=0$, that is, when the Lyapunov equations
\eqref{eq:DR-GPI-P-sweep}--\eqref{eq:DR-GPI-X-sweep} are solved to their fixed
points.}

\begin{theorem}
\label{thm:DR-SAA-convergence}
Suppose {the radii in \eqref{eq:DR-SAA-evaluation-error-bounds} satisfy}
$\epsilon_t^g\to0$ and $\sup_t\epsilon_t^H<\infty$. Then {, with
$\Delta_t^{\rm c}$ and $d_t$ as in \eqref{eq:DR-certified-step-length} and
\eqref{eq:DR-Fisher-direction-predictor},}
$\sum_t\Delta_t^{\rm c}<\infty$, $d_t\to0$, and every accumulation point
$K_\star$ of the iterates is constrained stationary for the fixed SAA problem
\begin{equation}
0\in
\mathbf g_N(K_\star)+
\mathsf N_{\mathcal S}(k_\star),
\qquad k_\star=\vecop(K_\star).
\label{eq:DR-SAA-accumulation-stationarity}
\end{equation}
Moreover, the Fisher truncation is inactive for all large $t$:
$\alpha_t^{\rm F}\ge1$, with $\alpha_t^{\rm F}$ as in
Section~\ref{subsubsec:DR-steps} {, so the Fisher check passes and the line
search is not used after finitely many iterations}. If, in addition, the fixed-batch evaluation is exact
after $T_{\rm fix}$, then {$\alpha_t=1$ for all large $t$}, so
$K_{t+1}=K_t+\operatorname{mat}(d_t)$, with $d_t$
the certified Sylvester step \eqref{eq:DR-Fisher-direction-predictor}. The
whole sequence converges to a constrained stationary point of $f_N$.
\end{theorem}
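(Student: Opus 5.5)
The argument is a sufficient-decrease analysis for a projected quasi-Newton method on a compact convex set, carried out for $t\ge T_{\rm fix}$. After $T_{\rm fix}$ the set $\widehat{\mathcal C}$ is fixed and convex.

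\textbf{Summability and $d_t\to0$.} By Proposition~\ref{prop:DR-certified-descent}(c), $f_N(K_{t+1})\le f_N(K_t)-\Delta_t^{\rm c}$. Since $f_N\ge0$, telescoping gives $\sum_t\Delta_t^{\rm c}\le f_N(K_0)<\infty$, so $\Delta_t^{\rm c}\to0$. To deduce $d_t\to0$, bound $\Delta_t^{\rm c}$ from below using Proposition~\ref{prop:DR-certified-descent}(a): $\Delta_t^{\rm c}\ge\tfrac12\alpha_t^{\rm c}s_t$ and $s_t\ge2\sigma_0\|d_t\|^2-\epsilon_t^g\|d_t\|$. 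Fix $\delta>0$ and suppose $\|d_t\|\ge\delta$ infinitely often. For large $t$ we have $\epsilon_t^g\le\sigma_0\delta$, hence $s_t\ge\sigma_0\|d_t\|^2$. Then
\[
\alpha_t^{\rm c}\ge\min\{1,\ r_{\rm F}/\bar D,\ s_t/h_t\}\ge\min\{1,\ r_{\rm F}/\bar D,\ \sigma_0/\bar h\}=:\underline\alpha>0,
\]
using $\alpha_t^{\rm F}\ge r_{\rm F}/\|d_t\|$ from \eqref{eq:DR-Fisher-radius} and $h_t\le\bar h\|d_t\|^2$. This gives $\Delta_t^{\rm c}\ge\tfrac12\underline\alpha\sigma_0\delta^2$ infinitely often, contradicting summability. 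Hence $d_t\to0$.

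\textbf{Stationarity of accumulation points.} The optimality condition of the QP \eqref{eq:DR-Fisher-direction-predictor} is
\[
0\in\widetilde{\mathbf g}_t+\mathbf H_t^{\rm e}d_t+\mathsf N_{\widehat{\mathcal C}}(k_t+d_t).
\]
It also holds when the unconstrained step is taken, since the normal cone then contains $0$. Take a subsequence $k_t\to k_\star$. Along it, $d_t\to0$, $\|\mathbf H_t^{\rm e}\|\le\bar h$, and $\|\widetilde{\mathbf g}_t-\mathbf g_N(K_t)\|\le\epsilon_t^g\to0$ by Lemma~\ref{lem:DR-SAA-evaluation-error}. Moreover $\mathbf g_N$ is continuous on the compact stabilizing set. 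Outer semicontinuity of the normal-cone map of a closed convex set then gives \eqref{eq:DR-SAA-accumulation-stationarity}.

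\textbf{Fisher truncation becomes inactive.} Since $\|d_t\|\to0$, eventually $\|d_t\|\le r_{\rm F}$, so $\alpha_t^{\rm F}\ge r_{\rm F}/\|d_t\|\ge1$. The line-search branch requires $\alpha_t^{\rm F}<1$, so it stops being used.

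\textbf{Exact evaluation.} With $\epsilon^g=\epsilon^H=0$, first-order optimality gives $s_t=-\widetilde{\mathbf g}_t^\top d_t\ge h_t$, so $s_t/h_t\ge1$. Together with $\alpha_t^{\rm F}\ge1$ this yields $\alpha_t^{\rm c}=1=\alpha_t$. The iteration is then $k_{t+1}=k_t+d_t$, a scaled projected-Newton (proximal-gradient-type) map with metric $\mathbf H_t^{\rm e}\in[2\sigma_0 I,\bar h I]$.

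\textbf{Whole-sequence convergence (main obstacle).} Subsequential convergence does not automatically give convergence of the whole sequence. My first attempt uses $\|k_{t+1}-k_t\|=\|d_t\|\to0$, so the set of accumulation points is compact and connected, and consists of stationary points on which $f_N$ is constant. I would then argue that this set is a single point. I expect this to need either isolatedness of stationary points or a Kurdyka–Łojasiewicz property. The latter is available because $f_N$ is real-analytic, being a rational function of $K$ through Lyapunov solutions, and $\widehat{\mathcal C}$ is polyhedral, so $f_N+\iota_{\widehat{\mathcal C}}$ is KL. The standard Attouch–Bolte framework would then finish the argument:
\begin{itemize}
\item sufficient decrease, $f_N(K_{t+1})\le f_N(K_t)-\sigma_0\|d_t\|^2$;
\item the relative-error bound $\operatorname{dist}(0,\partial(f_N+\iota)(k_{t+1}))\le C\|d_t\|$, which follows from the optimality condition together with Lipschitz continuity of $\mathbf g_N$;
\item finite length, $\sum_t\|d_t\|<\infty$, and hence convergence of the whole sequence.
\end{itemize}
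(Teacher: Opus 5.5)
Your proposal is correct and follows the paper's proof essentially step by step. The steps match: summability from Proposition~\ref{prop:DR-certified-descent}(c), then the uniform lower bound $\alpha_t^{\rm c}\ge\min\{1,r_{\rm F}/\bar D,\sigma_0/\bar h\}$ giving $d_t\to0$, then closedness of the normal-cone graph for stationarity, then $\|d_t\|\le r_{\rm F}$ to deactivate the Fisher truncation, then $s_t\ge h_t$ for unit steps under exact evaluation, and finally the Attouch--Bolte--Svaiter KL argument for whole-sequence convergence. The only nuance is in the KL step: the paper justifies it by semialgebraicity (rational Lyapunov solutions on a polytope), which your ``rational function'' remark supplies more directly than the ``real-analytic'' phrasing.
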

\begin{proof} See Appendix~\ref{app:DR-SAA-convergence}. \end{proof}

\begin{remark}
\label{rem:DR-evaluation-conditions}
{
The
first part of the theorem, stationarity of accumulation points, places no
restriction on the dispersion of the realization-wise Riccati targets.}
{Exact evaluation is needed for the unit step: for a nonzero unconstrained
step, $s_t=h_t-\epsilon_t^g\|d_t\|_2<h_t$ whenever $\epsilon_t^g>0$, so the
certified length stays below one under partial evaluation.}
\end{remark}

\subsection{{Linear rate with a dispersion floor}}
\label{subsec:DR-SAA-rate}
The rate is obtained in two steps: a bound on the SAA gap by the Sylvester
decrement and the dispersion, and a lower bound on the certified decrease by
the Sylvester decrement. Let $\mathfrak n_N$, $\mathfrak h_N$,
$\mathfrak h_N^\star$, and $\delta_N^\star$ denote the quantities of
Section~\ref{subsec:DR-population-improvement} with $\mathbb E_\varrho$ replaced
by the batch average, and let
$\mathbf M_{N,K}:=\frac1N\sum_{i=1}^N\mathbf M_K(\eta_i)$. Lemmas~\ref{lem:DR-two-channel-PDL} and~\ref{lem:DR-dispersion-identities}
hold for any probability law on $\mathcal P$ in place of $\mathbb P_{\rm des}$,
so they hold for the empirical law of the batch; in particular,
$\mathfrak n_N=\delta_N^\star+\mathfrak h_N^\star$ with
$\delta_N^\star(K)=\frac12\mathbf g_N(K)^\top\mathbf M_{N,K}^{-1}\mathbf g_N(K)$.
Define the covariance ratio and the curvature ratio
\begin{equation}
\bar\xi:=\frac{\bar x}{\lambda_{\min}(\bar W)},\,
\kappa_{\rm e}:=\sup_{K\in\mathcal S}\lambda_{\max}\big(\mathbf M_{N,K}^{-1/2}
\mathbf H_{N,K}^{\rm e}\mathbf M_{N,K}^{-1/2}\big).
\label{eq:DR-rate-constants}
\end{equation}
Since $X_K(\eta)\succeq\bar W$, $\mathbf H_{N,K}^{\rm e}\succeq\mathbf M_{N,K}\succeq2\sigma_0I$,
and $\|\mathbf H_{N,K}^{\rm e}\|_2\le\bar h_0$, both ratios are finite and
$\bar\xi\ge1$, $1\le\kappa_{\rm e}\le\bar h_0/(2\sigma_0)$.

\begin{proposition}
\label{prop:DR-dispersion-link}
For every $K\in\mathcal S$,
\begin{equation}
f_N(K)-f_{N,{\mathcal S}}^*\le\bar\xi\big(\delta_N^\star(K)+\mathfrak h_N^\star(K)\big).
\label{eq:DR-dispersion-gap}
\end{equation}
\end{proposition}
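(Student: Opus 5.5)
The plan is a gradient-dominance-type argument built on the performance-difference identity of Lemma~\ref{lem:DR-two-channel-PDL}, applied to the empirical batch law, as allowed by the remark preceding the proposition. Fix $K\in\mathcal S$ and let $K'=K+\Delta\in\mathcal S$ be a minimizer of $f_N$ over $\mathcal S$; it exists by compactness. Lemma~\ref{lem:DR-two-channel-PDL} with the batch average gives
\[
f_N(K)-f_N(K')=\frac1N\sum_{i=1}^N\Big[\tr\big(-\Phi_K(\eta_i;\Delta)X_{K'}(\eta_i)\big)-\tr\big(\Xi_K(\eta_i;\Delta)\bar V\big)\Big].
\]
The goal is to bound each summand from above by $\bar\xi$ times the corresponding summand of the improvement potential $\mathfrak n_N(K)$. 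Once that is done, identity \eqref{eq:DR-unregularized-decrement} for the batch law gives $\mathfrak n_N=\delta_N^\star+\mathfrak h_N^\star$, which is \eqref{eq:DR-dispersion-gap}.

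\emph{State channel.} We have $-\Phi_K(\eta;\Delta)=\Psi_K(\eta)-(\Delta-D_K^{\rm H})^\top\mathcal R_K(\Delta-D_K^{\rm H})\preceq\Psi_K(\eta)$. Since $X_{K'}(\eta)\succeq0$, this yields $\tr(-\Phi_KX_{K'})\le\tr(\Psi_KX_{K'})$. Next use $X_{K'}(\eta)\preceq\|X_{K'}(\eta)\|_2I\preceq\bar x\,\lambda_{\min}(\bar W)^{-1}\bar W\preceq\bar\xi X_K(\eta)$. Here the middle step uses Lemma~\ref{prop:DR-certificate-computable-bounds}, valid because $K'\in\mathcal S\subseteq\mathcal S_{\rm cert}$, and the last step uses $X_K(\eta)\succeq\bar W$. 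Together with $\Psi_K\succeq0$ this gives $\tr(\Psi_KX_{K'})\le\bar\xi\,\tr(\Psi_KX_K)$. This is the classical Hewer/Fazel gradient-dominance comparison with the distribution-mismatch factor replaced by the certified ratio $\bar\xi$.

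\emph{Noise channel.} Expand
\[
-\Xi_K=K^\top\mathcal R_KK-(\Delta+K)^\top\mathcal R_K(\Delta+K)\preceq K^\top\mathcal R_KK.
\]
This gives $-\tr(\Xi_K\bar V)\le\tr(K^\top\mathcal R_KK\bar V)\le\bar\xi\,\tr(K^\top\mathcal R_KK\bar V)$, because $\bar\xi\ge1$ and the trace is nonnegative. Summing both channels and averaging over the batch gives $f_N(K)-f_{N,\mathcal S}^*\le\bar\xi\,\mathfrak n_N(K)$, as required.

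\emph{Main obstacle.} The delicate step is the covariance comparison $X_{K'}\preceq\bar\xi X_K$. It needs the uniform certified bound $\|X_{K'}(\eta)\|_2\le\bar x$ for every batch realization and for the comparison gain $K'$, not only the current one. It also needs care that Lemma~\ref{lem:DR-two-channel-PDL} holds with the empirical law and with $X_{K'}$ (the new gain's covariance) in the state channel. I would verify that the appendix proof of that lemma is pointwise in $\eta$ before averaging, so that replacing $\mathbb E_\varrho$ by the batch mean is legitimate.
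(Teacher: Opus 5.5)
Your proposal is correct and is essentially the paper's own proof. The paper likewise takes a minimizer of $f_N$ over $\mathcal S$ and applies the two-channel performance-difference identity realization by realization. It drops the completed squares to get $-\Phi_K\preceq\Psi_K$ and $-\Xi_K\preceq K^\top\mathcal R_KK$, uses $X_{K'}(\eta)\preceq\bar xI\preceq\bar\xi X_K(\eta)$ together with $\bar\xi\ge1$, and averages over the batch to reach $\bar\xi\,\mathfrak n_N(K)=\bar\xi(\delta_N^\star+\mathfrak h_N^\star)$. The two points you flag are exactly what the paper relies on, and both hold as you describe: the certified covariance bound for the comparison gain $K'\in\mathcal S\subseteq\mathcal S_{\rm cert}$, and the pointwise-in-$\eta$ validity of the identity.
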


\begin{proof}
Let $K^\circ$ minimize $f_N$ over $\mathcal S$ and set $D:=K^\circ-K$. By
Lemma~\ref{lem:DR-two-channel-PDL}, applied for each realization,
$\bar J(K^\circ,\theta(\eta))-\bar J(K,\theta(\eta))
=\operatorname{tr}(\Phi_K(\eta;D)X_{K^\circ}(\eta))+\operatorname{tr}(\Xi_K(\eta;D)\bar V)$,
with $\Phi_K\succeq-\Psi_K$ and $\Xi_K\succeq-K^\top\mathcal R_KK$ by the completed
squares. Since $\Psi_K\succeq0$, $X_{K^\circ}(\eta)\preceq\bar xI$, and
$X_K(\eta)\succeq\bar W\succeq\lambda_{\min}(\bar W)I$, we have
$X_{K^\circ}(\eta)\preceq\bar\xi X_K(\eta)$, so, {using $\bar\xi\ge1$,}
$\bar J(K,\theta(\eta))-\bar J(K^\circ,\theta(\eta))
\le\bar\xi\big[\operatorname{tr}(\Psi_KX_K)+\operatorname{tr}(K^\top\mathcal R_KK\bar V)\big]$.
Averaging over the batch gives
$f_N(K)-f_N(K^\circ)\le\bar\xi\,\mathfrak n_N(K)=\bar\xi(\delta_N^\star+\mathfrak h_N^\star)$,
which is \eqref{eq:DR-dispersion-gap}.
\end{proof}

{Thus, the SAA gap is controlled by two terms: the decrement $\delta_N^\star$
of the common Sylvester step, which the algorithm can remove, and the
dispersion $\mathfrak h_N^\star$ of the realization-wise Riccati targets around
that step, which no common gain can remove. The next theorem shows that, once
the full Sylvester step is taken, each iteration removes a fixed fraction of
$\delta_N^\star$.}

\begin{theorem}
\label{thm:DR-SAA-linear}
{Assume exact fixed-batch evaluation after $T_{\rm fix}$, let $K_\star$ be
the limit given by Theorem~\ref{thm:DR-SAA-convergence}, and suppose that
$\vecop(K_\star)$ lies in the interior of $\widehat{\mathcal C}$. Let
\begin{equation}
\mu_N:=\frac1{\bar\xi\kappa_{\rm e}}\in(0,1],\qquad
\epsilon_{N,\rm disp}:=\bar\xi\sup_{K\in\mathcal S}\mathfrak h_N^\star(K).
\label{eq:DR-rate-and-floor}
\end{equation}
Then, there is $T\ge T_{\rm fix}$ such that the SAA gap
$e_t:=f_N(K_t)-f_{N,\mathcal S}^*$ satisfies, for all $t\ge T$,}
\begin{equation}
e_{t+1}\le(1-\mu_N)e_t+\mu_N\epsilon_{N,\rm disp}{.}
\label{eq:DR-SAA-linear-rate}
\end{equation}
{Hence, $e_t\le(1-\mu_N)^{t-T}e_T+\epsilon_{N,\rm disp}$ for $t\ge T$, and
$\limsup_te_t\le\epsilon_{N,\rm disp}$. In particular, if
$\mathfrak h_N^\star\equiv0$ on $\mathcal S$, the gap converges linearly to
zero.}
\end{theorem}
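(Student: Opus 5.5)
The argument combines three earlier results: the certified decrease of Proposition~\ref{prop:DR-certified-descent}(c), the completion identity of Lemma~\ref{lem:DR-dispersion-identities}(d) applied to the empirical law of the batch, and the gap bound of Proposition~\ref{prop:DR-dispersion-link}.

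\textbf{Step 1: all safeguards switch off eventually.} Under exact evaluation after $T_{\rm fix}$, Theorem~\ref{thm:DR-SAA-convergence} gives $K_t\to K_\star$, $d_t\to0$, $\alpha_t^{\rm F}\ge1$ for large $t$, and no line search. Since $\vecop(K_\star)\in\operatorname{int}\widehat{\mathcal C}$, there is a ball $\mathbb B_2(k_\star,\rho)\subseteq\widehat{\mathcal C}$. I will show that the unconstrained minimizer $d_t^{\rm unc}=-(\mathbf H_t^{\rm e})^{-1}\mathbf g_N(K_t)$ tends to zero. Here $\epsilon_t^g=\epsilon_t^H=0$, so $\widetilde{\mathbf g}_t=\mathbf g_N(K_t)$ and $\mathbf H_t^{\rm e}=\mathbf H_{N,K_t}^{\rm e}$. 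By continuity, $\mathbf g_N(K_t)\to\mathbf g_N(K_\star)$. Stationarity in~\eqref{eq:DR-SAA-accumulation-stationarity} at an interior point, where the normal cone is $\{0\}$, forces $\mathbf g_N(K_\star)=0$. Together with $\mathbf H_t^{\rm e}\succeq2\sigma_0I$, this gives $\|d_t^{\rm unc}\|_2\le\|\mathbf g_N(K_t)\|_2/(2\sigma_0)\to0$. Hence, for large $t$, $k_t+d_t^{\rm unc}\in\mathbb B_2(k_\star,\rho)$. The remaining requirement is membership in the selected inner set, which is fixed after $T_{\rm fix}$. So $d_t=d_t^{\rm unc}$, and the certified length~\eqref{eq:DR-certified-step-length} equals $1$ because $s_t=h_t$ and $\alpha_t^{\rm F}\ge1$. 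Let $T$ be the index after which all of this holds.

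\textbf{Step 2: per-step decrease.} For $t\ge T$, Proposition~\ref{prop:DR-certified-descent}(c) gives $f_N(K_{t+1})-f_N(K_t)\le\mathfrak m^{\rm e}_{N,K_t}(d_t^{\rm unc})$. Lemma~\ref{lem:DR-dispersion-identities}(d) holds for the empirical law, so this right-hand side equals $-\delta_{\rm e,N}(K_t)$. By the definition of $\kappa_{\rm e}$ in~\eqref{eq:DR-rate-constants}, $\mathbf H^{\rm e}_{N,K}\preceq\kappa_{\rm e}\mathbf M_{N,K}$ on $\mathcal S$. Therefore
\[
\delta_{\rm e,N}(K_t)\ge\delta_N^\star(K_t)/\kappa_{\rm e},
\qquad\text{so}\qquad
e_{t+1}\le e_t-\delta_N^\star(K_t)/\kappa_{\rm e}.
\]

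\textbf{Step 3: close the recursion.} Proposition~\ref{prop:DR-dispersion-link} gives $e_t\le\bar\xi(\delta_N^\star(K_t)+\mathfrak h_N^\star(K_t))$. Rearranging,
\[
\delta_N^\star(K_t)\ge e_t/\bar\xi-\mathfrak h_N^\star(K_t)\ge e_t/\bar\xi-\epsilon_{N,\rm disp}/\bar\xi.
\]
Substituting into Step 2 yields $e_{t+1}\le(1-\mu_N)e_t+\mu_N\epsilon_{N,\rm disp}$, which is~\eqref{eq:DR-SAA-linear-rate}. The condition $\mu_N\le1$ follows from $\bar\xi,\kappa_{\rm e}\ge1$.

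Unrolling the recursion gives
\[
e_t\le(1-\mu_N)^{t-T}e_T+\epsilon_{N,\rm disp}\bigl(1-(1-\mu_N)^{t-T}\bigr),
\]
which implies the stated bound and the $\limsup$ claim. When $\mathfrak h_N^\star\equiv0$, the floor vanishes and the gap converges linearly to zero.

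\textbf{Main obstacle.} The hardest step is Step 1. One must confirm that the fixed inner set $\widehat{\mathcal C}$ (rather than the parent $\mathcal C_j$) contains the ball around $k_\star$, which is exactly the interiority hypothesis. One must also confirm that $\mathbf g_N(K_\star)=0$ follows from constrained stationarity at an interior point. Both then give the unit, uncorrected step needed for the exact completion $\mathfrak m^{\rm e}_{N,K_t}(d_t)=-\delta_{\rm e,N}(K_t)$.
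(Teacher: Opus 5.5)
Your proposal is correct and follows the paper's proof essentially step for step. Interior stationarity forces $\mathbf g_N(K_\star)=0$, so the unconstrained e-model step is eventually accepted with $\alpha_t^{\rm c}=1$. The completion in Lemma~\ref{lem:DR-dispersion-identities}(d) then gives a per-step decrease of at least $\delta_N^\star(K_t)/\kappa_{\rm e}$, and Proposition~\ref{prop:DR-dispersion-link} closes the recursion exactly as the paper does.
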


\begin{proof}
{Under exact evaluation, $\widetilde{\mathbf g}_t=\mathbf g_N(K_t)$ and
$\mathbf H_t^{\rm e}=\mathbf H_{N,K_t}^{\rm e}$. Since $K_\star$ is interior and
constrained stationary, $\mathbf g_N(K_\star)=0$. By continuity and
$\mathbf H_{N,K}^{\rm e}\succeq2\sigma_0I$, $d_t^{\rm unc}\to0$ in
\eqref{eq:DR-Fisher-direction-proxy}, so $k_t+d_t^{\rm unc}\in\widehat{\mathcal C}$
and $d_t=d_t^{\rm unc}$ for all large $t$. By
Theorem~\ref{thm:DR-SAA-convergence}, the Fisher check also passes for all
large $t$. Let $T$ be such that both hold for $t\ge T$. Then $\alpha_t^{\rm c}=1$
and $s_t=h_t=\mathbf g_N^\top(\mathbf H_{N,K_t}^{\rm e})^{-1}\mathbf g_N$ in
\eqref{eq:DR-ray-model}, so \eqref{eq:DR-certified-step-length} and
Lemma~\ref{lem:DR-dispersion-identities}(d) give
$\Delta_t^{\rm c}=\frac12\mathbf g_N^\top(\mathbf H_{N,K_t}^{\rm e})^{-1}\mathbf g_N
\ge\delta_N^\star(K_t)/\kappa_{\rm e}$. With \eqref{eq:DR-dispersion-gap},
$e_t\le\bar\xi\kappa_{\rm e}\Delta_t^{\rm c}+\epsilon_{N,\rm disp}$, that is,
$\Delta_t^{\rm c}\ge\mu_N(e_t-\epsilon_{N,\rm disp})$.
Proposition~\ref{prop:DR-certified-descent}(c) gives
$e_{t+1}\le e_t-\Delta_t^{\rm c}$, which proves \eqref{eq:DR-SAA-linear-rate}.
Iterating it gives the last claim.}
\end{proof}

\begin{remark}
\label{rem:DR-interior-condition}
Theorem~\ref{thm:DR-SAA-linear} gives an eventual linear rate under exact
evaluation when the limit is interior to $\widehat{\mathcal C}$. On the
boundary, stationarity and convergence still hold, but no rate is claimed.
The rate $\mu_N$ depends on \eqref{eq:DR-rate-constants}, while the floor
$\epsilon_{N,\rm disp}$ reflects Riccati-target dispersion and measurement
noise. For $\bar V=0$, the floor vanishes when the targets agree, regardless
of the uncertainty size.
\end{remark}

\subsection{Uniform SAA transfer}
\label{subsec:DR-uniform-SAA}
Let the fixed compact certified controller domain be
$\mathcal S_{\rm cert}:=\{K:\vecop(K)\in\bigcup_j\mathcal C_j\}$.  The final selected
set $\mathcal S$ is a subset of this fixed domain and may be chosen adaptively from
the batch.  By uniform stability and compactness, there are finite
constants such that $0\le\bar J(K,\theta(\eta))\le B_J$, each coordinate of
$\nabla_k\bar J(K,\theta(\eta))$ lies in an interval of length at most $B_g$,
and the realization-wise cost and gradient are Lipschitz in $K$ on $\mathcal S_{\rm cert}$
with constants $L_J$ and $L_g$, for all $\eta\in\mathcal P$.  If $M(r)$ is the size of an $r$-net of
$\mathcal S_{\rm cert}$, then Hoeffding's inequality at the net points, a
union bound over the net points, the gradient coordinates, and the two
events, and the Lipschitz constants give, with probability
at least $1-\delta$,
\begin{align}
& e_{J,N}(r):=B_J\sqrt{\frac{\log(4M(r)/\delta)}{2N}},\nonumber\\ &
e_{g,N}(r):=B_g\sqrt{\frac{\log(4d_KM(r)/\delta)}{2N}}, \nonumber \\ &
\sup_{K\in\mathcal S_{\rm cert}}|f_N(K)-f(K)|
\le2L_Jr+e_{J,N}(r),\nonumber\\ &
\sup_{K\in\mathcal S_{\rm cert}}\|\mathbf g_N(K)-\mathbf g(K)\|_2
\le2L_gr+\sqrt{d_K}e_{g,N}(r).
\label{eq:DR-uniform-SAA-bounds}
\end{align}
For example, if $R_{\rm cert}:=\max_{k\in\mathcal S_{\rm cert}}\|k\|_2$, then
$M(r)\le(1+2R_{\rm cert}/r)^{d_K}$.  Denote the right-hand sides after choosing
$r$ by $\epsilon_N^J$ and $\epsilon_N^g$.  Because the event is uniform over the fixed domain $\mathcal S_{\rm cert}$, it remains valid even though the final inner
certificate set is selected from the same batch. We call this the uniform
SAA event, and all statements below hold on it. In particular, for the final
$\mathcal S$,
\begin{equation}
f(K)-f_{\mathcal S}^*
\le f_N(K)-f_{N,\mathcal S}^*+2\epsilon_N^J,
\qquad K\in\mathcal S,
\label{eq:DR-population-from-SAA}
\end{equation}
because $f(K)\le f_N(K)+\epsilon_N^J$ and
$f_{\mathcal S}^*=f(K')\ge f_N(K')-\epsilon_N^J\ge f_{N,\mathcal S}^*-\epsilon_N^J$
for a minimizer $K'$ of $f$ over $\mathcal S$. Moreover, with the SAA
stationarity residual
$\varepsilon_t^{\rm sta}:=\operatorname{dist}(0,\mathbf g_N(K_t)+\mathsf N_{\mathcal S}(k_t))$,
the triangle inequality and \eqref{eq:DR-uniform-SAA-bounds} give
\begin{equation}
\operatorname{dist}\!\left(0,\mathbf g(K_t)+\mathsf N_{\mathcal S}(k_t)\right)
\le\varepsilon_t^{\rm sta}+\epsilon_N^g.
\label{eq:DR-population-stationarity-from-SAA}
\end{equation}
For every iteration, the same uniform event and
Proposition~\ref{prop:DR-certified-descent}(c) give the population
surrogate bound
\begin{equation}
f(K_{t+1})-f(K_t)
\le-\Delta_t^{\rm c}+2\epsilon_N^J.
\label{eq:DR-population-accepted-difference}
\end{equation}
Thus, the population surrogate is guaranteed to decrease whenever
$\Delta_t^{\rm c}>2\epsilon_N^J$.

\subsection{Transfer to the {Problem-1 objective and the true plant}}
\label{subsec:DR-true-cost}
{Recall the Problem-1 objective, the non-surrogate expected cost under the
design distribution,}
\(
J_{\rm DR}(K){=}
\mathbb E_{\eta\sim\mathbb P_{\rm des}}[J(K,\theta(\eta))]{,}
\)
{and reserve the term true-plant cost for $J(K,\theta^\star)$.}
By the proof of Lemma~\ref{lem:certified-pointwise-cost},
$\bar J-J=\operatorname{tr}(P_K(\bar W-W))
+\operatorname{tr}(P_KB(\eta)K(\bar V-V)K^\top B(\eta)^\top)
+\operatorname{tr}(S_K(\bar V-V))$ with $\bar W\succeq W$ and $\bar V\succeq V$.
Bounding each trace by $n$ times the product of spectral norms and using
Lemma~\ref{prop:DR-certificate-computable-bounds} gives, for $\eta\in\mathcal P_\ell$,
\begin{align}
& 0\le\bar J(K,\theta(\eta))-J(K,\theta(\eta)) \le n\bar p_\ell^{\rm cert} \nonumber\\
&
\times \left(\|\bar W-W\|_2+
\bar b_\ell^2\bar k_2^2\|\bar V-V\|_2\right)
+n\bar s\|\bar V-V\|_2.
\label{eq:DR-pointwise-true-cost-gap}
\end{align}
Set
\(
\epsilon_{\rm cert}^{J}:=\max_\ell\big[
n\bar p_\ell^{\rm cert}\big(\|\bar W-W\|_2
+\bar b_\ell^2\bar k_2^2\|\bar V-V\|_2\big)
{}+n\bar s\|\bar V-V\|_2\big].
\)
The quantity $\epsilon_{\rm cert}^{J}$ depends on the unknown true noise covariances
$W$ and $V$ and is therefore a bound.
Then, taking $\mathbb E_\varrho$ of \eqref{eq:DR-pointwise-true-cost-gap},
\begin{equation}
0\le f(K)-J_{\rm DR}(K)\le\epsilon_{\rm cert}^{J}.
\label{eq:DR-population-true-DR-gap}
\end{equation}
The certificate error is therefore paid only once in an objective-gap transfer: it
upper-bounds the difference between the population surrogate objective and the true
expected-cost objective.
Combining \eqref{eq:DR-population-true-DR-gap} with
\eqref{eq:DR-population-accepted-difference} yields, for every
iteration,
\begin{equation}
J_{\rm DR}(K_{t+1})-J_{\rm DR}(K_t)
\le-
\Delta_t^{\rm c}+2\epsilon_N^J+\epsilon_{\rm cert}^{J}.
\label{eq:DR-true-expected-accepted-difference}
\end{equation}
Hence a {Problem-1} descent certificate is available whenever
$\Delta_t^{\rm c}>2\epsilon_N^J+\epsilon_{\rm cert}^{J}$.
Let $J_{{\rm DR},\mathcal S}^{*}$ denote the minimum of {$J_{\rm DR}$} over $\mathcal S$, attained at some $K^{\rm true}$ because $\mathcal S$ is
compact and the cost is continuous on it. Then, for every $K\in\mathcal S$,
\begin{equation}
J_{\rm DR}(K)-J_{{\rm DR},\mathcal S}^{*}
\le
f_N(K)-f_{N,\mathcal S}^*+2\epsilon_N^J+\epsilon_{\rm cert}^{J}.
\label{eq:DR-true-DR-from-SAA}
\end{equation}
Indeed, by \eqref{eq:DR-population-true-DR-gap},
$J_{\rm DR}(K)\le f(K)$ and
$J_{{\rm DR},\mathcal S}^{*}=J_{\rm DR}(K^{\rm true})
\ge f(K^{\rm true})-\epsilon_{\rm cert}^{J}\ge f_{\mathcal S}^*-\epsilon_{\rm cert}^{J}$.
Subtracting and applying \eqref{eq:DR-population-from-SAA} gives
\eqref{eq:DR-true-DR-from-SAA}. Hence, the SAA rate transfers to the
{Problem-1 objective}.

\begin{corollary}
\label{cor:DR-true-expected-linear}
On the uniform SAA event, under the assumptions of
Theorem~\ref{thm:DR-SAA-linear},
\begin{equation}
\limsup_{t\to\infty}
\left[J_{\rm DR}(K_t)-J_{{\rm DR},\mathcal S}^{*}\right]
\le
\epsilon_{N,\rm disp}+2\epsilon_N^J+\epsilon_{\rm cert}^{J}.
\end{equation}
The {Problem-1} gap thus has the nonvanishing fixed-batch and
certificate floor $2\epsilon_N^J+\epsilon_{\rm cert}^{J}$.
\end{corollary}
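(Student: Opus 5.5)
The corollary follows by combining the eventual SAA rate of Theorem~\ref{thm:DR-SAA-linear} with the transfer inequality \eqref{eq:DR-true-DR-from-SAA}. Everything is on the uniform SAA event, so \eqref{eq:DR-uniform-SAA-bounds}, \eqref{eq:DR-population-from-SAA} and \eqref{eq:DR-true-DR-from-SAA} hold at once for every $K\in\mathcal S_{\rm cert}$, and in particular for every $K\in\mathcal S$. These bounds are uniform over the fixed domain. Hence they apply to the whole iterate sequence $\{K_t\}$ even though $\mathcal S$ and the iterates depend on the same batch.

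First I check that every iterate after the bundle is frozen lies in $\mathcal S$. Proposition~\ref{prop:DR-certified-descent}(b) gives $\vecop(K_{t+1})\in\widehat{\mathcal C}_t$. For $t\ge T_{\rm fix}$ the selected inner set is the fixed $\widehat{\mathcal C}$, so $K_t\in\mathcal S$ for all large $t$. Then \eqref{eq:DR-true-DR-from-SAA} with $K=K_t$ gives, for all large $t$,
\[
J_{\rm DR}(K_t)-J_{{\rm DR},\mathcal S}^{*}\le e_t+2\epsilon_N^J+\epsilon_{\rm cert}^{J},
\]
where $e_t=f_N(K_t)-f_{N,\mathcal S}^*$ is the SAA gap of Theorem~\ref{thm:DR-SAA-linear}.

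Next I take the limit superior of both sides. The added constants do not depend on $t$. Under the stated assumptions (exact evaluation after $T_{\rm fix}$ and an interior limit $K_\star$), Theorem~\ref{thm:DR-SAA-linear} gives $e_t\le(1-\mu_N)^{t-T}e_T+\epsilon_{N,\rm disp}$ for $t\ge T$. Since $\mu_N\in(0,1]$, this yields $\limsup_t e_t\le\epsilon_{N,\rm disp}$, and the claimed bound follows. The floor $2\epsilon_N^J+\epsilon_{\rm cert}^J$ does not vanish, because neither term depends on $t$: $\epsilon_N^J$ is set by the fixed batch size $N$, and $\epsilon_{\rm cert}^J$ by $\bar W-W$ and $\bar V-V$.

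I expect no serious obstacle. The only delicate point is legitimacy. The comparison class $\mathcal S$ is chosen from the batch, and $J_{{\rm DR},\mathcal S}^*$ must be attained on this data-dependent set. Both points are already covered: uniformity of the event over $\mathcal S_{\rm cert}\supseteq\mathcal S$ handles the data dependence, and compactness of $\mathcal S$ with continuity of $J_{\rm DR}$ gives attainment, as noted before \eqref{eq:DR-true-DR-from-SAA}. So I would state these two facts explicitly and otherwise keep the argument to the two displayed steps.
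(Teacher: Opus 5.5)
Your proposal is correct and follows the paper's own argument: $K_t\in\mathcal S$ for $t\ge T_{\rm fix}$ by Proposition~\ref{prop:DR-certified-descent}(b), then \eqref{eq:DR-true-DR-from-SAA} applied at $K=K_t$, then $\limsup_t e_t\le\epsilon_{N,\rm disp}$ from Theorem~\ref{thm:DR-SAA-linear}. Your added remarks are consistent with what the paper establishes before the corollary: uniformity of the SAA event over $\mathcal S_{\rm cert}$ covers the data-dependent $\mathcal S$, and $J_{{\rm DR},\mathcal S}^{*}$ is attained.
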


\begin{proof}
For $t\ge T_{\rm fix}$, $K_t\in\mathcal S$ by
Proposition~\ref{prop:DR-certified-descent}(b) {, and
$e_t=f_N(K_t)-f_{N,\mathcal S}^*$}. Applying
\eqref{eq:DR-true-DR-from-SAA} at $K=K_t$ and using
$\limsup_te_t\le\epsilon_{N,\rm disp}$ from Theorem~\ref{thm:DR-SAA-linear}
gives the claim.
\end{proof}

For a particular true realization $\theta^*=\theta(\eta^*)$ with
$\eta^*\in\mathcal P$, an average SAA objective cannot
by itself control that realization.  Define the explicit coverage term
\(
\epsilon_{\rm cov}^{J}:=
\sup_{K\in\mathcal S,\,\eta\in\mathcal P}
[\bar J(K,\theta(\eta))-f(K)]_+.
\)
Then, for $t\ge T_{\rm fix}$, Lemma~\ref{lem:certified-pointwise-cost} and
the uniform SAA event give
\begin{equation}
J(K_t,\theta^*)\le \bar J(K_t,\theta^*)\le f(K_t)+\epsilon_{\rm cov}^{J}\le
f_N(K_t)+\epsilon_N^J+\epsilon_{\rm cov}^{J}.
\label{eq:DR-particular-true-cost-bound}
\end{equation}

{The guarantees combine as follows. Let $\mathcal E_{\rm I}$ be the Part~I
event, of probability at least $1-\delta_{\rm I}$, on which the true model
satisfies $\theta^\star=\theta(\eta^\star)$ for some $\eta^\star\in\mathcal P$ and
\eqref{eq:WV-certificates} holds, and let $\mathcal E_{\rm SAA}$ be the uniform
SAA event, of probability at least $1-\delta$.}

\begin{corollary}
\label{cor:DR-end-to-end}
{With probability at least $1-\delta_{\rm I}-\delta$, the following hold
simultaneously: every iterate $K_t$ stabilizes the true plant; the population
and Problem-1 descent bounds \eqref{eq:DR-population-accepted-difference} and
\eqref{eq:DR-true-expected-accepted-difference} hold at every iteration; and the
true-plant bound \eqref{eq:DR-particular-true-cost-bound} holds for
$t\ge T_{\rm fix}$.}
\end{corollary}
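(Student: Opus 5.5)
The corollary is an intersection-of-events statement, so I will work on $\mathcal E_{\rm I}\cap\mathcal E_{\rm SAA}$. A union bound gives $\Pr(\mathcal E_{\rm I}^c\cup\mathcal E_{\rm SAA}^c)\le\delta_{\rm I}+\delta$, so this intersection has probability at least $1-\delta_{\rm I}-\delta$. On it I will verify the three claims one at a time, each time invoking only deterministic results already proved. One caveat must be checked first. The uniform SAA event is defined over the fixed domain $\mathcal S_{\rm cert}$, and the batch is drawn independently of the Part~I data, or at least uniformly over the fixed domain. Hence the adaptively selected certificates and iterates do not invalidate it, as already argued in Section~\ref{subsec:DR-uniform-SAA}.

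\emph{Stability of the true plant.} On $\mathcal E_{\rm I}$ we have $\theta^\star=\theta(\eta^\star)$ with $\eta^\star\in\mathcal P=\bigcup_\ell\mathcal P_\ell$, so $\eta^\star\in\mathcal P_\ell$ for some $\ell$. By Proposition~\ref{prop:DR-certified-descent}(b), every iterate satisfies $\vecop(K_t)\in\widehat{\mathcal C}_{t-1}\subseteq\mathcal C_{j}$ for some bundle $j$. The initial gain is chosen in some $\widehat{\mathcal C}_{j,0}\subseteq\mathcal C_j$ by Proposition~\ref{thm:DR-bounded-complexity-inner-set}. Lemma~\ref{thm:DR-robust-stability} then gives $A_{K_t}(\eta)\mathcal X_\ell^{(j)}\subseteq\lambda_\ell^{(j)}\mathcal X_\ell^{(j)}$ for all $\eta\in\mathcal P_\ell$, in particular for $\eta^\star$. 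Because $0\in\operatorname{int}\mathcal X_\ell^{(j)}$ and the set is compact, its gauge is a norm. The contraction yields $\rho(A^\star+B^\star K_t)\le\lambda_\ell^{(j)}<1$, equivalently the power bound of Lemma~\ref{prop:DR-certificate-computable-bounds}. So $K_t$ stabilizes the true plant.

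\emph{Descent bounds and the true-plant bound.} On $\mathcal E_{\rm SAA}$, \eqref{eq:DR-uniform-SAA-bounds} holds over $\mathcal S_{\rm cert}\ni K_t,K_{t+1}$. Combining this with Proposition~\ref{prop:DR-certified-descent}(c) yields \eqref{eq:DR-population-accepted-difference}. On $\mathcal E_{\rm I}$ the covariance certificates \eqref{eq:WV-certificates} hold, so \eqref{eq:DR-pointwise-true-cost-gap}--\eqref{eq:DR-population-true-DR-gap} are valid. Adding these gives \eqref{eq:DR-true-expected-accepted-difference}. For $t\ge T_{\rm fix}$, $K_t\in\mathcal S$, and Lemma~\ref{lem:certified-pointwise-cost} applies on $\mathcal E_{\rm I}$ because $\theta^\star=\theta(\eta^\star)$ with $W\preceq\bar W$ and $V\preceq\bar V$. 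Together with the definition of $\epsilon^J_{\rm cov}$ and the uniform SAA event, this gives \eqref{eq:DR-particular-true-cost-bound}.

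\emph{Main obstacle.} The main step to get right is the first one: confirming that every iterate, including $K_0$ and those before $T_{\rm fix}$, lies in some exact set $\mathcal C_j$. It is also essential that membership certifies the true $\eta^\star$ itself, not only the sampled realizations. This is exactly where full-support certification, rather than sample-wise stability, is used. It also requires $\Theta_r$ to cover the true model on $\mathcal E_{\rm I}$, which is the Part~I guarantee. The remaining claims are bookkeeping over already established inequalities.
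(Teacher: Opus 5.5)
Your proposal is correct and follows the paper's argument. Stability at $\eta^\star\in\mathcal P$ follows from Proposition~\ref{prop:DR-certified-descent}(b) on $\mathcal E_{\rm I}$ alone, for every batch, and the remaining bounds are the already-derived inequalities on $\mathcal E_{\rm I}\cap\mathcal E_{\rm SAA}$, which has probability at least $1-\delta_{\rm I}-\delta$ by the union bound. Your extra care about $K_0$ and your gauge argument are harmless refinements: $K_0$'s membership rests on the algorithm's standing requirement $\mathcal J_0\neq\varnothing$ rather than on Proposition~\ref{thm:DR-bounded-complexity-inner-set}, and a non-symmetric polytope's gauge is only sublinear, not a norm, though the power bound still gives $\rho\le\lambda_\ell^{(j)}$.
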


\begin{proof}
{On $\mathcal E_{\rm I}$, $\eta^\star\in\mathcal P$, so
Proposition~\ref{prop:DR-certified-descent}(b) gives stability of
$A_{K_t}(\eta^\star)$ for every $t$; this part holds for every batch. On
$\mathcal E_{\rm I}\cap\mathcal E_{\rm SAA}$, whose probability is at least
$1-\delta_{\rm I}-\delta$ by the union bound, the remaining bounds hold as
shown above.}
\end{proof}

\subsection{Interpretation and special cases}

\begin{corollary}
\label{cor:DR-noise-free-heterogeneous}
Suppose $\bar V=0$ {, so that the measurement-noise channel of $\mathfrak h$
vanishes}.  Then,
\begin{equation}
\mathfrak h^\star(K)
=
\mathbb E_\varrho\!\left[
\left\|
\mathcal R_K(\eta)^{1/2}
\bigl(D_K^\star-D_K^{\rm H}(\eta)\bigr)
X_K(\eta)^{1/2}
\right\|_F^2
\right].
\label{eq:DR-noise-free-dispersion-exact}
\end{equation}
If the realization-wise directions coincide almost everywhere, then
$\mathfrak h^\star(K)=0$.  More generally, if
\begin{equation*}
\sup_{K\in\mathcal S,\,\eta\in\mathcal P}
\|D_K^{\rm H}(\eta)-D_K^\star\|_F
\le\varepsilon_{\rm het},
\end{equation*}
$\mathcal R_K(\eta)\preceq\bar r I$, and
$X_K(\eta)\preceq\bar x I$, then
\begin{equation}
\sup_{K\in\mathcal S}\mathfrak h^\star(K)
\le
\bar r\bar x\,\varepsilon_{\rm het}^2.
\label{eq:DR-heterogeneity-dispersion-bound}
\end{equation}
The same bound holds for $\sup_{K\in\mathcal S}\mathfrak h_N^\star(K)$, so the
{floor in \eqref{eq:DR-rate-and-floor}} satisfies
$\epsilon_{N,\rm disp}\le\bar\xi\bar r\bar x\varepsilon_{\rm het}^2$.
\end{corollary}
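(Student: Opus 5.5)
The argument works directly with the definition of $\mathfrak h(K;D)$ and the completion identity of Lemma~\ref{lem:DR-dispersion-identities}.

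\emph{Setting $\bar V=0$.} With $\bar V=0$, the second expectation in $\mathfrak h(K;D)$ is identically zero, so
\[
\mathfrak h(K;D)=\mathbb E_\varrho\big[\|\mathcal R_K(\eta)^{1/2}(D-D_K^{\rm H}(\eta))X_K(\eta)^{1/2}\|_F^2\big].
\]
By \eqref{eq:DR-dispersion-completion}, $\mathfrak h(K;\cdot)$ is a strictly convex quadratic, because $\mathbf M_K\succeq 2\sigma_0 I$. Its unique minimizer is $d_K^\star=\vecop(D_K^\star)$, which gives $\mathfrak h^\star(K)=\mathfrak h(K;D_K^\star)$. Substituting $D=D_K^\star$ yields \eqref{eq:DR-noise-free-dispersion-exact}. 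Here I should check that $D_K^\star$, defined through $\mathbf M_K^{-1}\mathbf g(K)$, is the minimizer also when $\bar V=0$. This follows from part (a), where $\mathfrak q_K(D)=\mathfrak h(K;D)-\mathfrak n(K)$: the function $\mathfrak q_K$ is minimized at $D=-\operatorname{mat}(\mathbf M_K^{-1}\mathbf g)$, and $\mathfrak n(K)$ does not depend on $D$.

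\emph{Coincident directions.} Suppose $D_K^{\rm H}(\eta)=\bar D$ for $\varrho$-almost every $\eta$. Choosing $D=\bar D$ gives $\mathfrak h(K;\bar D)=0$, so $0\le\mathfrak h^\star(K)\le 0$. Alternatively, part (c) applies directly, since its noise condition $(D_K^\star+K)\bar V=0$ is trivial when $\bar V=0$. It follows that $D_K^\star=\bar D$.

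\emph{Heterogeneity bound.} For each $\eta$ I would bound the integrand using $\|AB\|_F\le\|A\|_2\|B\|_F$ on both sides:
\[
\|\mathcal R^{1/2}E X^{1/2}\|_F^2\le \|\mathcal R^{1/2}\|_2^2\,\|E\|_F^2\,\|X^{1/2}\|_2^2\le \bar r\,\varepsilon_{\rm het}^2\,\bar x,
\]
with $E=D_K^\star-D_K^{\rm H}(\eta)$. Taking the expectation and then the supremum over $K\in\mathcal S$ gives \eqref{eq:DR-heterogeneity-dispersion-bound}.

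\emph{Batch version.} Lemmas~\ref{lem:DR-two-channel-PDL}--\ref{lem:DR-dispersion-identities} hold for the empirical law, so the same argument applies to $\mathfrak h_N^\star$ with the batch Sylvester step $D_{N,K}^\star$.

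The main obstacle is in this batch step. The heterogeneity hypothesis is stated relative to the population step $D_K^\star$, whereas the batch identity uses $D_{N,K}^\star$. I avoid this mismatch by not evaluating at the optimizer: since $\mathfrak h_N^\star(K)=\min_D\mathfrak h_N(K;D)\le\mathfrak h_N(K;D_K^\star)$, and each batch summand obeys the same pointwise bound $\bar r\bar x\varepsilon_{\rm het}^2$ because every $\eta_i\in\mathcal P$, we get $\mathfrak h_N^\star(K)\le\bar r\bar x\varepsilon_{\rm het}^2$. Multiplying by $\bar\xi$ in \eqref{eq:DR-rate-and-floor} then gives $\epsilon_{N,\rm disp}\le\bar\xi\bar r\bar x\varepsilon_{\rm het}^2$.
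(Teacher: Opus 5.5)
Your proposal is correct and follows essentially the same route as the paper: drop the $\bar V$ channel, identify $D_K^\star$ as the unique minimizer through the completion identity, bound the integrand pointwise by $\bar r\bar x\varepsilon_{\rm het}^2$, and handle the batch case with $\mathfrak h_N^\star(K)\le\mathfrak h_N(K;D_K^\star)$, which is exactly the paper's device for avoiding the batch step. One small point: your aside that part (c) \emph{applies directly} presupposes $\bar D=D_K^\star$. Your primary argument, choosing $D=\bar D$ and invoking uniqueness of the minimizer, already supplies this, so nothing is missing.
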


\begin{proof}
Setting $\bar V=0$ in the definition of $\mathfrak h$ removes its second
channel.  The minimizer of the remaining quadratic is $D_K^\star$, which
gives \eqref{eq:DR-noise-free-dispersion-exact}.  If all
$D_K^{\rm H}(\eta)$ coincide, their common value is the unique minimizer and
the integrand vanishes.  Under the uniform bounds,
\(
\left\|
\mathcal R_K(\eta)^{1/2}
(D_K^\star-D_K^{\rm H}(\eta))
X_K(\eta)^{1/2}
\right\|_F^2\le
\bar r\bar x
\|D_K^\star-D_K^{\rm H}(\eta)\|_F^2
\le
\bar r\bar x\varepsilon_{\rm het}^2.
\)
Taking expectations and the supremum over $K$ proves
\eqref{eq:DR-heterogeneity-dispersion-bound}. For the batch, the same pointwise
bound at the batch samples gives
$\mathfrak h_N^\star(K)\le\mathfrak h_N(K;D_K^\star)\le\bar r\bar x\varepsilon_{\rm het}^2$.
\end{proof}

\begin{remark}
\label{cor:DR-Hewer}
If $\mathbb P_{\rm des}$ is a point mass, $\bar V=0$, evaluation is exact, and
the certificate constraint is inactive, the Fisher correction is not needed:
the exact advantage minimizer
\(
K_{t+1}
=
-\left(R+B^\top P_tB\right)^{-1}B^\top P_tA
\)
is the classical Newton-like update~\cite{Hewer}, which is stabilizing, and
Lemma~\ref{lem:DR-two-channel-PDL} gives
$f(K_{t+1})-f(K_t)=-\operatorname{tr}(\Psi_{K_t}X_{K_{t+1}})\le0$ because
$\Phi_{K_t}(D_{K_t}^{\rm H})=-\Psi_{K_t}$. 
\end{remark}

\section{Simulation Results}
\label{sec:simulation}

We consider a third-order system with the coupling structure of the
motivating example,
\begin{align*}
A^\star=
\begin{bmatrix}
0.4 & 0.5 & 0\\
0.5 & 0.4 & 0\\
0 & 0.3 & 0.5
\end{bmatrix},
\qquad
B^\star=
\begin{bmatrix}
0\\ 1\\ 0.5
\end{bmatrix},
\end{align*}
we assume the couplings $a_{12}$ and $a_{21}$ are
poorly known, with a prior half-width of $0.45$, and the other entries
have a prior half-width of $0.01$. 
 $\bar V=0$ and $\bar W=\Sigma^{(J)}$, the
covariance certificate of Part~I. The true process-noise covariance is
$W=0.02I_3$ and we assume $V=0$ (to compare our results with policy gradients).  Applying Part~I with a small set of data ($T=100$), returns the model set $\Theta_r^{(J)}$, in which
$a_{12},a_{21}\in[0.05,0.95]$, and the covariance certificate
$\Sigma^{(J)}$. The the learned covariance is approximately $\bar W\approx\diag(1.35,\,0.88,\,0.16)$.
The design uses $Q=I_3$, $R=3$. The design distribution $P_{\rm des}$ is a Gaussian centered at
the prior center.

\emph{Partition and certificates.} The domain is split into $L=4$ cells
of equal width along $a_{12}$, one of the two most uncertain coordinates.
The other eleven coordinates are not split. The certificates are built around a single stabilizing gain. Let
$(A_c,B_c)$ be the center of the learned set $\Theta_r^{(J)}$, and let
$K_{\rm ce}$ be the LQR gain of this center model. This gain is not
robust: its closed-loop spectral radius reaches $1.13$ at a vertex of the
domain. We therefore compute a second gain $K_{\rm rob}$ by minimizing,
with a derivative-free search started at $K_{\rm ce}$, the largest
closed-loop spectral radius over the vertices of the domain, which gives
$0.66$. The certificates are built for the gain
$K_0=0.4K_{\rm rob}+0.6K_{\rm ce}=[-0.398,\,-0.455,\,0.002]$, which is the
closest to $K_{\rm ce}$ on the segment between the two that could be
certified, and which is also the initial gain of the algorithm. For each
cell, a symmetric polytope $X_\ell=\{x:|Hx|\le1\}$ is built as follows.
Starting from the unit box, we compute exactly, by linear programs over
the cell, the largest violation of
$A_{K_0}(\theta)X_\ell\subseteq\lambda_\ell X_\ell$, and add a facet that
removes it; this is repeated until no violation is left. The rate
$\lambda_\ell$ is the smallest value on a grid of step $0.02$ for which
the procedure terminates. Table~\ref{tab:cert} reports the result. The
certificates are small in the cells where $a_{12}$ is large and the
contraction is slow, and larger in the first cell.

\begin{table}[t]
\centering
\caption{Polyhedral certificates of the bundle used by FM-DR-GPI.}
\label{tab:cert}
\begin{tabular}{@{}ccccc@{}}
\toprule
cell & range of $a_{12}$ & facets & vertices & $\lambda_\ell$\\
\midrule
$1$ & $[0.05,0.28]$ & $56$ & $100$ & $0.66$\\
$2$ & $[0.28,0.50]$ & $14$ & $24$  & $0.78$\\
$3$ & $[0.50,0.73]$ & $10$ & $16$  & $0.88$\\
$4$ & $[0.73,0.95]$ & $18$ & $32$  & $0.96$\\
\bottomrule
\end{tabular}
\end{table}

\emph{Certified gain sets.} For fixed certificates, the invariance conditions are linear in $K$ for every facet--vertex pair and every vertex of the cell. We use a bundle of five certificate families $\mathcal H^{(j)}$, $j=1,\dots,5$, each yielding one gain set $\mathcal C_j$. For the bundle of Table~I there are 6672 facet--vertex pairs. The set of gains satisfying all of them is computed exactly by cutting planes followed by removal of redundant inequalities, and it is a polytope with only 21 linear inequalities in $K\in\mathbb R^{1\times 3}$. The five gain sets have 33, 36, 28, 19 and 21 inequalities, the last being the set of Table~I. Only these are used online.

\emph{Algorithms.} FM-DR-GPI is run with a fixed batch of $N=40$ models
drawn from $P_{\rm des}$, starting from the anchor with $s=0.6$. It is
compared with the minibatch policy gradient of \cite{nguyenle2026globally}, in which
every step draws $M$ fresh models from $P_{\rm des}$ and moves along the
average of their $M$ LQR gradients with a constant step size, with no
constraint on the gain. We use $M=8$ with $3000$ steps and $M=64$ with
$1000$ steps, from the same initial gain.

\emph{Results.} FM-DR-GPI stops after $97$ iterations and returns
$K=[-0.380,\,-0.424,\,0.002]$. Two corrections act on the
Fisher-majorized step. The stability correction, which replaces the
unconstrained step by the solution of a quadratic program over
$\mathcal C_j$, is active at all $97$ iterations, because the
unconstrained minimizer lies outside the certified set. At convergence, two
of the $21$ inequalities are active. The Fisher correction, which
shortens the step to keep the closed-loop distributions close, is active
at the first $46$ iterations, and one line search is performed. Policy gradient converges without any
destabilizing step and returns $K=[-0.248,\,-0.275,\,-0.072]$ for $M=64$.

Fig.~\ref{fig:ex2}~(bottom) compares the largest closed-loop spectral
radius of the final gains over the $20000$ models sampled from
$P_{\rm des}$ and over the whole set $\Theta_r^{(J)}$. The
policy-gradient gain is stable on all samples ($0.96$) but not on the set
($1.11$). Fig.~\ref{fig:ex2}~(top) shows why: the models that it
destabilizes form a corner of the set, where $a_{12}$ and $a_{21}$ are
both large, that $P_{\rm des}$ practically never visits. Since the true
system is only known to lie in $\Theta_r^{(J)}$, this gain does not
guarantee stability of the true closed loop. The FM-DR-GPI gain is stable
on the whole set ($0.98$).

The domain-randomized cost of FM-DR-GPI is $8.2\%$ above that of policy
gradient. This is expected: policy gradient minimizes the average cost
without constraints, whereas FM-DR-GPI minimizes it over the certified
set, and the unconstrained minimizer itself has spectral radius $1.11$ on
the set. The gap is therefore the price of robust stability, which
requires a larger gain than the one that is optimal for the typical
models.

\begin{figure}[t]
\centering
\includegraphics[width=\columnwidth]{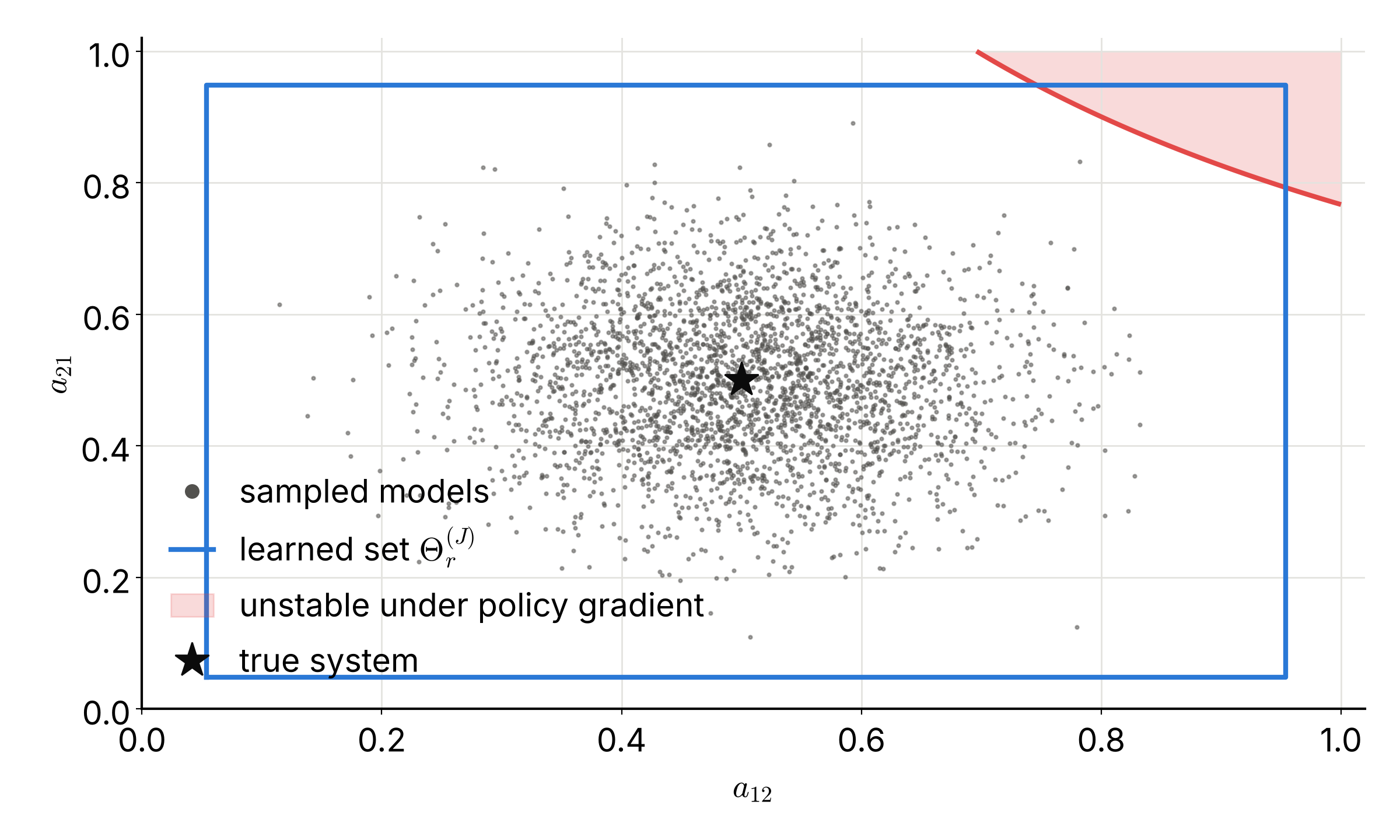}\\[-1mm]
\includegraphics[width=\columnwidth]{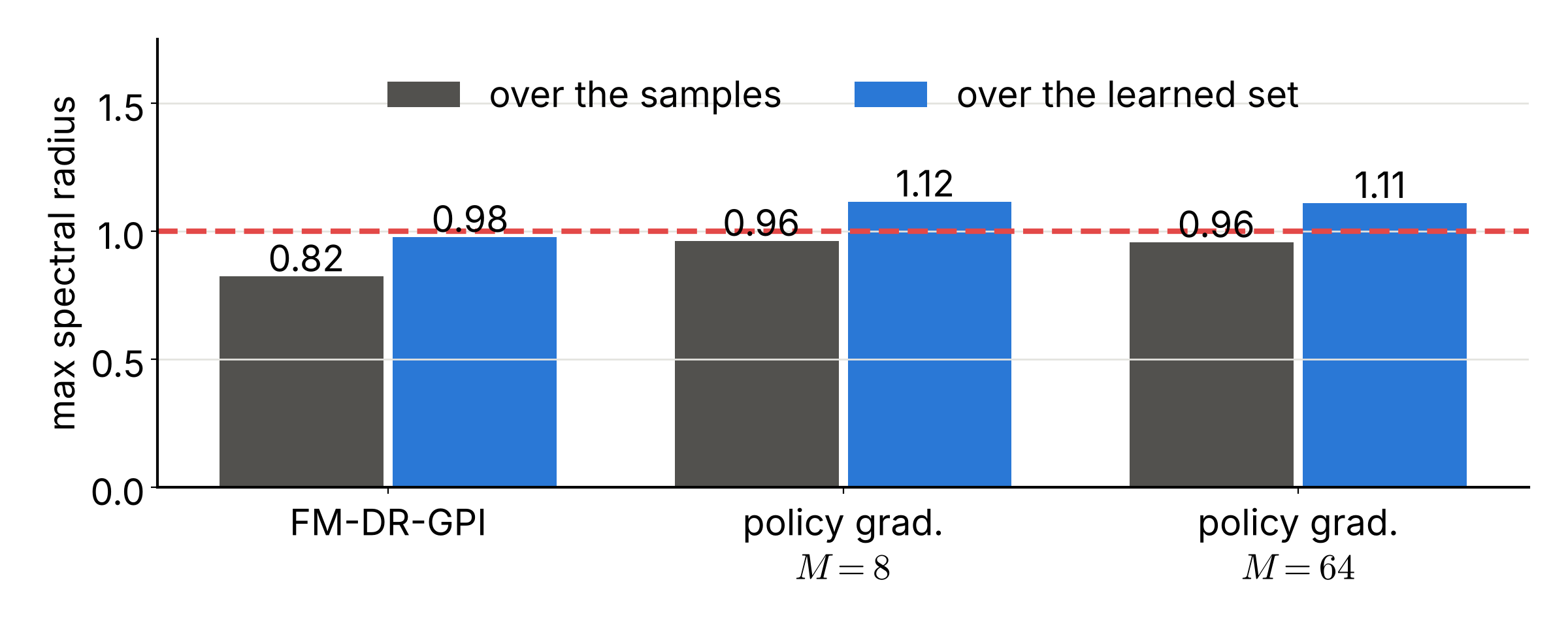}
\caption{Top: learned set in the $(a_{12},a_{21})$ plane, models sampled
from $P_{\rm des}$, and the region where the policy-gradient gain is
destabilizing (shaded; the other entries are at the center of the set).
Bottom: largest closed-loop spectral radius over the $20000$ sampled
models (grey) and over the learned set (blue), for the final gains of
FM-DR-GPI and of the policy gradient of \nw{\cite{nguyenle2026globally}}; the dashed line is
the stability limit.}
\label{fig:ex2}
\end{figure}


\section{Conclusion}
\label{sec:conclusion}
This paper developed Fisher-majorized domain-randomized generalized policy
iteration for noisy LQR with deterministic full-support stability.
Partial Bellman evaluation yields a structured Sylvester model, while Fisher curvature controls covariance mismatch from freezing the current covariance. The Fisher region certifies a safe step, and a one-dimensional SAA-cost search can extend it beyond the Fisher region. The analysis establishes SAA stationarity and transfers the guarantees to the population and true-plant costs, with
heterogeneity, certificate compression, sampling, and evaluation errors entering separately. Future work will consider adaptive certificate design and richer uncertainty decompositions to further reduce conservatism.

\appendices
\section{Proof of Lemma~\ref{thm:DR-robust-stability}}
\label{app:DR-robust-stability}
\begin{proof}
Recall from \eqref{eq:DR-closed-loop-GK} and \eqref{Gk-K} that
$A_K(\eta)=(\bar Y_1-N(\eta))G_K$ and $G_K=T_x+T_uK$.  With
$M_{\ell v}:=\bar Y_1-N_{\ell v}$, the closed-loop family over cell $\ell$ is
\(
\mathcal A_\ell(K)
:=\big\{MG_K:
M\in\operatorname{co}\{M_{\ell v}\}_{v=1}^{N_\ell^\theta}\big\}.
\)
Based on \eqref{DR-problem}, bundle $j$ certifies cell $\ell$ whenever
\begin{equation}
MG_K\mathcal X_\ell^{(j)}
\subseteq
\lambda_\ell^{(j)}\mathcal X_\ell^{(j)},
\qquad
\tfracorall M\in\operatorname{co}\{M_{\ell v}\}_{v=1}^{N_\ell^\theta}.
\label{eq:DR-cell-contraction}
\end{equation} 
We first show that if $\mathcal U_{\ell v}$ is not indexed by
$\widehat{\mathcal V}_\ell$ in \eqref{eq:DR-control-image-hull}, then  it is discarded. In fact, there exist
$\alpha_s\ge0$ with $\sum_s\alpha_s=1$ such that
\(
M_{\ell v}T_x
=
\sum_{s\in\widehat{\mathcal V}_\ell}
\alpha_s M_{\ell s}T_x,
\,
M_{\ell v}T_u
=
\sum_{s\in\widehat{\mathcal V}_\ell}
\alpha_s M_{\ell s}T_u.
\)
Consequently,
\(
M_{\ell v}G_K
=
\sum_{s\in\widehat{\mathcal V}_\ell}
\alpha_s M_{\ell s}G_K,
\, \tfracorall K.
\)
Therefore, 
\(
\operatorname{co}\{M_{\ell v}G_K:v\in[N_\ell^\theta]\}
=
\operatorname{co}\{M_{\ell v}G_K:v\in\widehat{\mathcal V}_\ell\}.
\)
Moreover, the right-hand side of \eqref{eq:DR-cell-contraction} is
$\lambda_\ell^{(j)}\mathcal X_\ell^{(j)}
=\{z:h_{\ell r}^{(j)\top}z\le\lambda_\ell^{(j)},
 r\in[m_{\ell,j}]\}$.  Hence, \eqref{eq:DR-cell-contraction} is equivalent
to $h_{\ell r}^{(j)\top}MG_Kx\le\lambda_\ell^{(j)}$ for every retained $M$, every $x\in\mathcal X_\ell^{(j)}$, and every facet $r$.  For fixed $r$, the left-hand side is affine in $M$ and linear in $x$, so its maximum is attained at a retained uncertainty image $M_{\ell v}$ and a certificate vertex $\xi_{\ell q}^{(j)}$.  The inclusion is therefore equivalent to $h_{\ell r}^{(j)\top}M_{\ell v}G_K\xi_{\ell q}^{(j)}
\le\lambda_\ell^{(j)}$ for all $i=(\ell,v,r,q)\in\mathcal I_j^{\rm raw}$.
Using $G_K=T_x+T_uK$ and
$a^\top Kb=(b\otimes a)^\top\vecop(K)$ converts these conditions
into
\eqref{eq:DR-raw-controller-polyhedron}.
Finally, $0\in\operatorname{int}\mathcal X_\ell^{(j)}$ and
$\lambda_\ell^{(j)}<1$ imply strict contractive invariance.  Standard
set-theoretic stability results~\cite{Blanchini2008} then imply Schur
stability. 
\end{proof}

\section{Proof of Proposition~\ref{prop:DR-normal-fan-reduction}}
\label{app:DR-normal-fan-reduction}
\begin{proof}
Based on the proof of Lemma~\ref{thm:DR-robust-stability}, for fixed
$(\ell,v,r)$ and $K$, the   certificate requires
\(
h_{\ell r}^{(j)\top}M_{\ell v}G_Kx\le\lambda_\ell^{(j)},
\,
\tfracorall x\in\mathcal X_\ell^{(j)}
\), or
\(
\max_{x\in\mathcal X_\ell^{(j)}}
h_{\ell r}^{(j)\top}M_{\ell v}G_Kx
\le\lambda_\ell^{(j)}.
\)
Since
\(
h_{\ell r}^{(j)\top}M_{\ell v}G_Kx
=
c_{\ell vr}^{(j)}(K)^\top x,
\)
the left-hand side is the maximum of a linear function over
$\mathcal X_\ell^{(j)}$ in the direction $c_{\ell vr}^{(j)}(K)$.

In general, a vertex $\xi_{\ell q}^{(j)}$ maximizes $c^\top x$ over
$\mathcal X_\ell^{(j)}$ if and only if
$c\in {\mathsf N}_{\mathcal X_\ell^{(j)}}(\xi_{\ell q}^{(j)})$.
Hence, for any $\vecop(K)\in\mathcal K_0$, the maximizing vertices are those
whose normal cones contain $c_{\ell vr}^{(j)}(K)\in
\mathcal D_{\ell vr}^{(j)}$. By the definition of
$\widehat{\mathcal Q}_{\ell vr}^{(j)}$ in \eqref{newQ}, these vertices are precisely the
retained ones, which proves
\eqref{eq:DR-normal-fan-support-equality}.
\end{proof}

\section{Proof of Lemma~\ref{prop:DR-parent-geometry}}
\label{app:DR-parent-geometry}
\begin{proof}
Since $\operatorname{int}\mathcal C_j\neq\varnothing$, there exists
$a\in\mathcal C_j$ and $r>0$ such that
$\mathbb B_2(a,r)\subseteq\mathcal C_j$. Hence, the LP in
\eqref{eq:DR-inner-anchor-LP} is feasible with a positive objective value,
so its optimum satisfies $r_j>0$. Moreover,
$\mathcal C_j\subseteq\mathcal K_0$ by \eqref{Cfinal1}. Since $\mathcal K_0$
is compact and $\mathcal C_j$ is closed, $\mathcal C_j$ is compact.
{Since $\mathcal C_j$ is compact, for every $d\ne0$, at least
one defining row must satisfy $G_{j,i}d>0$. Indeed, if
$G_{j,i}d\leq0$ for all $i\in\mathcal I_j^+$, then, for any
$k\in\mathcal C_j$ and $\alpha\geq0$,
\(
G_{j,i}(k+\alpha d)
=G_{j,i}k+\alpha G_{j,i}d
\leq h_{j,i}.
\)
Thus, the entire unbounded ray
$\{k+\alpha d:\alpha\geq0\}$ would belong to $\mathcal C_j$,
contradicting its compactness.}

We first show \eqref{eq:DR-parent-radial-reach-explicit}. Along the ray $k+\alpha d$, row $i$ requires
\(
G_{j,i}k+\alpha G_{j,i}d\le h_{j,i}.
\)
Since $k\in\mathcal C_j$, the initial slack
$h_{j,i}-G_{j,i}k$ is nonnegative. Rows with $G_{j,i}d\le0$ impose no upper
bound on $\alpha\ge0$, whereas each row with $G_{j,i}d>0$ requires
\(
\alpha\le
\frac{h_{j,i}-G_{j,i}k}{G_{j,i}d}.
\)
Thus, the first boundary crossing gives
\eqref{eq:DR-parent-radial-reach-explicit}.

To prove the second claim, for fixed $(a,r)$, the constraints in
\eqref{eq:DR-inner-anchor-LP} are equivalent to
\(
G_{j,i}(a+e)\le h_{j,i},
\,
\tfracorall\,\|e\|_2\le r,
\)
because
$\max_{\|e\|_2\le r}G_{j,i}e=r\|G_{j,i}\|_2$.
{Therefore, the LP computes the largest Euclidean ball contained in
$\mathcal C_j$.  Since $r_j>0$, its center satisfies
$a_j\in\operatorname{int}\mathcal C_j$.}
\end{proof}

\section{Proof of Proposition~\ref{thm:DR-bounded-complexity-inner-set}}
\label{app:DR-bounded-complexity-inner-set}
\begin{proof}
By {Lemma}~\ref{prop:DR-parent-geometry},
each radial first-hit point
\(
\kappa_{j,r}^{(0)}
=
a_j+\tau_{j,r}u_{j,r}
\)
belongs to $\mathcal C_j$.  Since $\mathcal C_j$ is convex,
$\widehat{\mathcal C}_{j,0}\subseteq\mathcal C_j$.
Since $a_j\in\operatorname{int}\mathcal C_j$, the maximum feasible step
$\tau_{j,r}=\bar\alpha_j(a_j,u_{j,r})$ is strictly positive.  Define
\(
\gamma_{j,r}
:=
({\tau_{j,r}^{-1}})/(
{\sum_{s=1}^{d_K+1}\tau_{j,s}^{-1}}).
\)
Then, $\gamma_{j,r}>0$, $\sum_r\gamma_{j,r}=1$, and, using
$\sum_r u_{j,r}=0$, one has
\(
\sum_{r=1}^{d_K+1}
\gamma_{j,r}\kappa_{j,r}^{(0)}
=
a_j+
\frac{\sum_{r=1}^{d_K+1}u_{j,r}}
{\sum_{s=1}^{d_K+1}\tau_{j,s}^{-1}}
=
a_j.
\)
The regular-simplex directions are affinely independent, and positive
scaling along these directions preserves affine independence of the resulting
points.  Thus, $a_j$ is a strict convex combination of
$d_K+1$ affinely independent vertices, so
$a_j\in\operatorname{int}\widehat{\mathcal C}_{j,0}$.
For any subsequent iteration, \eqref{eq:DR-inner-bundle-directional-margin}
maximizes the represented endpoint on the segment
$k_t+\alpha d$.  Since
$\widehat{\mathcal C}_{j,t}\subseteq\mathcal C_j$,
every represented endpoint is also feasible for the parent region, which gives
\(
\widehat\alpha_{j,t}(d)
\le
\alpha_{j,t}^{\rm ex}(d)
\).
If \eqref{eq:DR-inner-enrichment-trigger} holds, {Lemma}~\ref{prop:DR-parent-geometry}
gives
$\kappa_{j,t}^{\rm new}
=
k_t+\alpha_{j,t}^{\rm ex}(d)d\in\mathcal C_j$.
Hence, by convexity,
$\widehat{\mathcal C}_{j,t}^{+}\subseteq\mathcal C_j$.
The strict enrichment condition implies
$\kappa_{j,t}^{\rm new}\notin\widehat{\mathcal C}_{j,t}$, so
$\widehat{\mathcal C}_{j,t}\subsetneq
\widehat{\mathcal C}_{j,t}^{+}$.
Because $k_t\in\widehat{\mathcal C}_{j,t}$, the entire segment from
$k_t$ to $\kappa_{j,t}^{\rm new}$ lies in
$\widehat{\mathcal C}_{j,t}^{+}$, and therefore
\(
\widehat\alpha_{j,t}^{+}(d)
\ge
\alpha_{j,t}^{\rm ex}(d).
\)
The reverse inequality follows from
$\widehat{\mathcal C}_{j,t}^{+}\subseteq\mathcal C_j$, proving
\eqref{eq:DR-inner-enrichment-properties}.
Finally, the initialization stores exactly $d_K+1$ vertices, and every
successful enrichment adds one new vertex while no stored vertices are
removed.  Therefore,
\(
p_{j,t}
\le
d_K+1+
\bigl[p_{\max}-(d_K+1)\bigr]
=
p_{\max},
\)
so at most $p_{\max}-(d_K+1)$ enrichments can occur.
\end{proof}

\section{Proof of Lemma~\ref{lem:certified-pointwise-cost}}
\label{app:certified-pointwise-cost}
\begin{proof}
Since $u_k=Ky_k$ and \eqref{Sk}, the stage cost is 
\(
y_k^\top Qy_k+u_k^\top Ru_k
=
y_k^\top S_K y_k.
\)
Moreover, using $y_k=x_k+\upsilon_k$ and defining $X_K^{\rm true}(\eta):=\mathbb E[x_kx_k^\top]$, one has 
\(
\mathbb E[y_ky_k^\top]
=
X_K^{\rm true}(\eta)+V
\). Hence,
\(
J(K,\theta(\eta))
=
\operatorname{tr}\!\left(
S_K\,\mathbb E[y_ky_k^\top]
\right)
=
\operatorname{tr}\!\left(S_KX_K^{\rm true}(\eta)\right)
+
\operatorname{tr}\!\left(S_KV\right).
\) To form $X_K^{\rm true}(\eta)$, under the realization
$\theta(\eta)$ in \eqref{thetaN} and $u_k=Ky_k=K(x_k+\upsilon_k)$, the closed-loop dynamics are
\begin{align} \label{cllem1}
x_{k+1}
=
A_K(\eta)x_k+B(\eta)K\upsilon_k+\omega_k.
\end{align}
Thus, the overall disturbance is $B(\eta)K\upsilon_k+\omega_k$, with 
the covariance
\(
\Sigma_K(\eta)
:=
W+B(\eta)KVK^\top B(\eta)^\top.
\)
Then, $X_K^{\rm true}(\eta)=\mathbb E[x_kx_k^\top]$ is the
unique solution of
\begin{align} \label{covcl}
X_K^{\rm true}(\eta)
=
A_K(\eta)X_K^{\rm true}(\eta)A_K(\eta)^\top+\Sigma_K(\eta).
\end{align}

On the other hand, using \eqref{eq:DR-pointwise-P},
\(
S_K
=
P_K(\eta)-A_K(\eta)^\top P_K(\eta)A_K(\eta).
\) Thus, one has
\(
\operatorname{tr}\!\left(S_KX_K^{\rm true}(\eta)\right) \)
\(=
\operatorname{tr}\!\left(
P_K(\eta)\left[
X_K^{\rm true}(\eta)-A_K(\eta)X_K^{\rm true}(\eta)A_K(\eta)^\top
\right]
\right)=
\operatorname{tr}\!\left(P_K(\eta)\Sigma_K(\eta)\right),
\)
where the last equality is obtained from \eqref{covcl}.
Hence,
\(
J(K,\theta(\eta))
=
\operatorname{tr}\!\left(P_K(\eta)\Sigma_K(\eta)\right)
+\operatorname{tr}(S_KV).
\) By \eqref{eq:WV-certificates}, one has
\(
\Omega_K(\eta)-\Sigma_K(\eta)
=
\bar W-W
+
B(\eta)K(\bar V-V)K^\top B(\eta)^\top
\succeq0.
\)
Therefore,
\(
\bar J(K,\theta(\eta))-J(K,\theta(\eta))
=
\operatorname{tr}\!\left(
P_K(\eta)
[\Omega_K(\eta)-\Sigma_K(\eta)]
\right)\quad+
\operatorname{tr}\!\left(S_K(\bar V-V)\right)
\geq0,
\)
as $P_K(\eta)\succeq0$, $S_K\succeq0$, and
$\bar V-V\succeq0$, which proves \eqref{eq:pointwise-cost-bound}.

For the second part, the Part~I \cite{PartI} consistency results give
\(
\mathcal P_T\to\{\eta^\star\},
\,
\bar W_T\to W,
\,
\bar V_T\to V.
\) As the supports contract to the true model, the convergence is uniform over \(\mathcal P_T\), yielding \eqref{eq:pointwise-cost-asymptotic}.
\end{proof} \vspace{-12pt}

\section{Proof of Lemma \ref{lem:DR-dominating-Fisher}}
\label{proof:lem4}

\begin{proof}
Fix $\eta$ and $x_k$, and let $K(\varepsilon)=K_t+\varepsilon D$.
We seek to bound the Fisher information measuring the
sensitivity of $x_{k+1}\mid x_k$ to changes in $\varepsilon$
around zero. Under the Gaussian surrogate noises,
$\upsilon_k\sim\mathcal N(0,\bar V)$, the marginal transition is
$x_{k+1}\mid x_k\sim
\mathcal N(A_{K(\varepsilon)}(\eta)x_k,
\Omega_{K(\varepsilon)}(\eta))$,
which reduces to \eqref{eq:DR-Fisher-transition}
at $\varepsilon=0$.
To obtain a simpler upper bound, we condition additionally
on $\upsilon_k$, giving
$x_{k+1}\mid(x_k,\upsilon_k)\sim
\mathcal N(A_{K(\varepsilon)}(\eta)x_k+
B(\eta)K(\varepsilon)\upsilon_k,\bar W)$.
Unlike the marginal law, this conditional law has fixed
covariance $\bar W$ and mean derivative
$B(\eta)D(x_k+\upsilon_k)$.
Recall that the score is the derivative of the
log-likelihood with respect to $\varepsilon$, and
Fisher information is its expected square.
Thus, the conditional score at $\varepsilon=0$ is
\(
s_{\rm cond}=(x_{k+1}-\mu_0)^\top\bar W^{-1}
B(\eta)D(x_k+\upsilon_k),
\)
where $\mu_0=A_{K_t}(\eta)x_k+B(\eta)K_t\upsilon_k$.
Taking its expected square, using
$\mathbb E[(x_{k+1}-\mu_0)(x_{k+1}-\mu_0)^\top
\mid x_k,\upsilon_k]=\bar W$,
gives the conditional Fisher information.

Now consider the joint observation $(x_{k+1},\upsilon_k)$.
Its density factors as
$p_\varepsilon(x_{k+1},\upsilon_k\mid x_k)
=p_\varepsilon(x_{k+1}\mid x_k,\upsilon_k)p(\upsilon_k)$.
Since $p(\upsilon_k)$ is independent of $\varepsilon$,
the joint score equals $s_{\rm cond}$.
Hence, averaging the conditional Fisher information
over $\upsilon_k$ gives the joint Fisher information,
\(
\mathbb E_{\upsilon_k}[
(x_k+\upsilon_k)^\top D^\top B(\eta)^\top
\bar W^{-1}B(\eta)D(x_k+\upsilon_k)].
\)
The marginal score is the conditional expectation
of the joint score given $(x_{k+1},x_k)$.
By Jensen's inequality, its expected square cannot
exceed the joint Fisher information.
Finally, averaging over the stationary state $x_k$,
independent of $\upsilon_k$, and using
$\mathbb E[x_kx_k^\top]=X_t(\eta)$ and
$\mathbb E[\upsilon_k\upsilon_k^\top]=\bar V$, gives
$\mathbb E[(x_k+\upsilon_k)(x_k+\upsilon_k)^\top]
=X_t(\eta)+\bar V$. Therefore,
\(
\mathfrak I_{K_t}^{\rm tr}(\eta;D)
\le
\tr\!\left(
\bar W^{-1}B(\eta)D(X_t(\eta)+\bar V)
D^\top B(\eta)^\top
\right)=\mathfrak I_{K_t}(\eta;D),
\)
where the last equality uses
$\tr(D^\top MDC)=d^\top(C\otimes M)d$
for symmetric $C,M$ and $d=\operatorname{vec}(D)$.
\end{proof}

\section{Proof of Lemma~\ref{prop:DR-certificate-computable-bounds}}
\label{app:DR-certificate-computable-bounds}
\begin{proof}
Fix $j\in\mathcal J_t$ and $\eta\in\mathcal P_\ell$. Write
$A:=A_{K_t}(\eta)$,
$g(x):=g_{\ell j}(x):=\max_r h_{\ell r}^{(j)\top}x$, and
$\lambda:=\lambda_{\ell j}^{\rm cl}(K_t)$.
By \eqref{eq:DR-closed-loop-GK}, $A=M(\eta)G_{K_t}$,
where $M(\eta)$ lies in the convex hull of the vertex images
$M_{\ell v}$. Since $g$ is convex, the maximum of $g(Ax)$
over $\eta\in\mathcal P_\ell$ and
$x\in\mathcal X_\ell^{(j)}$ is attained at an uncertainty
image and a certificate vertex. Thus, by
\eqref{eq:DR-current-contraction},
$g(Ax)\le\lambda g(x)$, with
$\lambda\le\lambda_\ell^{(j)}$.
Moreover, \eqref{eq:DR-polyhedral-forcing} and
$\|K_t\|_2\le\bar k_2$ give
$\bar\omega_{\ell j}(K_t)\le\bar\omega_{\ell j}$.

Since
$r_{\ell j}\mathbb B_2\subseteq\mathcal X_\ell^{(j)}
\subseteq R_{\ell j}\mathbb B_2$, the gauge satisfies
$\|z\|_2/R_{\ell j}\le g(z)\le\|z\|_2/r_{\ell j}$.
Consequently,
\(
\|A^kx\|_2
\le R_{\ell j}\lambda^kg(x)
\le\frac{R_{\ell j}}{r_{\ell j}}
(\lambda_\ell^{(j)})^k\|x\|_2,
\)
which yields \eqref{eq:DR-certificate-power-bound}.

Next, the stationary covariance equation
\eqref{eq:DR-Fisher-stationary-X} gives, for any unit
vector $u$, 
\(
u^\top X_tu
=\sum_{k\ge0}
\max_{\|w\|_2\le1}
(u^\top A^k\Omega_{K_t}^{1/2}w)^2.
\)
With $y:=\Omega_{K_t}^{1/2}w$, the gauge contraction gives
$|u^\top A^ky|\le R_{\ell j}\lambda^kg(y)$, while
\(
g(y)\le
\max_r
\big(h_{\ell r}^{(j)\top}
\Omega_{K_t}h_{\ell r}^{(j)}\big)^{1/2}.
\)
By \eqref{eq:DR-Omega}, the map
\(
\eta\mapsto h^\top\Omega_{K_t}(\eta)h
=h^\top\bar Wh+
\|\bar V^{1/2}K_t^\top B(\eta)^\top h\|_2^2
\)
is convex, so its maximum over $\mathcal P_\ell$
is attained at a vertex $\eta_{\ell v}$.
Using \eqref{eq:DR-current-contraction} and summing
the geometric series gives the bound on $X_t$ in
\eqref{eq:DR-certificate-PX-bounds}. Its uniform
version follows from $\lambda\le\lambda_\ell^{(j)}$
and $\bar\omega_{\ell j}(K_t)\le\bar\omega_{\ell j}$.
Finally, the value equation \eqref{eq:DR-pointwise-P}
gives, for any unit vector $u$,
\(
u^\top P_{K_t}u
=\sum_{k\ge0}(A^ku)^\top S_{K_t}(A^ku).
\)
Since $z\mapsto z^\top S_{K_t}z$ is convex,
its maximum on $g(z)\mathcal X_\ell^{(j)}$
occurs at a scaled vertex. Hence, by
\eqref{eq:DR-polyhedral-forcing},
$z^\top S_{K_t}z\le g(z)^2\bar s_{\ell j}$.
Using
$g(A^ku)\le(\lambda_\ell^{(j)})^k/r_{\ell j}$
and summing the geometric series gives the stated
bound on $P_{K_t}$ in
\eqref{eq:DR-certificate-PX-bounds}.
\end{proof}

\section{Proof of Lemma \ref{lem:DR-covariance-distortion}}
\label{proof:lema7}
\begin{proof}
Fix $\eta$ and write
$B:=B(\eta)$, $F:=A_K(\eta)$,
$F':=A_{K+D}(\eta)=F+BD$,
$X:=X_K(\eta)$, $X':=X_{K+D}(\eta)$,
$\mathfrak I:=\mathfrak I_K(\eta;D)$,
$\xi:=\xi_K(\eta)$, and $\phi:=\phi_K(\eta;d)$.
Both $F$ and $F'$ are Schur because $K,K+D\in\mathcal S_{\rm cert}$.

Collect the state and measurement noise in
$Z:=\operatorname{blkdiag}(X,\bar V)$ and define
$\breve F:=[\,F\ \ BK\,]$ and $\mathsf E:=[\,I_n\ \ I_n\,]$.
Then, $\mathsf EZ\mathsf E^\top=X+\bar V$ and, by
\eqref{eq:DR-Fisher-stationary-X} and \eqref{eq:DR-Omega},
$X=\breve FZ\breve F^\top+\bar W$. Moreover,
$[\,F'\ \ B(K+D)\,]=\breve F+BD\mathsf E$. Hence, applying the
new-policy covariance operator to $X$ gives
\[
\begin{aligned}
F'XF'^\top+\Omega_{K+D}(\eta)
&=(\breve F+BD\mathsf E)Z(\breve F+BD\mathsf E)^\top+\bar W\\
&=X+\Delta,
\end{aligned}
\]
where
\[
\Delta:=BD\mathsf EZ\breve F^\top
+\breve FZ\mathsf E^\top D^\top B^\top
+BD(X+\bar V)D^\top B^\top.
\]
Subtracting this identity from
$X'=F'X'F'^\top+\Omega_{K+D}(\eta)$ yields
\begin{equation}
X'-X=\sum_{k\ge0}F'^k\Delta F'^{k\top}.
\label{eq:DR-covariance-difference}
\end{equation}

Define
\(
\mathsf U:=\bar W^{-1/2}BD(X+\bar V)^{1/2},
\mathsf V:=\bar W^{-1/2}\breve FZ\mathsf E^\top
(X+\bar V)^{-1/2}.
\)
Then,
\begin{align} \label{Wdelat}
\bar W^{-1/2}\Delta\bar W^{-1/2}
=\mathsf U\mathsf U^\top+
\mathsf U\mathsf V^\top+
\mathsf V\mathsf U^\top
\end{align}
and
\begin{align} \label{U2}
\|\mathsf U\|_F^2
=\operatorname{tr}\!\left(
\bar W^{-1}BD(X+\bar V)D^\top B^\top\right)
=\mathfrak I.
\end{align}
Furthermore,
$Z^{1/2}\mathsf E^\top(\mathsf EZ\mathsf E^\top)^{-1}
\mathsf EZ^{1/2}$ is an orthogonal projection, and therefore
\[
\mathsf V\mathsf V^\top
\preceq\bar W^{-1/2}\breve FZ\breve F^\top\bar W^{-1/2}
=\bar W^{-1/2}(X-\bar W)\bar W^{-1/2}.
\]
Thus, using \eqref{zeta1}, $\|\mathsf V\|_2^2\le\xi-1$. Therefore, using this and \eqref{Wdelat} and \eqref{U2}, one has
\[
\|\bar W^{-1/2}\Delta\bar W^{-1/2}\|_2
\le\mathfrak I+2\sqrt{\mathfrak I(\xi-1)}=\phi.
\]
Equivalently,
$-\phi\bar W\preceq\Delta\preceq\phi\bar W$.
Since $\bar W\preceq\Omega_{K+D}(\eta)$ and
$X'=\sum_{k\ge0}F'^k\Omega_{K+D}(\eta)F'^{k\top}$,
\eqref{eq:DR-covariance-difference} gives
\(
-\phi X'\preceq X'-X\preceq\phi X'.
\)
Because $\phi<1$, rearranging proves
\eqref{eq:DR-raw-covariance-sandwich}.
\end{proof}

\section{Proof of Theorem~\ref{prop:DR-e-model-validity}}
\label{app:e-model-validity}
\begin{proof}
Fix $\eta\in\mathcal P_\ell$ and suppress the arguments $K$ and $\eta$ when
unambiguous. Let $X:=X_K(\eta)$, $\Delta X:=X_{K+D}(\eta)-X$,
$a:=d^\top\mathbf A_K(\eta)d=\operatorname{tr}(D^\top\mathcal R_KDX)$,
$\mathfrak I:=\mathfrak I_K(\eta;D)$, $\psi:=\psi_K(\eta)$, and
$\phi:=\phi_K(\eta;d)$.
The identity of
Lemma~\ref{lem:DR-two-channel-PDL} holds for each $\eta$ before expectation:
$\bar J(K+D)-\bar J(K)=\operatorname{tr}(\Phi_KX_{K+D})+\operatorname{tr}(\Xi_K\bar V)$.
Expanding $\Phi_K$ and $\Xi_K$ and using
$\nabla_K\bar J(K,\theta(\eta))=2(\mathcal E_KX+\mathcal R_KK\bar V)$ gives
$\operatorname{tr}(\Phi_KX)+\operatorname{tr}(\Xi_K\bar V)
=\langle\nabla_K\bar J,D\rangle+\operatorname{tr}(D^\top\mathcal R_KD(X+\bar V))
=\mathfrak q_K(\eta;D)$. Hence,
\(
\bar J(K+D)-\bar J(K)-\mathfrak q_K(\eta;D)=\operatorname{tr}(\Phi_K\Delta X).
\)
By
Lemma~\ref{lem:DR-covariance-distortion},
$-\frac{\phi}{1+\phi}X\preceq\Delta X\preceq\frac{\phi}{1-\phi}X$. Hence
$\Delta X=X^{1/2}GX^{1/2}$ with $\|G\|_2\le c:=\phi/(1-\phi)$.
Now, write
$\Phi_K=D^\top\mathcal R_KD+(D^\top\mathcal E_K+\mathcal E_K^\top D)$. For the
first part, $|\operatorname{tr}(D^\top\mathcal R_KD\Delta X)|
=|\operatorname{tr}(X^{1/2}D^\top\mathcal R_KDX^{1/2}G)|\le c\,a$. For the
cross term,
$|\operatorname{tr}((D^\top\mathcal E_K+\mathcal E_K^\top D)\Delta X)|
=2|\operatorname{tr}(X^{1/2}\mathcal E_K^\top DX^{1/2}G)|
\le2c\,\|X^{1/2}\mathcal E_K^\top DX^{1/2}\|_*$, and
\(
\|X^{1/2}\mathcal E_K^\top DX^{1/2}\|_*
\le\|\mathcal R_K^{-1/2}\mathcal E_KX^{1/2}\|_F\,
\|\mathcal R_K^{1/2}DX^{1/2}\|_F
=\sqrt{\psi\,a},
\)
because $\mathcal E_K^\top\mathcal R_K^{-1}\mathcal E_K=\Psi_K$. Therefore
$|\operatorname{tr}(\Phi_K\Delta X)|\le c\,(a+2\sqrt{a\psi})$.
Since
$\phi\le\bar\phi_\ell$, $c\,a\le\frac{\bar\phi_\ell}{1-\bar\phi_\ell}a$, which
is the first correction term in \eqref{eq:DR-e-modified-curvature}. For the
cross term, $\phi=\sqrt{\mathfrak I}\,(\sqrt{\mathfrak I}+2\sqrt{\xi_K-1})$ and
$\sqrt{\mathfrak I}\le\sqrt{2\Delta_\ell^{\rm u}(K)}$ by
\eqref{eq:DR-full-Fisher-envelope-bound}, so
\(
2c\sqrt{a\psi}\le\frac{\sqrt\psi\,\big(\sqrt{2\Delta_\ell^{\rm u}(K)}
+2\sqrt{\xi_K-1}\big)}{1-\bar\phi_\ell}\;2\sqrt{\mathfrak I\,a}.
\)
The prefactor is at most $\beta_K(\eta)$ in
\eqref{eq:DR-residual-weighted-Fisher-beta}, since the regularizers only enlarge
it, and $2\sqrt{\mathfrak Ia}\le\varsigma\mathfrak I+\varsigma^{-1}a$. Hence,
$2c\sqrt{a\psi}\le\beta_K(\eta)(\varsigma\mathfrak I+\varsigma^{-1}a)$. The
residual weight $\sqrt\psi$ thus arises from the cross term between the Riccati
residual and the covariance error, and the Fisher form arises from the
covariance error itself.
Finally, adding the two bounds and using
$\mathfrak I=d^\top\mathbf F_Kd$, $a=d^\top\mathbf A_Kd$, and
\eqref{eq:DR-e-modified-curvature} gives
$|\operatorname{tr}(\Phi_K\Delta X)|\le\frac12d^\top(\mathbf H_K^{\rm e}(\eta)-\mathbf M_K(\eta))d$,
which is \eqref{eq:DR-Fisher-relative-model-gap}. Averaging over the batch and
using $\mathfrak q_K+\frac12d^\top(\mathbf H^{\rm e}-\mathbf M)d=\mathfrak m^{\rm e}$
gives the one-sided bound for $f_N$, and averaging over $\varrho$ gives it
for $f$.
\end{proof}

\section{Proof of Proposition \ref{prop:DR-certified-upper-model}}
\label{proof:prop3}
\begin{proof}
Fix $d$ with $K_t+\operatorname{mat}(d)\in\mathcal S_{\rm cert}$
and $\Gamma_{K_t}(d)\le1$, and set $D=\operatorname{mat}(d)$.
For each batch realization $\eta_i$, Theorem~\ref{prop:DR-e-model-validity}
and \eqref{eq:DR-Fisher-relative-model-gap} give
\(
\bar J(K_t+D,\theta(\eta_i))
-\bar J(K_t,\theta(\eta_i))
\le
\mathfrak q_{K_t}(\eta_i;D)
+\tfrac12d^\top
(\mathbf H_{K_t}^{\rm e}(\eta_i)
-\mathbf M_{K_t}(\eta_i))d
=
\langle\nabla_K\bar J(K_t,\theta(\eta_i)),D\rangle
+\tfrac12d^\top\mathbf H_{K_t}^{\rm e}(\eta_i)d,
\)
where the equality follows by substituting the definition of
$\mathfrak q_{K_t}$ in Theorem~\ref{prop:DR-e-model-validity}.
Averaging over $i$ and using the definitions of
$f_N$, $\mathbf g_N(K_t)$, and $\mathbf H_{N,K_t}^{\rm e}$ yields
\(
f_N(K_t+D)-f_N(K_t)
\le\mathbf g_N(K_t)^\top d
+\tfrac12d^\top\mathbf H_{N,K_t}^{\rm e}d
=\mathfrak m_{N,K_t}^{\rm e}(d).
\)

Next, by \eqref{eq:DR-Fisher-direction-model} and
\eqref{eq:DR-SAA-e-model},
\(
\mathcal U_t(d)-\mathfrak m_{N,K_t}^{\rm e}(d)
={}(\widetilde{\mathbf g}_t-\mathbf g_N(K_t))^\top d
+\epsilon_t^g\|d\|_2
+\tfrac12d^\top
(\mathbf H_t^{\rm e}-\mathbf H_{N,K_t}^{\rm e})d.
\)
Lemma~\ref{lem:DR-SAA-evaluation-error} and Cauchy--Schwarz imply
that the first two terms lie in
$[0,2\epsilon_t^g\|d\|_2]$.
Moreover, the same lemma and
$\mathbf H_t^{\rm e}
=\widetilde{\mathbf H}_t^{\rm e}+\epsilon_t^HI$
give
\(
0\preceq
\mathbf H_t^{\rm e}-\mathbf H_{N,K_t}^{\rm e}
\preceq2\epsilon_t^HI.
\)
Thus, the last term lies in
$[0,\epsilon_t^H\|d\|_2^2]$, proving
\eqref{eq:DR-computed-e-model-error} and hence the
second inequality in \eqref{eq:DR-SAA-majorization}.
\end{proof}

\section{Proof of Lemma~\ref{lem:DR-two-channel-PDL}}
\label{app:DR-two-channel-PDL}
\begin{proof}
Fix $\eta$ and write
$P:=P_K(\eta)$, $F:=A(\eta)+B(\eta)K$,
$F':=A(\eta)+B(\eta)K'$, $X:=X_K(\eta)$, and
$X':=X_{K'}(\eta)$.    
{From the surrogate-cost definition in
Lemma~\ref{lem:certified-pointwise-cost},}
\[
\bar J(K')-\bar J(K)
=
\tr(S_{K'}X'-S_KX)
+\tr((S_{K'}-S_K)\bar V).
\]
Using the current-policy value and covariance equations
$P=S_K+F^\top PF$ and $X=FXF^\top+\Omega_K$ gives
$\tr(S_KX)=\tr(P\Omega_K)$. Likewise, the new-policy covariance
equation $X'=F'X'F'^\top+\Omega_{K'}$ gives
$\tr(P\Omega_{K'})=\tr((P-F'^\top PF')X')$.
Therefore,
\begin{align*}
& \bar J(K',\theta(\eta))-\bar J(K,\theta(\eta))
=\tr\!\left[
\big(S_{K'}+F'^\top PF'-P\big)X'\right]\\
&\quad+\tr\!\left[P(\Omega_{K'}-\Omega_K)\right]
+\tr\!\left[(S_{K'}-S_K)\bar V\right].
\end{align*}
Expand $F'=F+B\Delta$ and
$S_{K'}-S_K=\Delta^\top RK+K^\top R\Delta+\Delta^\top R\Delta$.
Collecting the terms multiplying $X'$ yields
\(
\Phi_K(\Delta)
=\Delta^\top\mathcal R_K\Delta
+\Delta^\top\mathcal E_K
+\mathcal E_K^\top\Delta.
\)
The remaining terms combine, using
$\Omega_K=\bar W+B K\bar V K^\top B^\top$, into
\(
\Xi_K(\Delta)
=\Delta^\top\mathcal R_K\Delta
+\Delta^\top\mathcal R_KK
+K^\top\mathcal R_K\Delta.
\)
Completing the square with
$D_K^{\rm H}=-\mathcal R_K^{-1}\mathcal E_K$ gives
\(
\Phi_K(\Delta)
=(\Delta-D_K^{\rm H})^\top\mathcal R_K(\Delta-D_K^{\rm H})-\Psi_K,
\)
and completing the second square gives
\(
\Xi_K(\Delta)
=(\Delta+K)^\top\mathcal R_K(\Delta+K)-K^\top\mathcal R_KK.
\)
Substitution proves the realization-wise identity; taking
$\mathbb E_{\eta\sim\varrho}$ gives \eqref{eq:DR-two-channel-PDL}.
\end{proof}

\section{Proof of Lemma~\ref{lem:DR-dispersion-identities}}
\label{app:DR-dispersion-identities}
\begin{proof}
(a) Using $\mathcal R_KD_K^{\rm H}=-\mathcal E_K$ and
$D_K^{{\rm H}\top}\mathcal R_KD_K^{\rm H}=\Psi_K$, the two squared terms in
$\mathfrak h(K;D)$ expand, for each $\eta$, as
\[
\begin{aligned}
&\operatorname{tr}(D^\top\mathcal R_KDX_K)
+2\operatorname{tr}(D^\top\mathcal E_KX_K)
+\operatorname{tr}(\Psi_KX_K),\\
&\operatorname{tr}(D^\top\mathcal R_KD\bar V)
+2\operatorname{tr}(D^\top\mathcal R_KK\bar V)
+\operatorname{tr}(K^\top\mathcal R_KK\bar V).
\end{aligned}
\]
The constant terms sum to the integrand of $\mathfrak n(K)$. The linear terms
equal $\langle2(\mathcal E_KX_K+\mathcal R_KK\bar V),D\rangle$, whose
expectation is $\langle\nabla f(K),D\rangle$ because
$\mathcal E_KX_K+\mathcal R_KK\bar V
=\mathcal R_KK(X_K+\bar V)+B^\top P_KAX_K$. The quadratic terms equal
$\operatorname{tr}(D^\top\mathcal R_KD(X_K+\bar V))$, whose expectation is
$\frac12\langle D,\mathcal M_K[D]\rangle$. Taking expectations gives
$\mathfrak h(K;D)=\mathfrak n(K)+\mathfrak q_K(D)$, which is \eqref{gK}.
(b) By (a), $\mathfrak h(K;\cdot)=\mathfrak n(K)+\mathbf g(K)^\top d
+\frac12d^\top\mathbf M_Kd$. Since $\mathbf M_K\succ0$, its unique minimizer is
$d_K^\star=-\mathbf M_K^{-1}\mathbf g(K)$, with minimum value
$\mathfrak h^\star(K)=\mathfrak n(K)-\delta^\star(K)$. Completing the square
gives \eqref{eq:DR-dispersion-completion}, and rearranging the minimum value
gives \eqref{eq:DR-unregularized-decrement}.
(c) Both squared terms in $\mathfrak h$ are nonnegative, and
$\mathfrak h^\star(K)=\mathfrak h(K;D_K^\star)$. Hence $\mathfrak h^\star(K)=0$
if and only if both vanish at $D_K^\star$ for $\varrho$-almost every $\eta$.
Since $\mathcal R_K\succ0$ and $X_K\succeq\bar W\succ0$, the first vanishes if
and only if $D_K^\star=D_K^{\rm H}(\eta)$, and the second if and only if
$(D_K^\star+K)\bar V=0$.
(d) Since $\mathbf H_K^{\rm e}\succeq\mathbf M_K\succ0$, completing the square
in \eqref{eq:DR-population-e-model} gives \eqref{eq:DR-e-model-completion}.
Moreover, $\mathbf H_K^{\rm e}\succeq\mathbf M_K$ implies
$(\mathbf H_K^{\rm e})^{-1}\preceq\mathbf M_K^{-1}$, so
$0\le\delta_{\rm e}(K)\le\delta^\star(K)$. If
$\mathbf H_K^{\rm e}\preceq\kappa_{\rm e}\mathbf M_K$, then
$(\mathbf H_K^{\rm e})^{-1}\succeq\mathbf M_K^{-1}/\kappa_{\rm e}$, so
$\delta_{\rm e}(K)\ge\delta^\star(K)/\kappa_{\rm e}$.
\end{proof}

\section{Proof of Theorem~\ref{thm:DR-SAA-convergence}}
\label{app:DR-SAA-convergence}
\begin{proof}
The proof has 6 steps: 1) By
Proposition~\ref{prop:DR-certified-descent}(c),
$f_N(K_{t+1})\le f_N(K_t)-\Delta_t^{\rm c}$ with $\Delta_t^{\rm c}\ge0$. Summing
and using $f_N\ge0$ gives $\sum_{t\ge0}\Delta_t^{\rm c}\le f_N(K_0)<\infty$, so
$\Delta_t^{\rm c}\to0$.
2) Let
$\alpha_{\min}:=\min\{1,r_{\rm F}/\bar D,\sigma_0/\bar h\}>0$, and fix $t$ with
$d_t\neq0$. If $\epsilon_t^g\le\sigma_0\|d_t\|_2$, then
Proposition~\ref{prop:DR-certified-descent}(a) gives
$s_t\ge2\sigma_0\|d_t\|_2^2-\epsilon_t^g\|d_t\|_2\ge\sigma_0\|d_t\|_2^2>0$.
Since $h_t=d_t^\top\mathbf H_t^{\rm e}d_t\le\bar h\|d_t\|_2^2$, this yields
$s_t/h_t\ge\sigma_0/\bar h$. By \eqref{eq:DR-Fisher-radius},
$\alpha_t^{\rm F}\ge r_{\rm F}/\|d_t\|_2\ge r_{\rm F}/\bar D$. Hence,
\eqref{eq:DR-certified-step-length} gives $\alpha_t^{\rm c}\ge\alpha_{\min}$,
and Proposition~\ref{prop:DR-certified-descent}(a) gives
$\Delta_t^{\rm c}\ge\frac12\alpha_t^{\rm c}s_t
\ge\frac12\alpha_{\min}\sigma_0\|d_t\|_2^2$. If instead
$\epsilon_t^g>\sigma_0\|d_t\|_2$, then $\|d_t\|_2<\epsilon_t^g/\sigma_0$. In
both cases,
\(
\|d_t\|_2\le\max\Big\{\frac{\epsilon_t^g}{\sigma_0},
\Big(\frac{2\Delta_t^{\rm c}}{\alpha_{\min}\sigma_0}\Big)^{1/2}\Big\},
\)
and the right-hand side tends to zero by Step~1 and $\epsilon_t^g\to0$.
3) For $t\ge T_{\rm fix}$,
$\widehat{\mathcal C}_t=\widehat{\mathcal C}$. Problem
\eqref{eq:DR-Fisher-direction-predictor} is a strongly convex quadratic program
over a polytope, so its optimality condition is
\(
0=\widetilde{\mathbf g}_t+\mathbf H_t^{\rm e}d_t+n_t,\,
n_t\in\mathsf N_{\mathcal S}(k_t+d_t),
\)
with $n_t=0$ when $d_t=d_t^{\rm unc}$ is interior. By
\eqref{eq:DR-SAA-evaluation-error-bounds},
$\|\mathbf g_N(K_t)+n_t\|_2\le\|\mathbf g_N(K_t)-\widetilde{\mathbf g}_t\|_2
+\|\mathbf H_t^{\rm e}d_t\|_2\le\epsilon_t^g+\bar h\|d_t\|_2\to0$. Let
$K_{t_j}\to K_\star$. Then $k_{t_j}+d_{t_j}\to k_\star$ by Step~2, and
$n_{t_j}\to-\mathbf g_N(K_\star)$ by continuity of $\mathbf g_N$. Since
$\widehat{\mathcal C}$ is closed and convex, the graph of
$\mathsf N_{\mathcal S}$ is closed, so
$-\mathbf g_N(K_\star)\in\mathsf N_{\mathcal S}(k_\star)$, which is
\eqref{eq:DR-SAA-accumulation-stationarity}.
4) By Step~2, there is $t_1$
with $\|d_t\|_2\le r_{\rm F}$ for $t\ge t_1$. Then
$\alpha_t^{\rm F}\ge r_{\rm F}/\|d_t\|_2\ge1$ for all $t\ge t_1$ {, so the
Fisher check passes and, \nw{by the update rule in Section~\ref{subsubsec:DR-steps}}, the line search
is not used for $t\ge t_1$} 
5) Let the evaluation be
exact for $t\ge T_{\rm fix}$, so $\widetilde{\mathbf g}_t=\mathbf g_N(K_t)$,
$\mathbf H_t^{\rm e}=\mathbf H_{N,K_t}^{\rm e}$, and $\epsilon_t^g=0$. The
optimality condition in the direction $-d_t$, as in the proof of
Proposition~\ref{prop:DR-certified-descent}(a), gives
$s_t=-\mathbf g_N(K_t)^\top d_t\ge h_t\ge2\sigma_0\|d_t\|_2^2$, so
$s_t/h_t\ge1$ whenever $d_t\neq0$. With Step~4, $\alpha_t^{\rm c}=1$ for
$t\ge t_2:=\max\{T_{\rm fix},t_1\}$ {, and $\alpha_t=\alpha_t^{\rm c}=1$}, and
$K_{t+1}=K_t+\operatorname{mat}(d_t)$. (If $d_t=0$ for some such $t$, then
$K_t$ is stationary and the sequence is constant from then on.)
6) For $t\ge t_2$, write
$\phi:=f_N+\iota_{\mathcal S}$ and note that $k_{t+1}-k_t=d_t$. The iterates
satisfy the following three conditions:
a) (Sufficient decrease:) by
Proposition~\ref{prop:DR-certified-descent}(c) and $s_t\ge h_t$,
$\phi(k_t)-\phi(k_{t+1})\ge\Delta_t^{\rm c}=s_t-\frac12h_t\ge\frac12s_t
\ge\sigma_0\|k_{t+1}-k_t\|_2^2$.
(b) ({Relative error:}) the optimality condition gives
$n_t\in\mathsf N_{\mathcal S}(k_{t+1})$ with
$\mathbf g_N(K_t)+\mathbf H_{N,K_t}^{\rm e}d_t+n_t=0$. Hence
$\mathbf g_N(K_{t+1})+n_t\in\partial\phi(k_{t+1})$ and
$\|\mathbf g_N(K_{t+1})+n_t\|_2
\le\|\mathbf g_N(K_{t+1})-\mathbf g_N(K_t)\|_2+\|\mathbf H_{N,K_t}^{\rm e}d_t\|_2
\le(L_N+\bar h)\|k_{t+1}-k_t\|_2$, where $L_N$ is a Lipschitz constant of
$\mathbf g_N$ on the compact set $\mathcal S$.
(c) ({Continuity:}) $f_N$ is continuous on the closed set $\mathcal S$,
so $\phi(k_{t_j})\to\phi(k_\star)$ along every convergent subsequence.
The iterates are bounded because $\mathcal S$ is compact. Moreover, $\phi$ is a
KL function. For each realization, $\vecop(P_K(\eta_i))
=(I-A_K(\eta_i)^\top\otimes A_K(\eta_i)^\top)^{-1}\vecop(S_K)$, and similarly
for $X_K(\eta_i)$, so their entries are rational functions of $k$ whose
denominators do not vanish on the stabilizing set. Hence, $f_N$ is semialgebraic
on $\mathcal S$. Since $\mathcal S$ is a polytope, $\phi$ is semialgebraic and
therefore satisfies the KL property at every point. By~\cite[Th.~2.9]{Attouch2013Convergence},
the sequence $\{k_t\}$ has finite length and converges to a critical point of
$\phi$, that is, to a point satisfying
\eqref{eq:DR-SAA-accumulation-stationarity}.
\end{proof}

\bibliographystyle{IEEEtran}
\bibliography{refs1}

\begin{IEEEbiography}[{\includegraphics[width=.75in,height=.95in,clip,keepaspectratio]{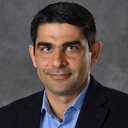}}]%
{\mdseries\normalfont Hamidreza Modares}
received the B.S., M.S., and Ph.D. degrees in electrical engineering from the University of Tehran, Shahrood University of Technology, and the University of Texas at Arlington, respectively. He is an Associate Professor at Michigan State University. His interests include safe learning-based control, multi-agent systems, and robotics.
\end{IEEEbiography}\vspace{-14pt}


\end{document}